\documentclass{article}

\usepackage{microtype}
\usepackage{graphicx}
\usepackage{subcaption}
\usepackage{booktabs} 

\usepackage{hyperref}

\usepackage[preprint]{icml2026}

\usepackage{amsmath}
\usepackage{amssymb}
\usepackage{mathtools}
\usepackage{amsthm}
\usepackage{nicefrac}
\usepackage{makecell}

\usepackage[capitalize,noabbrev]{cleveref}

\theoremstyle{plain}
\newtheorem{theorem}{Theorem}[section]

\newtheorem{lemma}[theorem]{Lemma}
\newtheorem{corollary}[theorem]{Corollary}
\newtheorem{fact}[theorem]{Fact}

\theoremstyle{definition}
\newtheorem{definition}[theorem]{Definition}

\theoremstyle{remark}

\usepackage{cleveref}
\usepackage{thm-restate}

\newcommand{\cA}{\mathcal{A}}
\newcommand{\cB}{\mathcal{B}}

\newcommand{\cM}{\mathcal{M}}

\newcommand{\bbM}{\mathbb{M}}
\newcommand{\bbN}{\mathbb{N}}
\newcommand{\bbR}{\mathbb{R}}

\newcommand{\R}{\mathbb{R}}

\newcommand{\Opt}{\mathsf{OPT}}

\newcommand{\calA}{\mathcal{A}}
\newcommand{\calB}{\mathcal{B}}
\newcommand{\calC}{\mathcal{C}}
\newcommand{\calM}{\mathcal{M}}

\newcommand{\OPT}{\mathsf{OPT}}

\newcommand{\floor}[1]{\lfloor #1 \rfloor}

\DeclareMathOperator{\E}{\mathbb{E}}
\DeclareMathOperator{\cost}{\mathsf{cost}}
\DeclareMathOperator{\dist}{\mathsf{dist}}

\usepackage[textsize=tiny]{todonotes}

\begin{document}

\twocolumn[
  \icmltitle{Parsimonious Learning-Augmented Online Metric Matching}



  \icmlsetsymbol{equal}{*}

  \begin{icmlauthorlist}
    \icmlauthor{Yongho Shin}{equal,uwr}
    \icmlauthor{Phanu Vajanopath}{equal,uwr}
  \end{icmlauthorlist}

  \icmlaffiliation{uwr}{Institute of Computer Science, University of Wrocław, Wrocław, Poland}

  \icmlcorrespondingauthor{Yongho Shin}{yongho@cs.uni.wroc.pl}

  \icmlkeywords{learning-augmented algorithms, online metric matching, parsimonious predictions}

  \vskip 0.3in
]



\printAffiliationsAndNotice{\icmlEqualContribution}

\begin{abstract}
  \emph{Learning-augmented algorithms} have received significant attention in recent years, particularly in the context of online optimization.
Motivated by the high computational cost of generating predictions, a growing line of work studies the tradeoff between performance guarantees and the number of predictions used in learning-augmented algorithms for problems such as caching and metrical task systems.
In this paper, we extend this line of research to \emph{online metric matching} by developing \emph{parsimonious} learning-augmented algorithms and establishing lower bounds on their performance. 
Our approach extends the Follow-the-Prediction framework to the parsimonious setting by filling in a \emph{virtual prediction} in the absence of an actual prediction, using an online metric matching algorithm that maintains good intermediate matchings throughout its execution.
We complement our theoretical results with an empirical evaluation, demonstrating the practical effectiveness of our approach.
\end{abstract}

\section{Introduction} \label{sec:intro}
\emph{Online metric matching} is a fundamental online optimization problem that naturally arises in applications such as ride-hailing and resource allocation \cite{kalyanasundaram1993online,khuller1994line}. 
The problem is defined as follows. 
We are given a metric containing $n$ servers and $n$ requests.
The locations of the servers are known in advance, while the requests arrive
one at a time step in an online fashion. 
Upon the arrival of a request, the algorithm must irrevocably match it to an available server at the moment; once matched, a server cannot be used again, and a request cannot be reassigned to another server at a later time step. 
At termination, the algorithm produces a perfect matching between servers and requests. 
The objective is to minimize the cost of the perfect matching, measured as the sum of distances between matched pairs.

Due to a dedicated line of work~\cite{kalyanasundaram1993online, khuller1994line, meyerson2006randomized, bansal2014randomized}, this problem is fairly well understood.
On the algorithmic side, there exists a deterministic algorithm that outputs a perfect matching whose cost is at most a factor of $2n-1$ larger than that of an optimal perfect matching with hindsight~\cite{kalyanasundaram1993online,khuller1994line}.
Later, \citet{meyerson2006randomized} gives a randomized algorithm returning a solution of expected cost within a factor of $O(\log^3 n)$ to the optimal cost with hindsight;
this factor was further improved to $O(\log^2 n)$ by \citet{bansal2014randomized}.
It is also known that these guarantees are (nearly) best-possible. 
In particular, no deterministic online algorithm can achieve a factor better than $2n-1$~\cite{kalyanasundaram1993online,khuller1994line}, matching the above algorithmic result. For randomized algorithms, an $\Omega(\log n)$ lower bound is known~\cite{meyerson2006randomized}.

A major source of these strong lower bounds is the uncertainty about the future input. 
However, recent advances in AI and machine learning suggest that reasonable predictions about the future input may be available in many practical settings.
Motivated by this observation, \emph{learning-augmented algorithms}, also known as \emph{algorithms with predictions}, have received significant attention in recent years, especially in online optimization since the seminal work of ~\citet{lykouris2021competitive}. These algorithms are provided with predictions as part of the input and are designed to leverage them to solve the problem. 
The performance is typically measured by the following three factors~\cite{kumar2018improving}: \emph{consistency}, which measures performance when predictions are accurate; \emph{robustness}, which guarantees performance regardless of prediction quality; and \emph{smoothness}, which captures how gracefully performance degrades as prediction error increases.

Online metric matching has also been studied in this learning-augmented framework. \citet{antoniadis2023online} introduced the \emph{action prediction} model and presented a deterministic algorithm that uses the prediction provided at every time and outputs a perfect matching with cost at most
$9\cdot\min \{  \cost(\OPT)+2\eta), (2n-1) \cost(\OPT) \}$,\footnote{This expression reflects the correct performance guarantee derivable from the analysis of \citet{antoniadis2023online}, which is inconsistent with the expression reported in the paper \cite{coester2026}.} where $\cost(\OPT)$ denotes the cost of an optimal perfect matching with hindsight and $\eta$ denotes the error of action prediction.

While predictions can substantially improve performance, generating them is often computationally expensive. 
This motivates the study of learning-augmented algorithms that use predictions parsimoniously. 
\citet{im2022parsimonious} initiated this line of work by studying the tradeoff between performance guarantee and a bounded number of predictions in online caching.
Subsequently, \citet{sadek2024algorithms} introduced a more restricted regime in which two predictions must be separated by a minimum time gap, and analyzed the tradeoffs for metrical task systems and caching. 
In this context, \citet{im2022parsimonious} posed a question of investigating a parsimonious prediction model for online metric matching.

\paragraph{Our Contributions}
We answer the question posed by \citet{im2022parsimonious} by presenting parsimonious learning-augmented algorithms for online metric matching under action predictions, together with lower bounds on their performance.

We first present our algorithmic results for general metric spaces. 
We denote by $\cost(\OPT)$ the cost of an optimal perfect matching with hindsight, and by $\eta$ the total error incurred by the actual predictions used by the algorithm.
\begin{theorem} \label{thm:intro:det}
    For any $k \in \bbN$, there exists a deterministic learning-augmented algorithm for online metric matching that outputs a perfect matching with cost at most
    \[
        9 \cdot \min \bigg\{ 
        \begin{array}{l}
            (2k-1)\cost(\OPT)+2k\eta, \\[2pt]
            (2n-1)\cost(\OPT) 
        \end{array}
        \bigg\},
    \]
    where any two predictions used by the algorithm are separated by at least $k$ time steps.
\end{theorem}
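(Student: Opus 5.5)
The plan is to reuse the two-part architecture of \citet{antoniadis2023online}. One part is a Follow-the-Prediction (FtP) algorithm, whose cost is bounded by the optimum plus twice the total distance between the states it follows and the offline-optimal states. The other part is a robust $(2n-1)$-competitive deterministic algorithm (the permutation algorithm of \citet{kalyanasundaram1993online,khuller1994line}). The two are combined by the deterministic combination scheme for metrical task systems, which is where the factor $9$ and the $\min$ come from. In the robust branch nothing changes. So the whole task reduces to designing and analysing an FtP-type algorithm that sees an actual prediction only at times $t_0, t_0+k, t_0+2k,\dots$, and then showing its cost is at most $(2k-1)\cost(\OPT)+2k\eta$.

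\textbf{Virtual predictions.} Recall that an action prediction at time $t$ is a set of $t$ servers: the predicted set of servers used by an offline solution after the first $t$ requests. Inside a block $[t_0, t_0+k)$ I will fill in \emph{virtual predictions} as follows. Start from the actual prediction $P_{t_0}$ and then simulate the permutation algorithm from that state. At each step $t_0+j$, the virtual state $\hat P_{t_0+j}$ is $\hat P_{t_0+j-1}$ plus the one server the permutation rule would add. This rule uses an optimal matching of the requests seen so far, restricted to the servers not yet in $\hat P$. The point of the permutation algorithm is that it keeps a good intermediate matching at every step, not just at the end. FtP then moves its actual matched-server set towards $\hat P_t$ exactly as in \citet{antoniadis2023online}. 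Its cost is bounded by $\cost(\OPT)$ plus twice the sum over $t$ of the distances between $\hat P_t$ and the optimal state $O_t$. Equivalently, I would bound it by the total movement of the virtual/actual predicted states plus the matching costs.

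\textbf{Key lemma and main obstacle.} The core step is a per-block bound. Let $\eta_{t_0}$ be the error of the actual prediction at $t_0$. I want to show that the cost FtP incurs while following the virtual states within the block is at most $(2k-1)$ times the optimal incremental cost of those requests, plus $2k\,\eta_{t_0}$. The intuition is that the state drifts from $P_{t_0}$ for at most $k$ steps, and its error cannot grow by more than it would in a $k$-request instance. To prove this I would adapt the Khuller--Mitchell--Vazirani / Kalyanasundaram--Pruhs analysis of the permutation algorithm. There, the cost of the $j$-th step is charged against the difference of two optimal matchings, via the alternating-path argument. Here I would run the same argument relative to the initial state $P_{t_0}$ instead of the empty set. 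The initial error $\eta_{t_0}$ enters each of the at most $k$ steps at most twice (once through each endpoint of the alternating path), giving the $2k\eta$ term. The telescoping of optimal matching costs across the at most $k$ steps gives the $(2k-1)$ factor.

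I expect the main obstacle to be this lemma, and specifically making the charging consistent across block boundaries. At time $t_0+k$ a fresh prediction arrives, and FtP jumps from $\hat P_{t_0+k-1}$ to $P_{t_0+k}$. The cost of that jump must be absorbed by the new $\eta$ term and the block's own bound. This should follow from the triangle inequality between $\hat P$, $O$ and $P$. Summing the per-block bounds over all blocks gives $(2k-1)\cost(\OPT)+2k\eta$, where $\eta$ sums only the errors of the actual predictions used. Feeding this and the $(2n-1)\cost(\OPT)$ robust bound into the combiner yields the stated $9\min\{\cdot,\cdot\}$ bound. The spacing requirement of at least $k$ steps between predictions holds by construction.
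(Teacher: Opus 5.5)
Your architecture matches the paper's: FtP fed with virtual predictions from the permutation algorithm, run on the residual servers $S\setminus P_{t_0}$ and the block's requests, then combined with the $(2n-1)$-competitive algorithm at factor $9$. There is a gap, though: the key per-block lemma is missing its one non-trivial ingredient. Adapting the Kalyanasundaram--Pruhs/Khuller et al.\ charge gives $d(\widehat{s}_j, r_{t_0+j}) \le \cost(\widehat{\OPT}_j)+\cost(\widehat{\OPT}_{j-1})$. Here $\widehat{\OPT}_j$ is an optimal matching of the block's first $j$ requests $\widehat{R}_j$ into $S\setminus P_{t_0}$. To turn this into a statement about $\OPT$ and $\eta_{t_0}$ you need
\[
\cost(\widehat{\OPT}_j) \le \textstyle\sum_{l=1}^{j} d(o_{t_0+l}, r_{t_0+l}) + \eta_{t_0}.
\]
This does not follow from the triangle inequality. $\OPT$ may serve the block's requests with servers inside $P_{t_0}$, which the auxiliary instance forbids, so $\OPT$'s servers are not an admissible comparison set.

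The paper closes this with an exchange argument. Among all $T\subseteq S$ with $|T|=t_0+j$, $\dist(T,\widehat{R}_j\cup P_{t_0})$ is minimized by some $T\supseteq P_{t_0}$: swap any $u\in P_{t_0}\setminus T$ for its partner in an optimal matching, which cannot increase the cost. Hence $\cost(\widehat{\OPT}_j)\le\dist(O_{t_0+j},\widehat{R}_j\cup P_{t_0})\le\sum_{l\le j}d(o_{t_0+l},r_{t_0+l})+\eta_{t_0}$. Your remark that $\eta_{t_0}$ enters ``through each endpoint of the alternating path'' does not supply this.

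The same inequality is what makes the jump at $t_0+k$ work. The triangle inequality only reduces $\dist(P_{t_0+k},\widehat{P}_{t_0+k-1}\cup\{r_{t_0+k}\})$ to $\eta_{t_0+k}+D+\dist(\widehat{S},\widehat{R})+\eta_{t_0}$, with $D:=\sum_{l=1}^{k}d(o_{t_0+l},r_{t_0+l})$. The term $\dist(\widehat{S},\widehat{R})$ is small only because $\widehat{S}$ is the server set of an optimal auxiliary matching (adherence), combined with the display above. With it, the $k-1$ virtual steps cost at most $(2k-3)(D+\eta_{t_0})$ and each jump at most $2D+2\eta_{t_0}+\eta_{t_0+k}$. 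Summing over blocks gives $(2k-1)\cost(\OPT)+2k\eta$.

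Second, the two FtP bounds you call equivalent are not, and only one yields the stated constants. Take the star instance: the first request is at the center, and each later one is at the leaf of the server the subroutine just added. Here $\eta=0$ and $\cost(\OPT)=1$, yet every virtual state in the first block has $\dist(\widehat{P}_t,O_t)=2$. So $\cost(\OPT)+2\sum_t\dist(\widehat{P}_t,O_t)$ cannot certify anything below $4k-3$.

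You need the movement bound $\sum_t\dist(P_t,P_{t-1}\cup\{r_t\})$. Under it a virtual step costs exactly $d(\widehat{s}_j,r_{t_0+j})$, i.e., the subroutine's own cost on the auxiliary instance. This is also what lets the paper use the permutation algorithm as a black box instead of redoing its alternating-path analysis: strong competitiveness covers the in-block steps and adherence covers the jumps.
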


\begin{theorem} \label{thm:intro:rand}
    For any $k \in \bbN$, there exists a randomized learning-augmented algorithm for online metric matching that outputs a perfect matching with expected cost at most
    \[
        \min \bigg\{ 
        \begin{array}{l}
            O(\log n \log k)\cdot(\cost(\OPT)+\eta), \\[2pt]
            O(\log^2 n)\cdot \cost(\OPT)
        \end{array}
        \bigg\},
    \]
    where any two predictions used are deterministically separated by at least $k$ time steps.
\end{theorem}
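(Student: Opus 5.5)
We follow the template suggested for the deterministic result, Theorem~\ref{thm:intro:det}, but replace its ingredients by randomized ones. The plan is to run a Follow-the-Prediction style algorithm that, at every time step, tracks a target matching and serves requests lazily towards it. Actual predictions are queried only at times $t \equiv 0 \pmod k$. Between two consecutive queries, the missing predictions are filled in by \emph{virtual predictions} generated by an online randomized algorithm that maintains good intermediate matchings. For this we use the HST-embedding-based algorithm of \citet{bansal2014randomized}, or a variant of it. Since the query times are fixed in advance, the $k$-separation holds deterministically, independently of the algorithm's coins. Robustness, i.e.\ the $O(\log^2 n)\cost(\OPT)$ term, is obtained by the usual combining step: simulate both the parsimonious prediction-following algorithm and the robust $O(\log^2 n)$-competitive algorithm, and switch between them with doubling, following \citet{antoniadis2023online}. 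This loses only a constant factor in the minimum.

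For consistency and smoothness, cut the request sequence into blocks of length $k$. At the start of each block the algorithm has the actual predicted matching $\hat M_t$ for the requests revealed so far, together with predicted locations (action prediction), so its error is charged to $\eta$. Inside the block we restart the virtual-prediction algorithm on the subinstance consisting of the at most $k$ new requests and the servers still free in the prediction. Its intermediate matchings then serve as the predictions at each step. The key quantitative claim is as follows: if the subinstance has an offline optimum of cost $C_j$, the algorithm maintains intermediate matchings of expected cost $O(\log k)\cdot C_j$ at every step, and these extend consistently from step to step. The $\log k$ here comes from the number of requests, not the number of servers. The FtP analysis of \citet{antoniadis2023online} charges the algorithm's cost to the cost of consecutive predicted matchings plus their changes. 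Applying it, the cost incurred within the block is at most $O(\log n)\cdot(\text{movement and cost of the virtual predictions})$, where the extra $\log n$ factor comes from a metric embedding used to make the laziness argument work in expectation. Summing over blocks requires $\sum_j C_j \le O(1)\cdot(\cost(\OPT)+\eta)$. This should follow because the optimal matching, restricted to block $j$ and rerouted through the predicted free servers, costs at most $\OPT$'s edges in the block plus the prediction error at the block boundaries. Together these give the $O(\log n\log k)(\cost(\OPT)+\eta)$ bound.

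The main obstacle is the virtual-prediction algorithm. We need a randomized online matching algorithm on $k$ requests whose \emph{every} intermediate matching is $O(\log k)$-competitive against the optimum of the prefix, in expectation, rather than only its final matching. The dependence must be on $k$ and not on the number of free servers, which can be $n$. Plain embedding into an HST gives distortion $\log n$ in the number of points. My first attempt is to embed only the current requests plus their $k$ nearest candidate servers. Equivalently, I would first reduce to an instance on $O(k)$ points by snapping each request to the set of servers that an offline greedy would use, at constant-factor loss, and then apply the randomized greedy on the HST of those $O(k)$ points. This yields the $O(\log k)$ per-prefix guarantee, while the $O(\log n)$ factor is absorbed in the FtP overhead. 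If per-prefix competitiveness fails for the HST algorithm, the fallback is the known property of the greedy-on-HST algorithm that its matching after each step is an (expected) $O(\log k)$-approximation to the min-cost matching of the prefix on the tree metric. That property suffices for the charging argument.
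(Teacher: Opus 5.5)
There is a genuine gap, and it sits at what you call the main obstacle.

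Your primary route to a virtual-prediction subroutine that is $O(\log k)$-competitive at every prefix \emph{in the original metric} does not work as stated. Snapping requests to the servers an offline greedy would use, or embedding only the current requests with their $k$ nearest servers, is not an online procedure. The relevant $O(k)$ points depend on requests of the block that have not yet arrived. The HST algorithm of \citet{bansal2014randomized}, by contrast, needs a single tree fixed before any request is fed to it, and re-embedding as requests arrive voids its guarantees.

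Your bookkeeping of the $\log n$ factor is also wrong. The FtP bound of \Cref{lem:alg:ftp}, $\sum_t d(s_t,r_t) \le \sum_t \dist(P_t, P_{t-1}\cup\{r_t\})$, is a deterministic inequality. It holds in every metric and needs no embedding, so there is no ``$O(\log n)$ FtP overhead''. Had your subroutine existed, \Cref{thm:alg:main} would give $O(\log k)(\cost(\OPT)+\eta)$ with no $\log n$ at all, which the paper explicitly leaves open. As written, two unjustified steps cancel to produce the right final expression.

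The missing idea is that you never need a $k$-only guarantee in the original metric. Your fallback is the right ingredient, but it must be applied globally:
\begin{enumerate}
\item Embed the metric once, before the first request, into a 2-HST $T$ via \Cref{thm:app:hst}. Tree distances dominate $d$, and the expected stretch is $O(\log n)$.
\item Run the entire algorithm in $T$: FtP together with the phase-wise restarts of the HST algorithm on $S\setminus\Phat$. In $T$ the subroutine is $1$-adherent and strongly $O(\log t)$-competitive, with $t$ the number of requests fed so far (\Cref{app:compalgs}).
\item Apply \Cref{thm:alg:main} in $T$. It bounds the tree cost by $O(\log k)(\cost_T(\OPT)+\eta_T)$, with the original $\OPT$ as the fixed reference matching. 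This is legitimate because \Cref{lem:alg:aux,lem:alg:query} only use some fixed perfect matching, not an optimal one.
\item Transfer back. The true cost is at most the tree cost, while $\E[\cost_T(\OPT)] \le O(\log n)\cost(\OPT)$ and $\E[\eta_T] \le O(\log n)\,\eta$. This is the only place $\log n$ enters.
\end{enumerate}

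You should also make the phase-boundary step explicit. At a query round, the jump $\dist(P_{ik},P_{ik-1}\cup\{r_{ik}\})$ involves $\dist(\Shat,\Rhat)$ for the servers the subroutine used in the previous phase. The paper controls this term via adherence; bounding it by the subroutine's own cost also suffices asymptotically.

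Your handling of deterministic query separation and of robustness via the combination algorithm matches the paper.
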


In addition, we obtain improved guarantees for specific metric spaces, including line metrics and hierarchically well-separated trees (see \Cref{sec:alg:apply}).

Our algorithms are built upon a meta-algorithm that can be viewed as a parsimonious variant of the Follow-the-Prediction framework of \citet{antoniadis2023online}. 
A key challenge is that this framework requires a prediction at every time step, whereas in our algorithms, predictions can only be used sparingly with a minimum separation of $k$ time steps between consecutive predictions.

To overcome this challenge, we construct \emph{virtual predictions} to fill in between two consecutive actual predictions through an online metric matching algorithm that satisfies two properties, \emph{adherence} and \emph{strong competitiveness} (see \Cref{def:adhere,def:strongcomp}). Intuitively, these properties quantify how well the algorithm maintains an intermediate matching throughout its execution. 
While such intermediate guarantees are not required in the classical setting---since the total number of requests is known upfront---many existing algorithms already satisfy these properties (see \Cref{tab:alg:apply}). 

We leverage these algorithms $\calA$ within our meta-algorithm to generate virtual predictions as follows.
Denoting by $P$ the previous actual prediction, we treat $P$ as accurate, and we try to select additional servers from those unmatched in $P$ to include in the virtual predictions.
To this end, we define an auxiliary sub-instance in which the servers are those unmatched in $P$ and the requests are those arriving between $P$ and its subsequent actual prediction.
We then run $\calA$ on this sub-instance. 
At each time step, the servers matched by $\calA$, together with $P$, form the virtual prediction for that time step. 
To guarantee a good quality of the constructed virtual predictions, it is desirable for $\calA$ to maintain a high-quality matching throughout its execution, which is measured by adherence and strong competitiveness.

Next, we establish impossibility results by proving lower bounds on the cost incurred by any algorithm that uses predictions parsimoniously, even when the predictions are guaranteed to be accurate. These lower bounds are obtained by generalizing the standard hard instance on a star metric. It is noteworthy that our results show that the guarantee of \Cref{thm:intro:det} is essentially best-possible for deterministic algorithms up to a constant factor when the prediction is perfect, even in the more permissive setting where only the total number of predictions is bounded.

Finally, we empirically evaluate our theoretical results through experiments on instances that are either artificially synthesized or derived from real-world data.
Our experimental results are broadly consistent with our expectations, demonstrating the practical effectiveness of our approach.

Further related work is discussed in \Cref{app:relwork}.

\section{Preliminaries} \label{sec:prelim}
\paragraph{Online Metric Matching}
In the \emph{online metric matching} problem, we are given $n$ servers $S$ and $n$ requests $R$ contained in a metric $(V, d)$, i.e., for any $u, v, w \in V$, $d(u,u)=0$, $d(u,v)=d(v,u)$, and  $d(u, v) \leq d(u, w) + d(w, v)$.
Two distinct servers or requests can be placed at the same location in the metric; one can regard $S$ and $R$ as multisets of $V$.
The servers are known from the beginning whilst the requests are given one at a round in an online manner.
When a request $r \in R$ arrives, we need to assign $r$ to a currently unmatched server $s \in S$; this decision is irrevocable, i.e., $s$ is not allowed to be matched with any of subsequent requests, and $r$ cannot be reassigned to any other server in later rounds.
The objective is to find a perfect matching~$M$ between $S$ and $R$ at minimum total cost denoted by $\cost(M) := \sum_{(s, r) \in M} d(s, r)$.
For an online metric matching algorithm $\cA$, we also write $\cost(\cA)$ to denote the cost of matching output by $\cA$.

\begin{definition}
    For some $\rho : \bbN \to \bbR_+$, we say an algorithm $\cA$ is $\rho$-\emph{competitive} if $\cost(\cA) \leq \rho(n) \cdot \cost (\OPT)$ for any instance, where $\OPT$ denotes a minimum-cost perfect matching between $S$ and $R$ in the metric with hindsight.
\end{definition}

Throughout this paper, we assume $\cost(\OPT) \geq 1$.
This assumption is without loss of generality by scaling the metric with an appropriate factor; if $\cost(\OPT) = 0$, one can easily construct an optimal perfect matching even in the online setting.

\paragraph{Learning Augmentation with Action Predictions}
For the learning-augmented setting of online metric matching, \citet{antoniadis2023online} introduce the \emph{action prediction} defined as a server set $P_t \subseteq S$ of size $|P_t| = t$ for every round $t \in \{1, \ldots, n\}$.

The definition of prediction error requires a notion of \emph{distance} between two ``configurations'' which is formally defined as follows: for any multisets $A$ and $B$ of $V$ with $|A| = |B|$, we denote by $\dist(A, B)$ the minimum cost of a perfect matching between $A$ and $B$ in the metric $(V, d)$.
Here are some useful properties of this distance, whose proof is deferred to \Cref{app:dproofs}.
\begin{fact} \label{fact:prelim}
We have the following:
    \begin{enumerate}
        \item For multisets $A, B, C \subseteq V$ with $|A| = |B|$, $\dist(A, B) = \dist(A \uplus C, B \uplus C)$.
        \item For multisets $A, B, C \subseteq V$ with $|A| = |B| = |C|$, $\dist(A, B) \leq \dist(A, C) + \dist(C, B)$.
    \end{enumerate}
\end{fact}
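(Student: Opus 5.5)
Both items are short exchange arguments on perfect matchings of multisets, so I will prove them directly from the definition of $\dist$ as the minimum cost of a perfect matching. Item 2 is the easier one and I do it first. Fix an optimal perfect matching $\sigma$ between $A$ and $C$ and an optimal perfect matching $\tau$ between $C$ and $B$. Viewing them as bijections of the multiset elements, $\tau\circ\sigma$ is a perfect matching between $A$ and $B$. Each $a \in A$ is matched to $\tau(\sigma(a))$, and the triangle inequality gives $d(a,\tau(\sigma(a))) \le d(a,\sigma(a)) + d(\sigma(a),\tau(\sigma(a)))$. Summing over $a$, and using that $\sigma$ is a bijection onto $C$, gives $\dist(A,B) \le \dist(A,C)+\dist(C,B)$.

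For item 1, the direction $\dist(A\uplus C, B\uplus C) \le \dist(A,B)$ is immediate. I extend an optimal matching between $A$ and $B$ by matching each copy of $c \in C$ on the left to the same copy on the right, which costs $d(c,c)=0$.

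The reverse inequality is the main obstacle, because an optimal matching $M$ between $A\uplus C$ and $B\uplus C$ may pair the copies of $C$ in a tangled way. My plan is to show that $\dist(A\uplus\{c\}, B\uplus\{c\}) \ge \dist(A,B)$ for a single element $c$, and then to induct on $|C|$.

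Let $M$ be optimal, and write $c_L, c_R$ for the two copies of $c$. If $M$ pairs $c_L$ with $c_R$, deleting that pair leaves a matching between $A$ and $B$ of the same cost. Otherwise $M$ contains pairs $(c_L, b)$ and $(a, c_R)$ with $a \in A$ and $b \in B$. I replace these two pairs by $(a,b)$ and delete the copies of $c$. By the triangle inequality, $d(a,b) \le d(a,c)+d(c,b)$, so the result is a perfect matching between $A$ and $B$ of cost at most $\cost(M)$. Hence $\dist(A,B) \le \dist(A\uplus\{c\},B\uplus\{c\})$.

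Applying this single-element step once per element of $C$ gives the general case. The case $|A|=|B|=0$ is trivial, and multiset multiplicities cause no trouble because each copy is handled individually.
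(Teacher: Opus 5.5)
Your proposal is correct and follows the paper's proof. Item 2 is the same composition-of-matchings argument with the triangle inequality. For item 1, the paper simply asserts that an optimal matching between $A \uplus C$ and $B \uplus C$ may without loss of generality match each copy of $C$ to itself; your single-element uncrossing step, using $d(a,b) \le d(a,c) + d(c,b)$, together with induction on $|C|$, is precisely the justification of that assertion.
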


We now define the error of action prediction.
Fix an optimal perfect matching $\OPT$ with hindsight.
For each $t \in \{1, \ldots, n\}$, let $r_t \in R$ denote the request that arrives in round~$t$, $R_t := \{r_1, \ldots, r_t\}$ be the first $t$ requests in their arrival order, and $O_t \subseteq S$ denote the set of servers matched by $R_t$ in $\OPT$.
The prediction error $\eta_t$ in round $t$ is then defined to be $\eta_t := \dist(P_t, O_t)$.

\paragraph{Parsimonious Learning Augmentation}
For the parsimonious learning-augmented setting \cite{im2022parsimonious,sadek2024algorithms}, it is assumed that there exists an oracle providing a prediction to the algorithm in any round on request from the algorithm.
We consider the following two parsimonious regimes studied by \citet{sadek2024algorithms}:
\begin{itemize}
    \item \emph{Well-separated queries to the oracle}: The algorithm receives a prediction once every $k$ time steps, for some \emph{separation parameter} $k$. Without loss of generality, the predictions are available to the algorithm in rounds $ik$ for every $i\in \{1, \ldots, \lfloor \nicefrac{n}{k}\rfloor \}$.
    \item \emph{Bounded number of predictions}: The algorithm is allowed to make a query to the oracle at any time under one hard constraint that the total number of queries must not exceed some budget $B \in \bbN$.
\end{itemize}
Note that the first regime is a special case of the second regime with budget $B = \lceil \nicefrac{n}{k} \rceil -1$.

The error of the oracle is defined to be the sum of errors of queried predictions, i.e., $\eta(Q) := \sum_{t \in Q} \eta_t$, where $Q \subseteq \{1, \ldots, n\}$ denotes the rounds in which the algorithm makes queries to the oracle.

\paragraph{Combination Algorithm}
In what follows, we focus on presenting a learning-augmented algorithm that performs well when the prediction is good, but is not robust when the prediction is bad, because there is a way to make any algorithm robust at a constant factor in cost \cite{fiat1994competitive,antoniadis2023online}.
For the sake of completeness, we provide the formal proof in \Cref{app:combalg}.
\begin{theorem} \label{thm:prelim:combalgrand}
    For two (possibly randomized) algorithms $\cA$ and $\cB$ for online metric matching, there exists a combination algorithm whose output incurs in expectation at most $9 \cdot \min \{ \E[\cost(\cA)], \E[\cost(\cB)] \}$.
\end{theorem}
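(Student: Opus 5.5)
We would prove \Cref{thm:prelim:combalgrand} by the classical doubling (``guess-and-switch'') technique of \citet{fiat1994competitive}, adapted to online metric matching following \citet{antoniadis2023online}. We simulate both $\cA$ and $\cB$ in parallel on the arriving requests, fixing their random bits once at the start. Each simulation maintains its own virtual matching, and we let $A_t, B_t \subseteq S$ denote the sets of servers used by the simulations after round $t$, with $c_\cA(t), c_\cB(t)$ their accumulated costs. The combined algorithm works in phases $i = 0,1,2,\ldots$ with budget $2^i$ (recall $\cost(\OPT)\ge 1$). In phase $i$ it follows one algorithm, alternating between them; it switches to the other and starts phase $i+1$ as soon as the followed algorithm's accumulated simulated cost exceeds $2^i$.

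\textbf{Following a simulation.} ``Following'' algorithm $\cA$ means keeping the real server set $C_t$ close to $A_t$. On arrival of $r_t$, if $\cA$ matches $r_t$ to $s$ and $s$ is free in reality, we match $r_t$ to $s$. Otherwise we match $r_t$ to some real free server that corresponds to $s$ under a fixed bijection between $C_{t-1}\setminus A_{t-1}$ and $A_{t-1}\setminus C_{t-1}$; concretely, the minimum-cost matching realizing $\dist$. The key bookkeeping invariant is
\[
\text{(real cost up to } t) \le (\text{simulated cost of the followed algorithm in the phase}) + \dist(C_t, A_t) + \text{switching cost},
\]
which follows from the triangle inequality and \Cref{fact:prelim}: moving the real configuration along with the simulated one costs at most the simulated move plus the current configuration distance. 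At a switch from $\cA$ to $\cB$ at time $t$, we pay at most $\dist(C_t,B_t) \le \dist(C_t,A_t) + \dist(A_t,B_t)$. Here $\dist(A_t,B_t) \le c_\cA(t)+c_\cB(t)$, since both configurations are matched to the same request set $R_t$, so $\dist(A_t,B_t)\le \cost(\cA|_{R_t})+\cost(\cB|_{R_t})$ by \Cref{fact:prelim}(2).

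\textbf{Charging.} Let phase $m$ be the last phase. Every earlier phase $i$ incurs cost $O(2^i)$ from the followed algorithm plus a switching cost bounded by both simulated costs, which are at most about $2^{i}$ and $2^{i-1}$ at that moment. Summing the geometric series gives total cost at most a constant times $2^m$. Since phase $m-1$ ended, the algorithm followed in phase $m-1$ has final cost greater than $2^{m-1}$. The algorithm followed in phase $m-2$, which is the other one, also had cost greater than $2^{m-2}$. Hence $\min\{\cost(\cA),\cost(\cB)\} > 2^{m-2}$. Tuning the constants (budgets $2^i$, alternation) to yield exactly $9$ is the standard computation from \citet{fiat1994competitive} and \citet{antoniadis2023online}. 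For randomized $\cA,\cB$, this pathwise bound gives $\cost \le 9\min\{\cost(\cA),\cost(\cB)\}$, not $9\min\{\E\cost(\cA),\E\cost(\cB)\}$.

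\textbf{Main obstacle.} I expect the main obstacle to be this randomized step. The fix I would try first is to run the phases against \emph{expected} costs. The combiner can compute $\E[c_\cA(t)]$ if it knows the distributions, which is fine because the theorem only asserts existence. Alternatively, it can use a deterministic fractional/expected-configuration view: replace $\cA$ by its expected cost trajectory and argue via linearity of expectation that the switching-time analysis still holds with expectations. Proving that the switching times can be made independent of the realization, while the invariant holds in expectation, is where care is needed.
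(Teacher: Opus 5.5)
\textbf{The gap is the step you call the main obstacle, and that obstacle is illusory.} Your deterministic construction matches the paper's. You alternate between $\cA$ and $\cB$ with budgets $2^i$, follow the current algorithm through the minimum-cost matching between $C_{t-1}\setminus A_{t-1}$ and $A_{t-1}\setminus C_{t-1}$, and bound a switch via $\dist(A_\tau,B_\tau)\le c_\cA(\tau)+c_\cB(\tau)$ using the common request set. You already fix the random bits of $\cA$ and $\cB$ at the start, and the request sequence is fixed in advance. So for every realization the combiner is a deterministic combination of two deterministic algorithms, and $\cost\le 9\min\{\cost(\cA),\cost(\cB)\}$ holds pointwise. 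Taking expectations, and using $\min\{X,Y\}\le X$ and $\min\{X,Y\}\le Y$ pointwise, gives $\E[\cost]\le 9\,\E[\min\{\cost(\cA),\cost(\cB)\}]\le 9\min\{\E[\cost(\cA)],\E[\cost(\cB)]\}$. This is exactly how the paper finishes (\Cref{fact:prelim:expmin}). Your claim that the pathwise bound yields the former ``not'' the latter is therefore false. The repairs you propose are unnecessary: thresholds on expected costs would force the combiner to compute expectations over the sub-algorithms' randomness, and the fractional view only adds complication.

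\textbf{Two parts of the deterministic charging also need sharpening to reach $9$.} As written, they give a constant larger than $9$ (around $12$ or $13$), not one that can be ``tuned'' down.

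First, your bookkeeping invariant carries $+\dist(C_t,A_t)$; it must be a potential with a minus sign: $d(s_t,r_t)\le d(s^\cA_t,r_t)+\dist(C_{t-1},A_{t-1})-\dist(C_t,A_t)$. This holds because removing the used edge from the min-cost matching lowers $\dist$ by exactly $d(s_t,s^\cA_t)$ (\Cref{lem:app:comb:stepbound}). This decrease is what cancels the $\dist(C_\tau,A_\tau)$ term in your switching bound $\dist(C_\tau,B_\tau)\le\dist(C_\tau,A_\tau)+\dist(A_\tau,B_\tau)$; with a plus sign these terms accumulate.

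Second, do not charge the new algorithm's cost at the switch on top of its budget. At a switch into phase $i$, only the old algorithm's accumulated cost is bounded by its budget, at most $2^{i-1}$. The new algorithm's cost at the switch, plus its simulated cost during phase $i$, equals its cost at the end of phase $i$, which is at most $2^i$. So each non-final phase costs at most $2^i+2^{i-1}$ plus telescoping terms, and the total is at most $c^\star+2^{m+1}$, where $c^\star$ is the final cost of the algorithm followed in the last phase $m$.

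Finally, keep $c^\star$ rather than bounding it by $2^m$ and split into two cases:
\begin{itemize}
\item If that algorithm is the cheaper one, then $c^\star>2^{m-2}$ because phase $m-2$ followed it and ended, so the ratio is at most $1+2^{m+1}/2^{m-2}=9$.
\item Otherwise $c^\star\le 2^m$ while the other algorithm's cost exceeds $2^{m-1}$, so the ratio is at most $6$.
\end{itemize}
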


\paragraph{Adherence and Strong Competitiveness}
Let us further define a couple of algorithmic properties that will be helpful in analysis of our parsimonious learning-augmented algorithm.
To define these properties, we fix an algorithm $\cA$ and an instance $(V, d, S, R)$. For $t \in \{1, \ldots, n\}$, let $\cM_t$ denote the set of all matchings where exactly $R_t$ is matched (i.e., all maximum matchings in the complete bipartite graph induced by $S \cup R_t$).
Moreover, let $\cA_t$ denote the matching maintained by $\cA$ at the end of round $t$, and let $S_t$ be the set of servers matched by $\cA_t$.
\begin{definition}[Adherence] \label{def:adhere}
    For some $\gamma \geq 1$, we say an algorithm $\cA$ is \emph{$\gamma$-adherent} if, for every $t \in \{1, \ldots, n\}$,
    $
        \textstyle \dist(S_t, R_t) \leq \gamma \cdot \min_{M \in \cM_t} \{\cost(M)\}.
    $
\end{definition}
\begin{definition}[Strong competitiveness] \label{def:strongcomp}
    For some $\rho : \bbN \to \bbR$, we say an algorithm $\cA$ is \emph{strongly $\rho$-competitive} if, for every $t \in \{1, \ldots, n\}$,
    $
        \textstyle \E[\cost(\cA_t)] \leq \rho(t) \cdot \min_{M \in \cM_t} \{\cost(M)\}.
    $    
\end{definition}
We remark that $O(1)$-adherence and strong competitiveness appear in many existing prediction-free algorithms.
We provide in \Cref{tab:alg:apply} a few such algorithms that we use to derive our parsimonious learning-augmented algorithms.
\section{Algorithms} \label{sec:alg}
In this section, we present our parsimonious learning-augmented algorithms for online metric matching.
They are built upon a meta-algorithm which is essentially the Follow-the-Prediction (FtP) algorithm of \citet{antoniadis2023online} that queries the prediction to the oracle in every $k$ round for some $k \in \bbN$ while generating \emph{virtual predictions} for the other rounds using an adherent, strongly competitive algorithm.
The main technical theorem about the meta-algorithm is as follows:
\begin{theorem} \label{thm:alg:main}
    Given a $\gamma$-adherent, strongly $\rho$-competitive algorithm and $k \in \bbN$, there is an algorithm that outputs a perfect matching of (expected) cost at most
    \[(1 + \gamma + \rho(k-1)) \, \cost (\OPT) + (2 + \gamma + \rho(k-1)) \, \eta(Q), \]
    where $Q = \{k, 2k, \ldots, \floor{\nicefrac{n}{k}} \cdot k \}$ denotes the rounds in which the algorithm makes queries to the oracle.
\end{theorem}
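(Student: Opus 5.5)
We first fix the meta-algorithm. For each $t$ we build a \emph{virtual prediction} $\hat P_t \subseteq S$ with $|\hat P_t| = t$. If $t \in Q$, set $\hat P_t := P_t$, the queried prediction. Otherwise write $t = ik + j$ with $1 \le j \le k-1$ and let $P_{ik}$ be the last actual prediction, with $P_0 := \emptyset$. We run $\cA$ on the sub-instance whose servers are $S \setminus P_{ik}$ and whose requests are $r_{ik+1}, \dots, r_t$. We then set $\hat P_t := P_{ik} \uplus S^{(i)}_j$, where $S^{(i)}_j$ is the set of servers matched by $\cA$ after $j$ steps of this sub-instance.

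Given the sequence $\hat P_t$, we run Follow-the-Prediction as in \citet{antoniadis2023online}. At round $t$, we match $r_t$ to a currently free server chosen so that the algorithm's matched set $A_t$ stays within minimum distance of $\hat P_t$. The FtP analysis then gives
\[
\cost(\mathrm{ALG}) \le \cost(\OPT) + 2\sum_{t} \dist(\hat P_t, O_t)
\]
or a similar inequality. We will reprove this form carefully, since the stated constants $(1+\dots)\cost(\OPT) + (2+\dots)\eta$ suggest a more refined accounting. Namely, the cost of round $t$ is at most $d(r_t, \cdot)$ plus the movement $\dist(\hat P_{t-1}, \hat P_t)$, and the telescoping must be charged carefully.

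The main bound is a per-block estimate. Fix a block with $P := P_{ik}$, and compare $\hat P_t$ with $O_t = O_{ik} \uplus O'$, where $O'$ are the $\OPT$-partners of $r_{ik+1}, \dots, r_t$. By the triangle inequality and \Cref{fact:prelim}(1),
\[
\dist(\hat P_t, O_t) \le \dist(P, O_{ik}) + \dist(S^{(i)}_j, O').
\]
The first term is $\eta_{ik}$. We split the second as
\[
\dist(S^{(i)}_j, O') \le \dist(S^{(i)}_j, R') + \dist(R', O'),
\]
where $R'$ denotes the requests $r_{ik+1}, \dots, r_t$. Adherence bounds $\dist(S^{(i)}_j, R')$ by $\gamma \cdot \mathrm{opt}'_j$, the optimal partial matching of $R'$ into $S \setminus P$. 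Strong competitiveness bounds the actual cost of $\cA$'s matching by $\rho(j) \le \rho(k-1)$ times $\mathrm{opt}'_j$; we expect this to be needed for the cost FtP pays when following the virtual prediction. Finally, $\mathrm{opt}'_j \le \dist(R', O') + \dist(O_{ik}, P)$. This holds because $O'$ may intersect $P$, and we can reroute those servers through an exchange argument along the min-cost matching between $P$ and $O_{ik}$ to obtain servers outside $P$.

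Summing over rounds is where care is needed. We do not want $\sum_t \dist(\hat P_t, O_t)$, which would multiply $\eta$ by $k$. Instead we bound the FtP cost block by block. Within a block, the algorithm's incremental cost is the cost of moving from $\hat P_{ik}$ to $\hat P_{t}$ plus the final reconciliation. Within a block, $\hat P$ grows by $\cA$'s own matching, so FtP's cost there is at most $\cost(\cA\text{ on block}) \le \rho(k-1)\,\mathrm{opt}'$. At the block boundary, switching from $\hat P_{(i+1)k - 1}$ to $P_{(i+1)k}$ costs at most
\[
\dist(\hat P_{(i+1)k-1}, O_{(i+1)k-1}) + \eta_{(i+1)k} + d(r_{(i+1)k}, \cdot).
\]
Summing over blocks, each $\mathrm{opt}'$ for block $i$ is at most $\OPT$'s cost on the block plus $\eta_{ik}$. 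The block $\OPT$ costs sum to $\cost(\OPT)$, and each $\eta_{ik}$ is charged a constant number of times: once from the FtP switch, once from $\gamma$, and once from $\rho$. This yields $(1+\gamma+\rho(k-1))\cost(\OPT) + (2+\gamma+\rho(k-1))\eta(Q)$.

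The main obstacle is this amortized accounting: showing that the boundary switch cost and the within-block cost telescope, so that each $\eta_{ik}$ contributes only a constant number of times rather than $k$ times, while keeping the exact constants.
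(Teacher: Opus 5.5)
Your route is the paper's. You use the same virtual predictions and the same bound for the non-query rounds of phase $i$: their charges sum to the cost of $\cA$ on the auxiliary instance, which is at most $\rho(k-1)$ times its optimum, which in turn is at most the phase's share of $\cost(\OPT)$ plus $\eta_{ik}$. You also use the same triangle-inequality bound for the query round $ik$, going through $O_{ik}$ and $O_{ik-1}$ and then applying adherence and the auxiliary bound. Your proof of the auxiliary bound reroutes each server of $O'\cap P$ to its partner in a min-cost $P$--$O_{ik}$ matching that fixes common elements. It is valid and slightly more direct than the paper's exchange argument over all $T\subseteq S$.

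The accounting you flag as the main obstacle is closed by stating the FtP guarantee in its per-round form (\Cref{lem:alg:ftp}), $\sum_t d(s_t,r_t)\le\sum_t\dist(P_t,P_{t-1}\cup\{r_t\})$, rather than as $\cost(\OPT)+2\sum_t\dist(\hat P_t,O_t)$.
\begin{itemize}
\item Its proof absorbs the gap between FtP's matched set and the prediction into the potential $\dist(S_t,P_t)$. This potential telescopes over all $n$ rounds and vanishes at $t=0$ and $t=n$, so no per-block ``reconciliation'' is needed.
\item Within a phase, each term equals $d(\hat s_j,r_t)$, where $\hat s_j$ is the server $\cA$ uses. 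At round $ik$ the term is exactly the switch cost you bound.
\item Taken literally, ``FtP's cost within a block is at most $\cA$'s cost'' is false, since after a switch $S_t\neq P_t$. Only the charged terms are bounded this way.
\end{itemize}
Also recount $\eta_{ik}$. It is charged $\rho(k-1)$ times in phase $i$ and once directly in the switch at round $ik$. It is charged a further $1+\gamma$ times in the switch at round $(i+1)k$, through $\dist(P_{ik},O_{ik})$ and through adherence plus the auxiliary bound. Your verbal list omits one of these, although your final constant $2+\gamma+\rho(k-1)$ is right.
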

Note that the meta-algorithm works for the regime of well-separated queries to the oracle with separation parameter $k$, and hence, it also works for the more general regime of bounded number of predictions with budget $B = \floor{\nicefrac{n}{k}}$.

We begin by reviewing the FtP algorithm in \Cref{sec:alg:ftp}, and then describe the construction of (virtual) predictions and prove \Cref{thm:alg:main} in \Cref{sec:alg:query}.
Finally, in \Cref{sec:alg:apply}, we leverage existing algorithms within our meta-algorithm.

\subsection{Follow-the-Prediction Algorithm} \label{sec:alg:ftp}
The FtP algorithm is described in \Cref{alg:ftp}.
Note that this algorithm requires a prediction $P_t$ in every round $t \in \{1, \ldots, n\}$.

\begin{algorithm}
\caption{Follow-the-Prediction} \label{alg:ftp}
\begin{algorithmic}[1]
    \STATE $P_0 \gets \emptyset$, $S_0 \gets \emptyset$
    \FOR{each round $t \in \{1, \ldots, n\}$}
        \STATE $P_t \gets$ the prediction in round $t$
        \STATE $\mu_1 \gets$ a min-cost perfect matching between \\\quad$P_t$ and $P_{t-1} \cup \{r_t\}$ where every server in \\\quad$P_t\cap (P_{t-1}\cup\{r_t\})$ is matched to itself
        \STATE $p_t \gets$ the counterpart of $r_t$ in $\mu_1$
        \STATE $\mu_2 \gets$ a min-cost perfect matching between \\\quad $S \setminus P_{t-1}$ and $S \setminus S_{t-1}$ where every server \\\quad in $S \setminus (P_{t-1} \cup S_{t-1})$ is matched to itself
        \STATE $s_t \gets$ the counterpart of $p_t$ in $\mu_2$
        \STATE Match $r_t$ with $s_t$
        \STATE $S_t \gets S_{t-1} \cup \{s_t\}$
    \ENDFOR
\end{algorithmic}
\end{algorithm}

The FtP algorithm is guaranteed to output a feasible perfect matching between $S$ and $R$ because $p_t \in P_t \setminus P_{t - 1}$ due to $\mu_1$ and $s_t \in S \setminus S_{t-1}$ due to $\mu_2$ in every round $t$.

Following is the key lemma for competitiveness of the FtP algorithm.
We include its proof in \Cref{app:dproofs} for completeness' sake.
\begin{lemma}[\cite{antoniadis2023online}] \label{lem:alg:ftp}
    The total cost incurred by the FtP algorithm is
    $
        \textstyle \sum_{t = 1}^n d(s_t, r_t) \leq \sum_{t = 1}^n \dist (P_t, P_{t-1} \cup \{r_t\}).
    $
\end{lemma}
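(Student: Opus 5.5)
We prove the bound round by round, comparing each round's cost with a potential that tracks the difference between the FtP matching and the prediction. Define the invariant that, at the end of round $t$, the unmatched servers of the algorithm relate to the unmatched servers of the prediction via $\mu_2$. Concretely, we let $\Phi_t := \dist(S\setminus P_t, S\setminus S_t) = \dist(S_t, P_t)$, where the equality follows from Fact~\ref{fact:prelim}(1) by adding and removing the common part $S\setminus(P_t\cup S_t)$. We then aim to show the amortized inequality
\[
d(s_t,r_t) + \Phi_t - \Phi_{t-1} \le \dist(P_t, P_{t-1}\cup\{r_t\}),
\]
and sum it over $t$. Since $\Phi_0=0$ and $\Phi_n = \dist(S,S)=0$, the sum telescopes to the lemma.

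To prove the per-round inequality we bound $d(s_t,r_t)$ and $\Phi_t$ separately.

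\textbf{Bounding $d(s_t,r_t)$.} By the triangle inequality, $d(s_t,r_t)\le d(r_t,p_t)+d(p_t,s_t)$. Here $d(r_t,p_t)$ is one edge of $\mu_1$, and $d(p_t,s_t)$ is one edge of $\mu_2$. We need $p_t\in S\setminus P_{t-1}$ so that it lies on the correct side of $\mu_2$, and this holds because $p_t\in P_t\setminus P_{t-1}$.

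\textbf{Bounding $\Phi_t$.} Let $\mu_2'$ be $\mu_2$ with the edge $(p_t,s_t)$ deleted. It is a perfect matching between $S\setminus(P_{t-1}\cup\{p_t\})$ and $S\setminus S_t$, of cost $\Phi_{t-1}-d(p_t,s_t)$. The sets $P_{t-1}\cup\{p_t\}$ and $P_t$ are related through $\mu_1$ minus the edge $(r_t,p_t)$: the matching $\mu_1$ matches $P_t$ with $P_{t-1}\cup\{r_t\}$, so removing the pair $(p_t,r_t)$ gives a matching between $P_t\setminus\{p_t\}$ and $P_{t-1}$, of cost $\dist(P_t,P_{t-1}\cup\{r_t\})-d(r_t,p_t)$. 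Applying Fact~\ref{fact:prelim}(1) to add back $p_t$ on both sides, and then Fact~\ref{fact:prelim}(2) for complements, we get
\[
\Phi_t=\dist(S\setminus P_t,S\setminus S_t)\le \dist(S\setminus P_t, S\setminus(P_{t-1}\cup\{p_t\})) + \cost(\mu_2').
\]
The first term equals $\dist(P_{t-1}\cup\{p_t\},P_t)$, again by Fact~\ref{fact:prelim}(1) after complementing within $S$. That term is at most $\dist(P_t,P_{t-1}\cup\{r_t\})-d(r_t,p_t)$, because the self-matched servers in $\mu_1$ are exactly common elements. Adding the bound on $d(s_t,r_t)$ to this bound on $\Phi_t$ makes $d(r_t,p_t)$ and $d(p_t,s_t)$ cancel, which yields the amortized inequality.

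\textbf{Main obstacle.} The delicate step is the complement identity $\dist(S\setminus A,S\setminus B)=\dist(A,B)$ for multisets, which we obtain by canceling the common part via Fact~\ref{fact:prelim}(1). Together with it comes the bookkeeping that $\mu_1$ minus one edge is a valid matching between $P_t$ and $P_{t-1}\cup\{p_t\}$ once $r_t$ is replaced by $p_t$. I would verify the latter by viewing $\mu_1$ restricted to $P_t\setminus\{p_t\}$ versus $P_{t-1}$ and then re-adding $p_t$ to both sides at cost $0$.
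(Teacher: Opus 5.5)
Your proof is correct and follows essentially the same route as the paper, which proves the identical amortized claim $d(s_t,r_t) \le \dist(P_t,P_{t-1}\cup\{r_t\}) + \dist(S_{t-1},P_{t-1}) - \dist(S_t,P_t)$ the same way: it splits $d(s_t,r_t)$ through $p_t$, charges the two edges to $\mu_1$ and $\mu_2$ with that edge removed, and passes through the intermediate configuration $P_{t-1}\cup\{p_t\}$ via Fact~\ref{fact:prelim}(2) before telescoping. The differences are cosmetic: the paper writes the edge costs as exact differences of distances, where you need and correctly use only the upper-bound direction, and it applies the triangle inequality to $\dist(S_t,P_t)$ directly rather than to the complements.
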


\newcommand{\ohat}{\widehat{o}}
\newcommand{\rhat}{\widehat{r}}
\newcommand{\shat}{\widehat{s}}
\newcommand{\Ohat}{\widehat{O}}
\newcommand{\Phat}{\widehat{P}}
\newcommand{\Rhat}{\widehat{R}}
\newcommand{\Shat}{\widehat{S}}
\newcommand{\OPThat}{\widehat{\OPT}}

\newcommand{\elli}[1]{\ell^{(#1)}}
\newcommand{\cAi}[1]{\cA^{(#1)}}
\newcommand{\ohati}[1]{\ohat^{(#1)}}
\newcommand{\rhati}[1]{\rhat^{(#1)}}
\newcommand{\shati}[1]{\shat^{(#1)}}
\newcommand{\Ohati}[1]{\Ohat^{(#1)}}
\newcommand{\Phati}[1]{\Phat^{(#1)}}
\newcommand{\Rhati}[1]{\Rhat^{(#1)}}
\newcommand{\Shati}[1]{\Shat^{(#1)}}
\newcommand{\OPThati}[1]{\OPThat^{(#1)}}

\subsection{Construction of Virtual Predictions} \label{sec:alg:query}
In this subsection, we present the constructions of virtual predictions and prove \Cref{thm:alg:main}.
Recall that the FtP algorithm requires a prediction in every round while we make queries to the oracle only for every $k$ rounds.
We therefore need to construct virtual predictions for the remaining rounds.

We achieve this goal by a $\gamma$-adherent, strongly $\rho$-competitive algorithm $\cA$ for some $\gamma \geq 1$ and $\rho : \bbN \to \bbR_+$.
Intuitively speaking, for the rounds after querying $\Phat$ to the oracle until making the next query (or termination), we execute $\cA$ as a subroutine given an auxiliary instance where the servers are $S \setminus \Phat$ and the requests are those residual after querying $\Phat$.
We then serve $\Phat \cup \Shat$ to the FtP algorithm as the prediction in each round, where $\Shat \subseteq S \setminus \Phat$ denotes the servers matched by $\cA$ in the auxiliary instance at the end of each round.
A detailed pseudocode is provided in \Cref{alg:query}.

\begin{algorithm}
\caption{Construction of (virtual) predictions} \label{alg:query}
\begin{algorithmic}[1]
    \STATE $\Phat \gets \emptyset$
    \STATE Initialize a new instance for $\cA$ with servers $S$
    \FOR{each round $t \in \{1, \ldots, n\}$}
        \IF{$t$ is a multiple of $k$}
            \STATE Query $P_t$ to the oracle
            \STATE $\Phat \gets P_t$
            \STATE Initialize a new instance for $\cA$ with servers $S \setminus \Phat$
        \ELSE
            \STATE Feed $r_t \in R$ to $\cA$
            \STATE Let $\Shat \subseteq S \setminus \Phat$ be the servers matched by $\cA$ \\\quad until now
            \STATE $P_t \gets \Phat \cup \Shat$
        \ENDIF
    \ENDFOR
\end{algorithmic}
\end{algorithm}

Our meta-algorithm is then the FtP algorithm (\Cref{alg:ftp}) with this construction of predictions (\Cref{alg:query}).
It is not hard to see that the constructed predictions are feasible, i.e., $P_t \subseteq S$ and $|P_t| = t$ for every $t \in \{1, \ldots, n\}$, implying that our algorithm is well-defined.

It remains to show the competitiveness of our algorithm to prove \Cref{thm:alg:main}.
To this end, let us define more notation.
Recall that we fix an optimal perfect matching $\OPT$ with hindsight.
For every $t \in \{1, \ldots, n\}$, recall also that $r_t \in R$ is the request arriving in round $t$, and let $o_t$ denote the server matched with $r_t$ in $\OPT$.

Note that we construct virtual predictions using $\cA$ given an auxiliary instance between two consecutive rounds making queries to the oracle.
We call the maximal subsequence of rounds for each auxiliary instance a \emph{phase}.
More precisely, for $i \in \{0, \ldots, \floor{\nicefrac{n}{k}}\}$, phase $i$ consists of the rounds $\{ ik+1, \ldots, ik + (k-1) \}$ (or $\{ik + 1, \ldots, n\}$ for the last phase).
Let $\elli{i}$ denote the length of phase $i$, i.e., $\elli{i} = n - ik$ if $i = \floor{\nicefrac{n}{k}}$ and $\elli{i} = k-1$ otherwise.

For $i \in \{0, \ldots, \floor{\nicefrac{n}{k}}\}$, let $\Rhati{i} := \{ r_{ik+1}, \ldots, r_{ik+\elli{i}} \}$ denote the requests fed to $\cA$ during phase $i$, and let $\OPThati{i}$ denote a minimum-cost ``maximum'' matching between $S \setminus P_{ik}$ and $\Rhati{i}$ where $\Rhati{i}$ is all matched.
We first show that the cost of $\OPThati{i}$---an ``optimal'' matching in the auxiliary instance of phase $i$---is bounded.
\begin{lemma} \label{lem:alg:aux}
    For each $i \in \{0, \ldots, \floor{\nicefrac{n}{k}}\}$, 
    \[
        \textstyle \cost (\OPThati{i}) \leq \sum_{j = 1}^{\elli{i}} d(o_{ik+j}, r_{ik+j}) + \eta_{ik}.
    \]
\end{lemma}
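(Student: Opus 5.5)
We exhibit a feasible matching in the auxiliary instance of phase $i$ with the claimed cost; since $\OPThati{i}$ is a minimum-cost matching saturating $\Rhati{i}$, its cost is at most that of the exhibited matching.

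Write $O_{ik}$ for the servers matched by $\OPT$ to $R_{ik}$, and $\Ohati{i} := \{o_{ik+1}, \ldots, o_{ik+\elli{i}}\}$ for the $\OPT$-partners of $\Rhati{i}$. These servers are disjoint from $O_{ik}$ because $\OPT$ is a matching. If $\Ohati{i} \subseteq S \setminus P_{ik}$, matching each $r_{ik+j}$ to $o_{ik+j}$ is feasible and costs $\sum_j d(o_{ik+j}, r_{ik+j})$, so the bound holds. In general, however, some $o_{ik+j}$ may lie in $P_{ik}$, and we must reroute those requests. For phase $0$ we have $P_0 = \emptyset$, so $\eta_0 = 0$ and this case is trivial.

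For the rerouting, let $\mu$ be a min-cost perfect matching between $P_{ik}$ and $O_{ik}$, so $\cost(\mu) = \eta_{ik}$. By the first part of \Cref{fact:prelim}, we may assume $\mu$ fixes every server in $P_{ik} \cap O_{ik}$; concretely, we can uncross any optimal matching without increasing its cost, using the triangle inequality.

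\textbf{Construction.} Take a server $o = o_{ik+j} \in \Ohati{i} \cap P_{ik}$. Since $o \notin O_{ik}$, $\mu$ pairs $o$ with some $o' \in O_{ik} \setminus P_{ik}$. Consider the alternating path $o \to \mu(o) \to \ldots$, where we alternate between $\mu$-edges and the relation that identifies a server with itself. Following $\mu$ from $o$ leads to a server $q_1 \in O_{ik}$. If $q_1 \notin P_{ik}$, then $q_1$ is available, since it lies in $S \setminus P_{ik}$ and is not in $\Ohati{i}$. Otherwise $q_1 \in P_{ik} \cap O_{ik}$, which is fixed by $\mu$, contradicting that $\mu$ is a matching on $q_1$ with $o \neq q_1$. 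Hence $\mu$ maps $P_{ik} \setminus O_{ik}$ bijectively onto $O_{ik} \setminus P_{ik}$.

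We therefore match $r_{ik+j}$ to $\mu(o_{ik+j})$ when $o_{ik+j} \in P_{ik}$, and to $o_{ik+j}$ otherwise. All assigned servers lie in $S \setminus P_{ik}$. They are distinct, because the $\mu$-images are distinct elements of $O_{ik}$, which is disjoint from $\Ohati{i}$.

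\textbf{Cost.} By the triangle inequality,
\[
d(r_{ik+j}, \mu(o_{ik+j})) \le d(r_{ik+j}, o_{ik+j}) + d(o_{ik+j}, \mu(o_{ik+j})).
\]
Summing over $j$, the extra terms are distinct $\mu$-edges, so they total at most $\cost(\mu) = \eta_{ik}$. This gives the claimed bound.

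\textbf{Main obstacle.} The delicate step is justifying that the optimal $\mu$ fixes the common servers $P_{ik} \cap O_{ik}$, including the multiset case where copies coincide. If that step is awkward, I would instead decompose $\mu$ restricted to the symmetric difference via the first part of \Cref{fact:prelim}: $\dist(P_{ik}, O_{ik}) = \dist(P_{ik} \setminus O_{ik}, O_{ik} \setminus P_{ik})$. I would then take $\mu$ directly as an optimal matching between these two difference sets, which yields the bijection without any uncrossing argument.
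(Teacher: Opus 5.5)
Your proof is correct, but it is organized differently from the paper's. The paper never writes down a matching in the auxiliary instance. It first proves the exchange inequality $\dist(\Ohat\cup\Phat,\Rhat\cup\Phat)\le\dist(T,\Rhat\cup\Phat)$ for every $T\subseteq S$ with $|T|=ik+\ell$, by repeatedly replacing a server $v\in T\setminus\Phat$ with its partner $u\in\Phat\setminus T$ in an optimal matching. It then takes $T=O_{ik+\ell}$ and applies the configuration triangle inequality of \Cref{fact:prelim} through the intermediate multiset $O_{ik}\cup\Rhat$, which splits the cost into $\sum_j d(o_{ik+j},r_{ik+j})$ and $\dist(O_{ik},P_{ik})=\eta_{ik}$. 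You instead give an explicit certificate. You normalize an optimal $P_{ik}$--$O_{ik}$ matching into a bijection $P_{ik}\setminus O_{ik}\to O_{ik}\setminus P_{ik}$; your fallback via part~1 of \Cref{fact:prelim}, applied with $C=P_{ik}\cap O_{ik}$ at the level of server identities, is the cleanest way to do this and also settles the co-location concern. You then reroute each request whose $\OPT$-server lies in $P_{ik}$ along a single $\mu$-edge, and charge these distinct $\mu$-edges against $\eta_{ik}$. Your route needs only the pointwise triangle inequality and makes clear where the $\eta_{ik}$ term comes from. The paper's route stays inside the $\dist$-calculus used by the neighbouring lemmas and yields a slightly stronger, reusable fact: the auxiliary optimum is no worse than any server set of the right size. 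The two arguments are closely linked. If you compose the two matchings in the paper's chain and then perform its swaps, together with the uncrossing implicit in part~1 of \Cref{fact:prelim}, you get exactly your rerouted matching.
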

\begin{proof}
    As the phase is clear from context, we omit the phase in the notation and simply write $\Rhat$, $\ell$, and $\OPThat$. 
    Let $\Phat := P_{ik}$ denote the prediction queried the last to the oracle until then.
    For every $j \in \{1, \ldots, \ell\}$, we further write $\rhat_j := r_{ik + j}$ for notational simplicity.
    Let $\ohat_j \in S \setminus \Phat$ denote the server matched with $\rhat_j$ in $\OPThat$, and let $\Ohat \subseteq S \setminus \Phat$ denote the all servers matched with $\Rhat$ in $\OPThat$; we hence have
    \[
        \cost(\OPThat) = \dist(\Ohat, \Rhat) = \dist(\Ohat \cup \Phat, \Rhat \cup \Phat),
    \]
    where the last equality is due to \Cref{fact:prelim}.

    We claim that, for any $T \subseteq S$ of size $|T| = |\Ohat \cup \Phat| = ik+\ell$, we have
    \begin{equation} \label{ineq:alg:aux1}
        \dist(\Ohat \cup \Phat, \Rhat \cup \Phat) \leq \dist(T, \Rhat \cup \Phat).
    \end{equation}
    If $T \supseteq \Phat$, the claim immediately follows from the optimality of $\Ohat$ and the fact that $\dist(T, \Rhat \cup \Phat) = \dist(T \setminus \Phat, \Rhat)$.
    
    We now consider the case where $T \not\supseteq \Phat$.
    In this case, we argue that there always exists $T' \supseteq \Phat$ with $|T'| = |T|$ such that $\dist(T', \Rhat \cup \Phat) \leq \dist(T, \Rhat \cup \Phat)$.
    For any $u \in \Phat \setminus T$, let $v \in T \setminus \Phat$ be the counterpart of $u$ in a minimum-cost perfect matching $\mu$ between $T\setminus \Phat$ and $(\Rhat \cup \Phat)\setminus T$.
    Now consider $T' := (T \setminus \{v\}) \cup \{u\}$.
    Since substituting matching edge $\{u, v\}$ with edge $\{u, u\}$ in $\mu$ yields a feasible perfect matching between $T'$ and $\Rhat \cup \Phat$, we have
    \begin{align*}
        \dist(T', \Rhat \cup \Phat) 
        &
        \leq \dist(T, \Rhat \cup \Phat) - d(u, v) + d(u, u)
        \\&
        \leq \dist(T, \Rhat \cup \Phat)
    \end{align*}
    with $|T' \cap \Phat| > |T \cap \Phat|$.
    Repeating this substitution completes the proof of this case.

    Recall that $O_t \subseteq S$ denotes the servers matched with the first $t$ requests $R_t$ in $\OPT$ for every $t \in \{1, \ldots, n\}$. We can now derive
    \begin{align*}
        &
        \cost(\OPThat)
        \leq \dist(O_{ik + \ell}, \Rhat \cup \Phat)
        \\&
        \qquad \leq \dist(O_{ik + \ell}, O_{ik} \cup \Rhat) + \dist(O_{ik} \cup \Rhat, \Rhat \cup \Phat)
        \\&
        \qquad \textstyle = \sum_{j=1}^{\ell} d(o_{ik + j}, r_{ik+j}) + \eta_{ik},
    \end{align*}
    where the first inequality comes from \Cref{ineq:alg:aux1}, and the second from \Cref{fact:prelim}.
\end{proof}

Notice that, due to \Cref{lem:alg:ftp}, the cost of the matching output by the FtP algorithm is bounded by $\sum_{t = 1}^n \dist(P_t, P_{t-1} \cup \{r_t\})$.
We split this expression into phases and separately bound the contribution of each phase as follows.

\begin{lemma}\label{lem:alg:phase}
    For each $i \in \{0, \ldots, \floor{\nicefrac{n}{k}}\}$,
    \begin{align*}
        & \textstyle \E \Big[ \sum_{j = 1}^{\elli{i}} \dist(P_{ik + j}, P_{ik + j-1} \cup \{r_{ik + j}\}) \Big] \\
        & \qquad \textstyle \leq \rho(\elli{i}) \cdot \Big( \sum_{j = 1}^{\elli{i}} d(o_{ik+j}, r_{ik+j}) + \eta_{ik} \Big).
    \end{align*}
\end{lemma}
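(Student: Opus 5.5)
Within a phase, every term of the sum has a simple closed form in terms of the auxiliary run of $\cA$; after that the bound follows from strong competitiveness and \Cref{lem:alg:aux}. Fix phase $i$, write $\Phat := P_{ik}$ (with $\Phat = \emptyset$ for $i = 0$), and let $\rhat_j := r_{ik+j}$. Let $\Shat_j \subseteq S \setminus \Phat$ be the set of servers matched by the auxiliary run of $\cA$ after it has received $\rhat_1, \ldots, \rhat_j$, and let $\shat_j$ be the server $\cA$ assigns to $\rhat_j$, so that $\Shat_j = \Shat_{j-1} \cup \{\shat_j\}$. By \Cref{alg:query}, for $j \in \{1, \ldots, \elli{i}\}$ we have $P_{ik+j} = \Phat \cup \Shat_j$. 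We also have $P_{ik+j-1} = \Phat \cup \Shat_{j-1}$, including the case $j = 1$: there $P_{ik} = \Phat$ and $\Shat_0 = \emptyset$.

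The key step computes each summand. We have
\[
\dist(P_{ik+j}, P_{ik+j-1} \cup \{\rhat_j\}) = \dist(\Phat \cup \Shat_{j-1} \cup \{\shat_j\}, \Phat \cup \Shat_{j-1} \cup \{\rhat_j\}) = d(\shat_j, \rhat_j).
\]
The equality follows from the first item of \Cref{fact:prelim}, which cancels the common multiset $\Phat \cup \Shat_{j-1}$ and leaves $\dist(\{\shat_j\}, \{\rhat_j\})$. Summing over $j$, the phase contribution equals exactly the cost of the final matching of $\cA$ on the auxiliary instance after $\elli{i}$ requests, namely $\cost(\cA_{\elli{i}})$ in the notation of \Cref{def:strongcomp} applied to the instance with servers $S \setminus \Phat$.

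Next comes strong competitiveness. Since $\cA$ is strongly $\rho$-competitive, taking $t = \elli{i}$ gives $\E[\cost(\cA_{\elli{i}})] \le \rho(\elli{i}) \cdot \min_{M \in \cM_{\elli{i}}} \cost(M)$. Here $\cM_{\elli{i}}$ is the set of matchings in the auxiliary instance that match exactly $\Rhati{i}$ into $S \setminus \Phat$, and its minimum is precisely $\cost(\OPThati{i})$. \Cref{lem:alg:aux} then bounds this by $\sum_j d(o_{ik+j}, r_{ik+j}) + \eta_{ik}$, and combining the three steps yields the claim.

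The step needing the most care is checking that the auxiliary instance is legitimate for applying \Cref{def:strongcomp}. That definition is phrased for an instance with $n$ servers and $n$ requests, whereas here there are $n - ik$ servers but only $\elli{i} \le n - ik$ requests. I would argue that strong competitiveness only concerns prefixes, so the auxiliary instance can be padded with arbitrary further requests (for example, copies placed at unused servers) up to $n - ik$. Since $\cA$ is online, its behaviour on the first $\elli{i}$ requests is unaffected, and the bound at time $t = \elli{i}$ applies. A second point, if $\cA$ is randomized, is that the expectation is over $\cA$'s internal randomness while $\Phat$ is fixed by the oracle; this is harmless since the bound holds for every fixed instance. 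For $i = 0$ we have $\eta_0 = \dist(\emptyset, \emptyset) = 0$, consistent with \Cref{lem:alg:aux}.
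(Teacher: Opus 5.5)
Your proof is correct and matches the paper's argument: you use the cancellation property of $\dist$ to show each summand equals $d(\shat_j, r_{ik+j})$, identify the phase sum as the cost of $\cA$ on the auxiliary instance, and finish with strong competitiveness and \Cref{lem:alg:aux}. Your extra remarks are sound refinements of points the paper leaves implicit: padding the auxiliary instance so that \Cref{def:strongcomp} formally applies, and noting that the expectation is only over $\cA$'s internal randomness.
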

\begin{proof}
    For each $j \in \{1, \ldots, \elli{i}\}$, let $\shat_j \in S \setminus P_{ik}$ denote the server matched with $r_{ik + j}$ by $\cA$ in phase $i$.
    Observe that
    \[
        \dist(P_{ik + j}, P_{ik + j - 1} \cup \{r_{ik + j}\}) = d(\shat_j, r_{ik + j}),
    \]
    implying that the left-hand side of the inequality in the lemma statement is precisely the total cost incurred by $\cA$ given the auxiliary instance of phase $i$ where $\{r_{ik+1}, \ldots, r_{ik+\elli{i}}\}$ are fed as requests.
    Since $\cA$ is strongly $\rho$-competitive, this cost is bounded from above by $\rho(\elli{i}) \cdot \cost (\OPThati{i})$ in expectation.
    The proof of this lemma then follows from \Cref{lem:alg:aux}.
\end{proof}

\begin{lemma} \label{lem:alg:query}
    For each $i \in \{1, \ldots, \floor{\nicefrac{n}{k}}\}$,
    \begin{align*}
        & \dist(P_{ik}, P_{ik-1} \cup \{r_{ik}\}) \\
        & \qquad \textstyle \leq (1 + \gamma) \sum_{j=1}^{k} d(o_{(i-1)k + j}, r_{(i-1)k+j}) \\
        & \qquad \phantom{\leq} + (1 + \gamma) \eta_{(i-1)k} + \eta_{ik}.
    \end{align*}
\end{lemma}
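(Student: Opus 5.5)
We bound the query-round term by chaining through the optimal configuration with the triangle inequality of \Cref{fact:prelim}. In round $ik$ the prediction $P_{ik}$ is the actual oracle prediction. The previous virtual prediction is $P_{ik-1} = P_{(i-1)k} \cup \Shat$, where $\Shat$ is the set of servers matched by $\cA$ at the end of phase $i-1$. Phase $i-1$ has length $k-1$, so $\Shat$ is matched to $\Rhat := \{r_{(i-1)k+1},\dots,r_{ik-1}\}$. Write $t := ik$. The plan is
\begin{align*}
\dist(P_t, P_{t-1}\cup\{r_t\}) &\leq \dist(P_t, O_t) + \dist(O_t, O_{t-k}\cup \Rhat\cup\{r_t\}) \\
&\quad + \dist(O_{t-k}\cup\Rhat\cup\{r_t\}, P_{t-k}\cup\Rhat\cup\{r_t\}) \\
&\quad + \dist(P_{t-k}\cup\Rhat\cup\{r_t\}, P_{t-k}\cup\Shat\cup\{r_t\}).
\end{align*}
The four terms are handled as follows.
\begin{itemize}
\item The first term is $\eta_{ik}$ by definition.
\item For the second term, cancel $O_{t-k}$ using part 1 of \Cref{fact:prelim}. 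This leaves $\dist(\{o_{t-k+1},\dots,o_t\},\{r_{t-k+1},\dots,r_t\}) \leq \sum_{j=1}^{k} d(o_{(i-1)k+j}, r_{(i-1)k+j})$.
\item For the third term, cancel $\Rhat\cup\{r_t\}$. It becomes $\dist(O_{(i-1)k}, P_{(i-1)k}) = \eta_{(i-1)k}$.
\item For the fourth term, cancel $P_{t-k}\cup\{r_t\}$. It becomes $\dist(\Rhat,\Shat)$, which is the quantity controlled by adherence.
\end{itemize}
Here $\Shat \subseteq S\setminus P_{(i-1)k}$, so all multisets in the chain have matching sizes. For $i=1$, $P_0 = O_0 = \emptyset$ and $\eta_0 = 0$, so the same chain works.

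By $\gamma$-adherence of $\cA$, run on the auxiliary instance of phase $i-1$ (servers $S\setminus P_{(i-1)k}$, requests $\Rhat$), we get $\dist(\Shat,\Rhat) \leq \gamma \cdot \cost(\OPThati{i-1})$. \Cref{lem:alg:aux} then gives
\[
\cost(\OPThati{i-1}) \leq \sum_{j=1}^{k-1} d(o_{(i-1)k+j}, r_{(i-1)k+j}) + \eta_{(i-1)k}.
\]
A subtlety is that adherence in \Cref{def:adhere} is stated with the optimum over the auxiliary instance's own maximal matchings. That optimum is exactly $\OPThati{i-1}$, as defined before \Cref{lem:alg:aux}. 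Also, the minimum over matchings in which the current request set is fully matched is taken at time $\ell^{(i-1)} = k-1$, which matches that definition.

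Summing the four terms gives
\[
\eta_{ik} + (1+\gamma)\,\eta_{(i-1)k} + (1+\gamma)\sum_{j=1}^{k} d(o_{(i-1)k+j}, r_{(i-1)k+j}),
\]
after enlarging the $k-1$ sum to $k$ terms. This is the claimed bound, with slack on the $\eta_{(i-1)k}$ and distance terms. The main obstacle is choosing the intermediate configurations so that every cancellation under \Cref{fact:prelim} is legitimate. In particular, adherence has to be applied to $\dist(\Shat,\Rhat)$ rather than to the cost of $\cA$'s matching, which is why adherence, not strong competitiveness, is the right tool here. Everything else is bookkeeping with the triangle inequality.
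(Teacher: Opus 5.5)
Your proof is correct and is essentially the paper's argument. It uses the same four-hop chain of \Cref{fact:prelim} through $O_{ik}$ and $O_{(i-1)k}\cup\Rhat\cup\{r_{ik}\}$, then $\gamma$-adherence on $\dist(\Rhat,\Shat)$ and \Cref{lem:alg:aux}; the only difference is that you replace $O_{(i-1)k}$ by $P_{(i-1)k}$ before replacing $\Rhat$ by $\Shat$, whereas the paper does these two hops in the opposite order. One small inaccuracy: the coefficient of $\eta_{(i-1)k}$ comes out exactly $1+\gamma$, so the only slack is the extra $\gamma\, d(o_{ik}, r_{ik})$ in the distance sum.
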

\begin{proof}
    Let $\Rhat := \Rhati{i-1}$ be the requests in phase $(i-1)$, and let $\Shat \subseteq S \setminus P_{(i-1)k}$ denote the servers matched with $\Rhat$ by $\cA$ in phase $(i-1)$.
    We can deduce the following:
    \begin{align*}
        & 
        \dist(P_{ik}, P_{ik-1} \cup \{r_{ik}\}) 
        \\&
        \quad \leq \dist(P_{ik}, O_{ik})
        + \dist(O_{ik}, O_{(i-1)k} \cup \Rhat \cup \{r_{ik}\})
        \\& \quad \phantom{\leq} 
        + \dist(O_{(i-1)k} \cup \Rhat \cup \{r_{ik}\}, O_{(i-1)k} \cup \Shat \cup \{r_{ik}\})
        \\&
        \quad \phantom{\leq} + \dist(O_{(i-1)k} \cup \Shat \cup \{r_{ik}\}, P_{ik-1} \cup \{r_{ik}\} )
        \\&
        \quad = \eta_{ik}
        + \sum_{j = 1}^{k} d(o_{(i-1)k+j}, r_{(i-1)k+j})
        \\&\quad \phantom{=}
        + \dist(\Rhat, \Shat)
        + \eta_{(i-1)k}
        \\&
        \quad \textstyle \leq \eta_{ik}
        + \sum_{j = 1}^{k} d(o_{(i-1)k+j}, r_{(i-1)k+j})
        \\&\quad  \phantom{=}
        + \gamma \cdot \cost(\OPThati{i-1})
        + \eta_{(i-1)k}
        \\& \quad 
        \leq \textstyle (1 + \gamma) \sum_{j=1}^{k} d(o_{(i-1)k + j}, r_{(i-1)k+j}) 
        \\&\quad 
        \phantom{\leq} + (1 + \gamma) \eta_{(i-1)k} + \eta_{ik},
    \end{align*}
    where the first inequality and the equality come from \Cref{fact:prelim}, the second inequality from $\gamma$-adherence of $\cA$, and the last inequality from \Cref{lem:alg:aux}.
\end{proof}

We are now ready to prove the main theorem.

\begin{proof}[Proof of \Cref{thm:alg:main}]
    Immediate from \Cref{lem:alg:ftp,lem:alg:phase,lem:alg:query}.
\end{proof}
\subsection{Applications} \label{sec:alg:apply}

Our meta-algorithm takes as a subroutine a $\gamma$-adherent, strongly $\rho$-competitive algorithm $\cA$, and many existing algorithms in the literature satisfy these properties.
See \Cref{tab:alg:apply} for a summary on a few algorithms that we use in our learning-augmented algorithm; we prove the adherence and strong competitiveness of these algorithms in \Cref{app:compalgs}.

\begin{table}
    \begin{center}
    \begin{small}
    \begin{sc}
    \caption{$\gamma$-adherence and strong $\rho$-competitiveness of online metric matching algorithms: \textsc{KP93} \cite{kalyanasundaram1993online}, \textsc{KMV94} \cite{khuller1994line}, \textsc{NR17} \cite{nayyar2017input}, \textsc{Rag18} \cite{raghvendra2018optimal}, and \textsc{BBGN14} \cite{bansal2014randomized}.} \label{tab:alg:apply}
    \begin{tabular}{ccccc}
        \toprule
        Algorithm & Metric & Det? & $\gamma$ & $\rho(x)$ \\ \midrule
        KP93, KMV94
        & General & Yes & $1$ & $2x-1$ \\
        NR17
        & General & Yes & $3$ & $O(\mu_\bbM(S) \log^2 x)$ \\
        Rag18
        & Line & Yes & $3$ & $O(\log x)$ \\
        BBGN14
        & 2-HST & No & $1$ & $O(\log x)$ \\ \bottomrule
    \end{tabular}
    \end{sc}
    \end{small}
    \end{center}
\end{table}

Using the classical deterministic algorithm \cite{kalyanasundaram1993online,khuller1994line} for $\cA$ gives the following deterministic algorithm.

\begin{corollary} [cf.~\Cref{thm:intro:det}] \label{thm:alg:detgen}
For a general metric and $k \in \bbN$, there exists a deterministic learning-augmented algorithm that outputs a perfect matching of cost at most $(2k-1) \, \cost (\OPT) + 2k \, \eta (Q)$ with making $\floor{\nicefrac{n}{k}}$ queries well-separated with parameter $k$.
\end{corollary}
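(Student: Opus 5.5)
This follows by plugging the classical deterministic algorithm of \citet{kalyanasundaram1993online,khuller1994line} into \Cref{thm:alg:main} as the subroutine $\cA$. By \Cref{tab:alg:apply} (proved in \Cref{app:compalgs}), this algorithm is $1$-adherent and strongly $(2x-1)$-competitive, so we take $\gamma = 1$ and $\rho(x) = 2x-1$. Since $\cA$ is deterministic, the FtP algorithm with the virtual predictions of \Cref{alg:query} is deterministic too, and the expectation in \Cref{thm:alg:main} is just the actual cost. By construction, \Cref{alg:query} queries only in rounds $Q = \{k, 2k, \ldots, \floor{\nicefrac{n}{k}}\cdot k\}$. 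That gives exactly $\floor{\nicefrac{n}{k}}$ queries, any two separated by $k$ rounds.

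Substituting directly gives $1 + \gamma + \rho(k-1) = 2 + 2(k-1) - 1 = 2k-1$ as the coefficient of $\cost(\OPT)$, and $2 + \gamma + \rho(k-1) = 2k$ as the coefficient of $\eta(Q)$. This is exactly the bound $(2k-1)\cost(\OPT) + 2k\,\eta(Q)$ in the corollary.

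The only point needing care is the edge case $k = 1$, where $\rho(k-1) = \rho(0)$ is evaluated at $0$. In this case every phase is empty except possibly a trailing one of length $n - \floor{n/k}k = 0$. So \Cref{lem:alg:phase} contributes nothing, and only \Cref{lem:alg:query} with $\gamma = 1$ is used. If one prefers, the value $\rho(0) = -1$ can be avoided by going through the proof of \Cref{thm:alg:main} directly. Summing \Cref{lem:alg:query} over $i$ (each $\eta_{ik}$ appears at most with coefficient $(1+\gamma)+1$, and $\eta_0 = 0$) together with \Cref{lem:alg:phase} over phases of length at most $k-1$ (using $\rho(\elli{i}) \le \rho(k-1)$ by monotonicity) recovers the same coefficients. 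The case $k=1$ then gives $\cost(\OPT) + 2\eta(Q)$ from the query terms alone, which matches the formula.

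I therefore expect no real obstacle. The only checks are that $\rho$ is monotone, so that a last phase of length $n - \floor{n/k}k \le k-1$ is covered by $\rho(k-1)$, and the degenerate $k=1$ bookkeeping above.
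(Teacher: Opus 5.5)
Your main step is the paper's proof: plug the algorithm of \citet{kalyanasundaram1993online,khuller1994line}, with $\gamma = 1$ and $\rho(x) = 2x-1$, into \Cref{thm:alg:main}. For $k \ge 2$ your arithmetic and your remarks on determinism, on $Q$, and on monotonicity of $\rho$ are fine. You also implicitly use $\rho(k-1) \ge 0$ to bound the phase terms by $\rho(k-1)\cost(\OPT)$, which holds for $k \ge 2$.

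The step that fails is your handling of $k = 1$, which you rightly flag but then resolve incorrectly. Summing \Cref{lem:alg:query} as stated with $k = \gamma = 1$ gives
\[
\textstyle \sum_{i=1}^{n} \bigl( 2\, d(o_i, r_i) + 2\eta_{i-1} + \eta_i \bigr) = 2\cost(\OPT) + 3\eta(Q)
\]
(using $\eta_0 = \eta_n = 0$), not $\cost(\OPT) + 2\eta(Q)$. So the query terms alone do not deliver the bound you claim. Formally keeping $\rho(0) = -1$ does not rescue it either. For an empty phase, \Cref{lem:alg:phase} would then assert $0 \le -\eta_{ik}$. The $-\cost(\OPT)$ needed to cancel the surplus would have to come from bounding the empty phase sum $\rho(0)\cdot 0$ by $\rho(0)\cost(\OPT) < 0$, which is false. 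The paper's one-line derivation has the same problem at $k=1$.

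The missing observation is that when phase $i-1$ is empty, $\widehat{R} = \widehat{S} = \emptyset$. Hence the term $\dist(\widehat{R}, \widehat{S})$ in the chain of inequalities proving \Cref{lem:alg:query} is exactly $0$. Routing it through adherence and \Cref{lem:alg:aux} is what raises the coefficients of $d(o_{ik}, r_{ik})$ and of $\eta_{(i-1)k}$ from $1$ to $1+\gamma$. Keeping that term at $0$ gives, for every $i$,
\[
\dist(P_i, P_{i-1} \cup \{r_i\}) \le \eta_i + d(o_i, r_i) + \eta_{i-1},
\]
which is the original Follow-the-Prediction analysis of \citet{antoniadis2023online}. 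Summing over $i$ with \Cref{lem:alg:ftp} and $\eta_0 = 0$ yields $\cost(\OPT) + 2\eta(Q)$, the corollary's bound at $k = 1$.
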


The algorithm presented in \citet{nayyar2017input} (also in \citet{raghvendra2018optimal}) gives the following corollaries:

\begin{corollary} \label{thm:alg:detgen2}
For a general metric and $k \in \bbN$, there exists a deterministic learning-augmented algorithm that outputs a perfect matching of cost at most $O(\mu_\bbM(S) \log^2 k) \cdot (\cost (\OPT) + \eta (Q))$ with making $\floor{\nicefrac{n}{k}}$ queries well-separated with parameter $k$, where $\mu_\bbM(S)$ denotes the maximum ratio of the traveling salesman tour
and the diameter of any subset of S.
\end{corollary}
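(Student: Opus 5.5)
This corollary follows by instantiating \Cref{thm:alg:main} with the deterministic algorithm of \citet{nayyar2017input}, listed as \textsc{NR17} in \Cref{tab:alg:apply}, as the subroutine $\cA$. According to the table (proved in \Cref{app:compalgs}), this algorithm is deterministic, $3$-adherent, and strongly $\rho$-competitive with $\rho(x) = O(\mu_\bbM(S)\log^2 x)$. Substituting $\gamma = 3$ and $\rho(k-1)$ into \Cref{thm:alg:main}, the meta-algorithm outputs a perfect matching of cost at most
\[
(4 + \rho(k-1))\,\cost(\OPT) + (5 + \rho(k-1))\,\eta(Q).
\]
Both coefficients are at most $5 + \rho(k-1)$, so the cost is at most $(5+\rho(k-1))(\cost(\OPT)+\eta(Q))$. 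Since $\cA$ is deterministic, the meta-algorithm is deterministic as well, and the expectation in \Cref{thm:alg:main} is simply the cost. The queries occur exactly at rounds $k, 2k, \ldots, \floor{\nicefrac{n}{k}} k$, which gives $\floor{\nicefrac{n}{k}}$ queries separated by $k$ rounds.

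The remaining step is absorbing the additive constant $5$ into the $O(\cdot)$. We have $\mu_\bbM(S) \geq 1$, since a tour through two points at maximum distance already has length twice the diameter. So the only concern is that $\log^2(k-1)$ may be $0$ or undefined for $k \in \{1,2\}$.

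\begin{itemize}
\item If $k = 1$, every round is a query round and every phase is empty, so the $\rho$ term never arises.
\item For small $k$ generally, read the bound in the table as $O(\mu_\bbM(S)(1+\log^2 x))$, or equivalently use $\rho(k-1) \le \rho(k)$ together with the convention that $\log^2 k$ means $\max\{1, \log^2 k\}$ inside the $O$-notation.
\end{itemize}

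In every case, $5 + \rho(k-1) = O(\mu_\bbM(S)\log^2 k)$, which completes the proof. This constant-absorption step is the only real obstacle, and it amounts to fixing the convention for the asymptotic notation. Everything else is a direct substitution into \Cref{thm:alg:main}.
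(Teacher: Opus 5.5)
Your proposal is correct and follows the paper's own route: the corollary comes from plugging the \textsc{NR17} row of Table~\ref{tab:alg:apply} ($\gamma = 3$, $\rho(x) = O(\mu_\bbM(S)\log^2 x)$) into Theorem~\ref{thm:alg:main} and absorbing the additive constants. The paper leaves the small-$k$ case implicit, and your reading of $\log^2 k$ as $\max\{1,\log^2 k\}$ is the right way to make the $O(\cdot)$ bound literally true.
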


\begin{corollary} \label{thm:alg:detline}
For a line metric and $k \in \bbN$, there exists a deterministic learning-augmented algorithm that outputs a perfect matching of cost at most $O(\log k) \cdot (\cost (\OPT) + \eta (Q))$ with making $\floor{\nicefrac{n}{k}}$ queries well-separated with parameter $k$.
\end{corollary}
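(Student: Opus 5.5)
This corollary follows by plugging the line-metric algorithm of \citet{raghvendra2018optimal} (\textsc{Rag18}) into the meta-algorithm of \Cref{thm:alg:main}. By \Cref{tab:alg:apply}, proved in \Cref{app:compalgs}, \textsc{Rag18} is deterministic, $3$-adherent, and strongly $\rho$-competitive with $\rho(x) = O(\log x)$ on line metrics.

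The plan has four steps.

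\textbf{Step 1: the sub-instances stay on the line.} Every auxiliary instance built by \Cref{alg:query} has server set $S \setminus \Phat$ and its requests are a subsequence of $R$. All of these points lie on the same line, so each sub-instance is again a line instance. The table's guarantees for \textsc{Rag18} therefore apply in every phase.

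\textbf{Step 2: substitute into \Cref{thm:alg:main}.} With $\gamma = 3$ and $\rho(k-1) = O(\log k)$, the output cost is at most
\[
(4 + O(\log k))\,\cost(\OPT) + (5 + O(\log k))\,\eta(Q).
\]
Since \textsc{Rag18} is deterministic, the whole algorithm is deterministic and no expectation is needed.

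\textbf{Step 3: absorb the constants.} For $k \ge 2$, $\log k$ is at least a positive constant, so the additive constants $4$ and $5$ are absorbed into $O(\log k)$. This gives the bound $O(\log k)\cdot(\cost(\OPT)+\eta(Q))$.

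\textbf{Step 4: the degenerate case $k = 1$.} Here $\rho(0)$ is meaningless, but every phase has length $k-1 = 0$. The meta-algorithm then reduces to plain FtP, which queries in every round, and its cost is $O(\cost(\OPT)+\eta(Q))$ directly from \Cref{lem:alg:query} (equivalently, from \citet{antoniadis2023online}). Reading $O(\log k)$ as $O(\max\{1,\log k\})$ covers this case.

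Finally, the queries are made at rounds $Q = \{k, 2k, \ldots\}$. That is exactly $\floor{\nicefrac{n}{k}}$ queries, separated by $k$ rounds.

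The only nontrivial ingredient is the $3$-adherence and strong $O(\log x)$-competitiveness of \textsc{Rag18} at every intermediate round $t$, as opposed to only at termination. This is deferred to \Cref{app:compalgs} and taken as given here. If it had to be checked, I would rely on the fact that \textsc{Rag18} maintains an intermediate matching that is approximately optimal under its relaxed dual-feasibility invariants. Its $O(\log t)$ analysis is carried out on the requests seen so far, so it does not depend on $n$.
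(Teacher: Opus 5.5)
Your proposal is correct and is essentially the paper's argument: instantiate the meta-algorithm of \Cref{thm:alg:main} with \textsc{Rag18}, which is deterministic, $3$-adherent and strongly $O(\log x)$-competitive per \Cref{tab:alg:apply}, and absorb the additive constants into $O(\log k)$. Your extra remarks are handled correctly and are details the paper leaves implicit: the sub-instances remain line instances, and for $k=1$ all phases are empty and the bound should be read as $O(\max\{1,\log k\})$.
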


Due to \citet{bansal2014randomized}, we have the following algorithm for 2-HSTs (see \Cref{def:app:hst}).

\begin{corollary} \label{thm:alg:randhst}
For a metric induced by a 2-HST and $k \in \bbN$, there exists a randomized learning-augmented algorithm that outputs a perfect matching of expected cost at most $O(\log k) \cdot (\cost (\OPT) + \eta (Q))$ with deterministically making $\floor{\nicefrac{n}{k}}$ queries well-separated with parameter $k$.
\end{corollary}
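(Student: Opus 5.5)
This is a direct instantiation of \Cref{thm:alg:main}. We take $\cA$ to be the algorithm of \citet{bansal2014randomized}, which by \Cref{tab:alg:apply} is $1$-adherent and strongly $\rho$-competitive on 2-HSTs with $\rho(x) = O(\log x)$; the proof of these properties is deferred to \Cref{app:compalgs} and assumed here. Plugging $\gamma = 1$ and $\rho(k-1) = O(\log k)$ into \Cref{thm:alg:main} gives expected cost at most
\[
(2 + O(\log k))\,\cost(\OPT) + (3 + O(\log k))\,\eta(Q),
\]
which is $O(\log k)\cdot(\cost(\OPT) + \eta(Q))$ for $k \ge 2$. The degenerate cases are handled separately. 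For $k = 1$, every round is a query round, so the meta-algorithm is exactly FtP and the phase terms vanish. There, \Cref{lem:alg:ftp} together with \Cref{lem:alg:query} (whose inner sum now has a single term) yields $O(1)\cdot(\cost(\OPT) + \eta(Q))$. We read $O(\log k)$ as $O(1 + \log k)$.

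Next, I would verify that applying \Cref{thm:alg:main} is legitimate in the randomized setting. The auxiliary instances fed to $\cA$ have server set $S \setminus P_{ik}$, which is a subset of the leaves of the same 2-HST. The induced metric is therefore still a 2-HST metric, possibly after pruning empty subtrees, which preserves the separation property. Hence the table guarantees apply within every phase.

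Adherence must hold deterministically, since \Cref{lem:alg:query} uses it pointwise. Strong competitiveness is used only in expectation, in \Cref{lem:alg:phase}. Linearity of expectation over phases then yields the expected bound. Each $\OPThati{i}$ depends only on the oracle's predictions and the request sequence, not on $\cA$'s coins, assuming an oblivious adversary and oracle.

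Finally, the query times $Q = \{k, 2k, \ldots\}$ are fixed in advance by \Cref{alg:query}. The queries are therefore deterministically well-separated with parameter $k$, and there are exactly $\floor{\nicefrac{n}{k}}$ of them.

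The main obstacle is justifying the entries in the table: that the \citet{bansal2014randomized} algorithm is strongly $O(\log x)$-competitive with respect to the number of requests seen so far $x$ rather than $n$, and that its intermediate server set is $1$-adherent. The $1$-adherence should follow from the algorithm's tree structure, since it always keeps an optimal intermediate server configuration, but I take both facts from \Cref{app:compalgs} as given.
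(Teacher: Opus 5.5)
Your proposal is correct and follows the paper's own route. Like the paper, you plug the algorithm of \citet{bansal2014randomized}, which is $1$-adherent and strongly $O(\log x)$-competitive on 2-HSTs (\Cref{tab:alg:apply}, \Cref{app:compalgs}), into \Cref{thm:alg:main} with $\gamma = 1$ and $\rho(k-1) = O(\log k)$. Your extra checks (the $k=1$ case via \Cref{lem:alg:ftp} and \Cref{lem:alg:query}, the auxiliary instances staying inside the same 2-HST, and randomness entering only through \Cref{lem:alg:phase}) are sound refinements the paper leaves implicit, though adherence holding only in expectation would also suffice by linearity.
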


As any metric can be embedded into a 2-HST with $O(\log n)$ expected distortion (\cite{fakcharoenphol2004tight}, see \Cref{thm:app:hst}), we further derive the following result.

\begin{corollary} [cf.~\Cref{thm:intro:rand}] \label{thm:alg:randgen}
For a general metric and $k \in \bbN$, there exists a randomized learning-augmented algorithm that outputs a perfect matching of expected cost at most $O(\log n \log k) \cdot (\cost (\OPT) + \eta (Q))$ with deterministically making $\floor{\nicefrac{n}{k}}$ queries well-separated with parameter $k$.
\end{corollary}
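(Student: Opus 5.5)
The plan is to embed the general metric into a 2-HST, run the HST algorithm of \Cref{thm:alg:randhst} on the embedded instance, and pay an extra $O(\log n)$ factor from the distortion.

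\textbf{Step 1 (embedding).} By \Cref{thm:app:hst}, sample a 2-HST $T$ whose leaves contain the points of $S \cup R$, with $d(u,v) \le d_T(u,v)$ for all $u,v$ and $\E[d_T(u,v)] \le O(\log n)\, d(u,v)$. Only the $2n$ points of $S \cup R$ need to be embedded, so the distortion is $O(\log n)$. If the requests are not known in advance, use an embedding of the whole underlying metric, or the standard online-embedding version; I will assume the statement of \Cref{thm:app:hst} supplies such an embedding. The embedding is sampled independently of the requests and predictions, and the query schedule $Q$ is fixed in advance, so the queries stay deterministically $k$-separated.

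\textbf{Step 2 (run the HST algorithm).} Apply the meta-algorithm with the $1$-adherent, strongly $O(\log x)$-competitive algorithm of \citet{bansal2014randomized} in the metric $d_T$, using the same oracle predictions $P_t$. By \Cref{thm:alg:main}, conditioned on $T$,
\[
\E[\cost_T(\text{ALG}) \mid T] \le O(\log k)\bigl(\cost_T(\OPT_T) + \eta_T(Q)\bigr).
\]
Here $\OPT_T$ is optimal for $d_T$, and $\eta_T$ is the prediction error measured in $d_T$ with respect to $\OPT_T$.

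\textbf{Step 3 (compare the two metrics).} This is the main obstacle: the error $\eta_T$ is defined relative to $\OPT_T$, not the fixed $\OPT$. I would look inside the proofs of \Cref{lem:alg:aux,lem:alg:phase,lem:alg:query} and observe that they hold for \emph{any} fixed perfect matching in place of $\OPT$, with $O_t$ and $\eta_t$ defined relative to it; optimality was never used. So in the $T$-analysis I use the original $\OPT$, fixed in $d$, and get
\[
\E[\cost_T(\text{ALG}) \mid T] \le O(\log k)\Bigl(\textstyle\sum_t d_T(o_t,r_t) + \sum_{t\in Q}\dist_T(P_t,O_t)\Bigr).
\]
Next, domination gives $\cost_d(\text{ALG}) \le \cost_T(\text{ALG})$. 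Taking expectations over $T$ gives $\E[d_T(o_t,r_t)] \le O(\log n)\, d(o_t,r_t)$. For the error terms, fix a $d$-optimal matching between $P_t$ and $O_t$; then $\E[\dist_T(P_t,O_t)]$ is at most its expected $T$-cost, which is at most $O(\log n)\,\dist(P_t,O_t) = O(\log n)\,\eta_t$. Combining these yields the $O(\log n \log k)(\cost(\OPT)+\eta(Q))$ bound. If needed, robustness follows by \Cref{thm:prelim:combalgrand}.
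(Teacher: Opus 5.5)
Your overall route is the paper's: embed into a 2-HST via \Cref{thm:app:hst}, invoke \Cref{thm:alg:randhst}, and pay the distortion. Your Step 3 is correct and more careful than the paper's one-line argument. The proofs of \Cref{lem:alg:aux,lem:alg:phase,lem:alg:query} do go through for any fixed perfect matching in place of $\OPT$. The one equality $\dist(O_{ik+\ell}, O_{ik}\cup \widehat{R})=\sum_j d(o_{ik+j},r_{ik+j})$ becomes ``$\le$'', which suffices, so $\eta(Q)$ can stay relative to the original $\OPT$.

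The gap is in Step 1. The algorithm runs in $d_T$ from round $1$, so $T$ must be sampled before any request is seen. \Cref{thm:app:hst}, however, is an offline statement about a given finite point set. None of your three options gives the bound you need:
\begin{itemize}
\item Embedding $S\cup R$ is not available online, since $R$ is unknown when $T$ is sampled.
\item Embedding all of $V$ gives distortion $O(\log|V|)$. This is not $O(\log n)$, because the host metric is arbitrary and possibly infinite.
\item Online probabilistic tree embeddings are known to need distortion that grows with the aspect ratio.
\end{itemize}
So you never obtain $\E[d_T(o_t,r_t)]\le O(\log n)\,d(o_t,r_t)$.

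The missing idea is the standard reduction to $V=S$ that the paper invokes in \Cref{app:compalgs}. Since $S$ is known upfront, sample $T$ on the $n$ servers only. When $r_t$ arrives, feed the HST algorithm the proxy request $\sigma(r_t)$, a server nearest to $r_t$ in $d$, which is computable online. Keep the same oracle predictions, and match $r_t$ to the server the algorithm assigns to $\sigma(r_t)$.

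Now apply your Step 3 with the fixed matching $\{(o_t,\sigma(r_t))\}$ in $d_T$:
\begin{itemize}
\item All terms involve only servers, so the $O(\log n)$ distortion applies.
\item $d(o_t,\sigma(r_t)) \le d(o_t,r_t)+d(r_t,\sigma(r_t)) \le 2\,d(o_t,r_t)$, because $d(r_t,\sigma(r_t))\le d(r_t,o_t)$.
\item The sets $O_t$ are unchanged, hence so is every $\eta_t$.
\item The true cost obeys $d(s_t,r_t)\le d_T(s_t,\sigma(r_t))+d(\sigma(r_t),r_t)$, which adds at most $\cost(\OPT)$ in total.
\end{itemize}
This yields $O(\log n\log k)(\cost(\OPT)+\eta(Q))$ with only a constant-factor loss. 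It also puts every request on a server leaf, which is the setting in which the HST algorithm is stated. Your ``any fixed matching'' observation is exactly what makes this step work, since the optimum of the proxy instance differs from $\OPT$.
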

\section{Lower Bounds} \label{sec:hard}
We establish lower bounds on competitiveness of parsimonious learning-augmented algorithm for online bipartite matching.
Due to space constraints, we defer the whole analysis to \Cref{app:hard}.
We call an oracle that always gives an accurate prediction an \emph{omnipotent oracle}.

For deterministic algorithms, our lower bounds complement that the algorithm guaranteed in \Cref{thm:alg:detgen} is best-possible (up to a small constant factor) when the prediction is accurate in the regimes of well-separated queries to the oracle (and bounded number of predictions). 
\begin{theorem} \label{thm:hard:det}
    No deterministic algorithms making at most $B$ queries to the omnipotent oracle that always gives an accurate prediction can achieve a competitive ratio strictly better than $\frac{2n}{B+1} - 1$ for any metric space.
\end{theorem}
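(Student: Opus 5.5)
We generalize the classical star-metric lower bound of \citet{kalyanasundaram1993online,khuller1994line}. The idea is to split the request sequence into at most $B+1$ stretches that no query touches, and charge $2m-1$ per stretch. Work in a star metric: a center $c$ and $n$ leaves at unit distance from $c$ (so leaves are pairwise at distance $2$), with one server on each leaf. Fix a deterministic algorithm and an omnipotent oracle. Since the algorithm is deterministic, the adversary can simulate it and knows in advance the (at most $B$) rounds $q_1<\dots<q_{B'}$, $B'\le B$, in which it will query. These rounds cut $\{1,\dots,n\}$ into at most $B+1$ maximal blocks of consecutive rounds. To be safe about what the accurate prediction reveals, we treat a query in round $q$ as revealing the entire relevant future, i.e., as if the algorithm learned $\OPT$ restricted to the remaining requests; this only strengthens the algorithm. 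A block is then a stretch of rounds during which the algorithm holds no new information.

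The adversary builds the instance adaptively, block by block. Consider a block of length $m$, beginning right after a query (or at the start). The first request of the block is placed at the center $c$, and the algorithm matches it to some server at cost $1$. Each subsequent request of the block is placed on the leaf of the server the algorithm used for the previous request, and the algorithm again pays at least $2$ for it. So the algorithm pays at least $1+2(m-1)=2m-1$ in the block. The adversary's matching pays only $1$ in the block: it shifts each request onto the server the algorithm just vacated, and assigns the center request to a leaf server that is never otherwise requested. We must check consistency with the oracle: a prediction given at a query round is the $O_t$ of the fixed hindsight $\OPT$, and that is information about the past and present only. The adversary keeps its chosen hindsight matching consistent with the queried rounds simply by answering each query with the prefix of the matching it commits to. That matching is built block by block, and past blocks are fixed once completed.

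Summing over blocks gives $\cost(\mathrm{ALG})\ge\sum_b(2m_b-1)=2n-(\#\text{blocks})$, while $\cost(\OPT)\le\#\text{blocks}$. Taking the number of blocks equal to $B+1$ (the adversary's worst case, and the algorithm cannot profit from using fewer queries since that only lengthens blocks), the ratio is at least $\frac{2n-(B+1)}{B+1}=\frac{2n}{B+1}-1$. A subtlety is the uneven block lengths. With lengths $m_b$ summing to $n$, the bound $\sum(2m_b-1)/\#\text{blocks}$ depends only on the total, so lengths do not matter. The query-round requests themselves also need care: we simply let the query round be the first round of the next block, or the last of the current one.

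The main obstacle is making the $\OPT$ accounting rigorous. We need $\OPT\le 1$ per block globally: the leftover leaf servers chosen for the center requests must be distinct across blocks and never requested at leaves, and there are enough servers since $n\ge$ number of blocks. We also need the omnipotent predictions $O_{q}$ to be consistent with a single global optimal matching that remains optimal after later blocks are added. I would handle this by fixing in advance a reserve set of $B+1$ leaves for the center requests. The adversary never places leaf requests there: if the algorithm uses a reserve server, the next request goes to that reserve leaf, but then $\OPT$ can still cost $1$ by a cyclic shift. I would verify optimality via a lower bound of $1$ per center request.
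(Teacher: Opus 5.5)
\textbf{There is a genuine gap.} It sits exactly at the step you flag as the main obstacle, and your proposed fix does not work.

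In your construction, $|O_q|=q$ while the algorithm has matched only $q-1$ servers, and every leaf request sits at an already-matched server. So every accurate prediction reveals a currently unmatched server $p$ that the fixed $\OPT$ uses for a center request. On the star, with one request per leaf, every optimal matching serves a leaf request at $p$'s location by $p$ itself (a simple exchange argument). Hence, once $O_q\ni p$ has been announced, no later request may ever be placed at $p$'s leaf; otherwise no optimal matching agrees with $O_q$.

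A deterministic algorithm can exploit this by matching its next request to $p$. Your chain rule would then request at $p$'s leaf. The ``cyclic shift'' keeps $\cost(\OPT)$ small only by moving the partner of the earlier center request off $p$, which contradicts the prediction already given. Your text also contradicts itself: the adversary ``never places leaf requests there'', yet ``the next request goes to that reserve leaf''. A reserve set fixed in advance cannot be protected either, since the algorithm, not the adversary, decides which servers get used.

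Without a legal continuation, the adversary must spend extra center requests that your accounting $\cost(\OPT)\le\#\text{blocks}$ does not cover. This really breaks the scheme. If the query round is the first round of the next block, the prediction already reveals a free partner for that block's own center request. The algorithm serves the center request there, and no admissible leaf remains. With $B=1$ this forces three center requests, so your adversary certifies only $\frac{2n}{3}-1$. The two query-round conventions are therefore not interchangeable.

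Also drop the remark that a query may be treated as revealing the whole future. An algorithm that knows $\OPT$ on the remaining requests would simply follow it, and your adaptive later blocks would be impossible.

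The missing idea, which the paper's proof uses, is to tie center requests to the algorithm \emph{consuming predicted servers}, not to block boundaries:
\begin{enumerate}
\item The adversary keeps one chain running across queries. It requests at a server that is matched, unpredicted and unrequested, and requests at the center only when no such server exists.
\item At a query it answers $P_t=S_{t-1}\cup\{p\}$. It reuses the current predicted-but-unmatched $p$ if there is one, and otherwise picks $p$ lazily among the unmatched servers.
\item Then at most one predicted unmatched server exists at any time. A new center request is issued only right after the algorithm matches that server, and each query creates at most one such server, so there are $c\le B+1$ center requests.
\item It follows that $\cost(\OPT)\le c$, while the algorithm pays $2$ on each of the other $n-c$ requests. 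This gives a ratio of at least $\frac{2n-c}{c}\ge\frac{2n}{B+1}-1$.
\end{enumerate}
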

\begin{theorem} \label{thm:hard:detwsq}
    No deterministic algorithm making well-separated queries to the omnipotent oracle with separation parameter $k$ can achieve a competitive ratio strictly better than $2k-1$ for any metric space.
\end{theorem}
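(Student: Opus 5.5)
The plan is to reuse the classical star-metric hard instance for deterministic online metric matching. We arrange it so that each oracle query arrives exactly when it carries no information the algorithm does not already have. The key observation is that the oracle can only be consulted in rounds $ik$. If we build the instance from blocks of exactly $k$ requests, then at round $ik$ block $i$ is complete. Its last request has exactly one legal server left in its own block, and the accurate prediction $P_{ik}=O_{ik}$ equals the set of all servers of the first $i$ blocks. That set is determined in advance and useless to the algorithm.

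\textbf{Single block ($n=k$).} Take a star with center $c$ and $k$ leaves $x_1,\dots,x_k$ at distance $1$ from $c$, with one server at each leaf. The first request arrives at $c$, and the algorithm matches it to some leaf server. In each subsequent round $t\le k$, the adversary places the request at the leaf whose server the algorithm matched in round $t-1$. Since the algorithm is deterministic, the adversary can simulate it and knows that leaf. That leaf's server is already used, so the algorithm pays at least $2$ in each such round, for a total of at least $1+2(k-1)=2k-1$. Let $x$ be the leaf whose server remains unused after round $k-1$. In hindsight, $\OPT$ matches the center request to $x$ at cost $1$ and every other request to the server at its own leaf at cost $0$, so $\cost(\OPT)=1$. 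The only query happens at round $k=n$. At that point exactly one server is free and the prediction is $P_k=S$, which is trivially accurate and gives no help. Hence the ratio is at least $2k-1$.

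\textbf{Arbitrarily large $n$.} To get the bound for arbitrarily large $n$ (and not just $n=k$), take $m$ copies of this star and place them pairwise at distance $D\gg k$, so that $n=mk$. Present the copies one after another, each with the adaptive $k$-request sequence above. Because the copies are far apart, any $\OPT$ matches requests within their own copy (otherwise its cost exceeds $D$), so $\cost(\OPT)=m$. The accurate prediction at round $ik$ is $O_{ik}=$ all servers of copies $1,\dots,i$, which is fixed before the instance starts. The algorithm can at best use this to avoid cross-copy matches, and cross-copy matches only cost it more. Within each copy it therefore pays at least $2k-1$, for a total of at least $m(2k-1)$ and a ratio of at least $2k-1$.

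\textbf{Main obstacle.} The step needing the most care is showing that the prediction is genuinely uninformative. First, $O_{ik}$ does not depend on which $\OPT$ is fixed: every optimal matching keeps copies separate, so $O_{ik}$ is always the server set of the first $i$ copies. Second, an algorithm that matches some request across copies only increases its own cost by roughly $D$, so it cannot beat $2k-1$ per copy. I would formalize this by charging each request the distance it travels and noting that any cross-copy edge costs at least $D\ge 2k$. The phrase ``for any metric space'' is best read as meaning that the metric must contain such a (scaled) star configuration. If literal arbitrary metrics are required, I would replace the star by any $k+1$ points whose distance structure lets the adversary ``chase'' the used server. The chasing argument only uses the triangle inequality, since $d(r,s)\ge d(x_j,c)$-type bounds still hold up to the same constants.
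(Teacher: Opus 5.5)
Your proof is correct, and its core is the paper's argument. On a star with servers at the leaves, you request the center and then chase the server just used. The first accurate prediction arrives in round $k$, together with a request at an already-used leaf, so by the end of that round the algorithm has paid $1+2(k-1)=2k-1$ against $\cost(\OPT)=1$. Your case $n=k$ is exactly the paper's proof with $n=k$.

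The only divergence is how you reach larger $n$. The paper stays on one star $\mathcal{T}_n$. After round $k$ it places the remaining $n-k$ requests on fresh leaves, leaving unrequested only the leaf $x$ that $P_k=\{s_1,\dots,s_{k-1},x\}$ names as the center's partner. These requests are free for $\OPT$, so every $n\ge k$ is covered with no further work. Your far-apart copies cover only $k\mid n$ and need more care. Their one extra feature, $\cost(\OPT)=n/k$ growing with $n$, is not needed: the paper's definition of competitiveness has no additive term, and rescaling the star would make $\cost(\OPT)$ large anyway.

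If you keep the copies, fix two details:
\begin{enumerate}
\item Specify what the adversary does once the algorithm matches outside the current copy. For example, it can switch to requesting fresh leaves of that copy. Then each copy still receives its center plus $k-1$ distinct leaves, $\OPT$ stays within copies, and $O_{ik}$ stays predetermined.
\item Make the charging explicit. If some request of copy $i$ is matched outside copy $i$, that request alone costs at least $D\ge 2k-1$. Otherwise, counting shows all $k$ servers of copy $i$ were free when the copy began, so the chase forces $2k-1$.
\end{enumerate}

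Finally, drop the aside about literally arbitrary metrics. The bound fails, for instance, when all points coincide, and the chase relies on the star structure, not merely the triangle inequality. The statement must be read existentially in the metric, as the paper's star-based proof does.
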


For randomized algorithms, we present lower bounds on the class of algorithms making deterministic queries for the two regimes, respectively.
\begin{theorem} \label{thm:hard:randbnp}
    No randomized algorithm that deterministically makes at most $B$ queries to the omnipotent oracle can achieve a competitive ratio of $1+o(\frac{\log (\nicefrac{n}{B}) }{B})$ for any metric space.
\end{theorem}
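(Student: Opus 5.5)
We prove the lower bound with Yao's minimax principle on a star-like uniform metric, built from the standard $\Omega(\log n)$ randomized hard instance that \citet{meyerson2006randomized} use. Because the algorithm's query rounds are deterministic, we may fix a query schedule $q_1 < \dots < q_B$ (with $B' \le B$ queries) before choosing the input distribution. The $B'$ query rounds cut the $n$ rounds into at most $B+1$ maximal blocks with no query inside. By pigeonhole, some block has length $L \ge \lfloor n/(B+1)\rfloor - 1 = \Omega(n/B)$. The idea is to place essentially all of the adversary's uncertainty inside this longest gap, where the algorithm has no oracle information.

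\textbf{Instance.} Take a uniform metric with all distances $1$; this can be realized as a star metric whose leaves are at distance $1/2$ from the centre.
\begin{itemize}
\item Before the gap, the requests sit exactly on servers, so the prefix can be served at zero cost.
\item On the remaining $m = L+1$ servers $\{s_1,\dots,s_m\}$ we run a random "chase" during the gap. Each request lands on a server chosen uniformly at random from those still unmatched in the current optimal configuration. The final request lands on the unique surviving optimal server.
\item After the gap, the remaining requests again sit on servers.
\end{itemize}
The omnipotent oracle queried at the start of the gap reveals only the optimal set for the prefix. That set is determined and carries no information about the random order inside the gap. Inside the gap the algorithm receives no queries, so its behaviour there is a fixed online algorithm facing this distribution.

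\textbf{Key step: expected cost inside the gap.} Standard coupon-style harmonic reasoning for the uniform-metric random-order instance shows that any deterministic online algorithm pays in expectation at least $H_m - 1 = \Omega(\log m)$ mismatches. Meanwhile $\OPT$ pays $0$ on the gap, which would make the ratio unbounded. So the instance must be adjusted to keep $\cost(\OPT)$ positive but small. Concretely, I would shift the last request of the gap, or add one unit-cost request, so that $\cost(\OPT) = 1$ while the algorithm still pays $\Omega(\log(n/B))$ in expectation.

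This alone already gives a ratio of $\Omega(\log(n/B))$, which is stronger than the claimed bound $1 + o\big(\log(n/B)/B\big)$. The claim's weaker form suggests the authors use a different construction, because a query at the end of the gap could let the algorithm recover, and $\OPT$ must be charged somewhere.

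\textbf{Adjusted plan.} I would therefore build a $(B+1)$-fold repeated construction with $\cost(\OPT) = \Theta(B)$. The instance is divided into $B+1$ blocks of length $\approx n/(B+1)$, each using the randomized chase on fresh servers, with each block's $\OPT$ cost equal to $1$. Against a deterministic schedule, at least one block contains no query. In that block the algorithm pays an extra $\Omega(\log(n/B))$ in expectation, while in every other block it pays at least the $\OPT$ share of $1$. This yields
\[
\frac{\E[\cost(\mathsf{ALG})]}{\cost(\OPT)} \ \ge\ \frac{B+1+\Omega(\log(n/B))}{B+1} \ =\ 1+\Omega\!\left(\frac{\log(n/B)}{B}\right).
\]
Finally, Yao's principle transfers this bound from deterministic algorithms on the random input to randomized algorithms with deterministic queries.

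\textbf{Main obstacle.} The hardest part is ensuring that a block with no query genuinely yields the harmonic loss. Queries in neighbouring blocks could otherwise leak that block's randomness, because a prediction $P_t$ reveals $O_t$, which depends on earlier blocks. To handle this, I would make the blocks use disjoint server sets and design each chase so that its outcome is revealed only by the block's own requests. Then a query placed before a block reveals nothing about that block, and a query placed after it comes too late to help.
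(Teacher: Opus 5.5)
Your adjusted plan has a genuine gap at its central step. Inside a single star (or uniform metric), giving the $B+1$ blocks disjoint server sets $W_1,\dots,W_{B+1}$ does not keep the algorithm's ``hole'' inside the query-free block, so that block need not cost anything extra.

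Concretely, let the algorithm query in the first round of each of blocks $2,\dots,B+1$, so block~$1$ is query-free.
\begin{enumerate}
\item In block~$1$ it serves the off-server request with some $u\in W_2$, and every other block-$1$ request at its co-located server at cost $0$. This leaves the leftover $\ell_1\in W_1$ free.
\item The query at the start of block~$2$ reveals $O_t$. This set contains a server outside $W_1$ that will never be requested (the leftover $\ell_2$, for the natural $\OPT$).
\item The algorithm serves block~$2$'s off-server request with $\ell_1$, at the same cost as $\OPT$. It serves the later request at $u$, if it comes, with that revealed server, paying one extra leaf-to-leaf distance.
\item From then on each query reveals an unused never-requested server for the next off-server request, so every later block costs exactly what $\OPT$ pays.
\end{enumerate}
The total is $\cost(\OPT)+O(1)$, i.e.\ ratio $1+O(1/B)$. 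Randomizing the partition does not help: the hole then lands in the query-free block only with probability about $1/(B+1)$.

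So the real obstacle is not information leakage, as your last paragraph suggests, but that nothing confines the hole. The same escape also defeats your single-gap construction whenever the gap is not the final block: serve the off-server request with a suffix server and repair after the gap at constant cost. So the claimed $\Omega(\log(n/B))$ ratio there is not established either.

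The base instance is also wrong. A chase in which every request lands on a distinct server has $\cost(\OPT)=0$, and greedy serves it at cost $0$, so the claimed $H_m-1$ expected mismatches fails. The harmonic loss of \Cref{lem:hard:randcomp} needs the \emph{first} request at a non-server point (the star center), followed by requests at uniformly random not-yet-requested leaves. Moving the last request off-server does not create it.

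The paper avoids the confinement problem by never restricting a chase to a sub-pool:
\begin{enumerate}
\item It works on the single star $\mathcal{T}_n$, and the random requests range over all leaves not yet requested; servers that appeared in a prediction count as requested.
\item Whenever the algorithm queries, the adversary restarts with a fresh center request.
\item In every leaf round the algorithm then has a matched server in the pool, and is hit with probability about $1/j$ when $j$ rounds remain, wherever the queries fall.
\item This gives expected cost at least $q+1+2(H_n-H_{q+1})$ against $\cost(\OPT)=q+1\le B+1$.
\end{enumerate}
Reacting to queries also covers query times that depend on the observed requests, which your fixed-schedule pigeonhole does not.

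Alternatively, you could repair your block plan by placing the blocks in separate copies of $\mathcal{T}_m$, at pairwise distance exceeding the total cost you aim to force. Cross-block assignments then become prohibitive, and for a fixed schedule the query-free block becomes an isolated instance of \Cref{lem:hard:randcomp}.
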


\begin{theorem} \label{thm:hard:randwsq}
    No randomized algorithm making well-separated queries to the omnipotent oracle with separation parameter $k$ can achieve a competitive ratio of $o(\log k)$ for any metric space.
\end{theorem}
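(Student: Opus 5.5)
We prove \Cref{thm:hard:randwsq} by applying Yao's minimax principle to a randomized hard instance on a star-like (uniform) metric, the same family behind the classical $\Omega(\log n)$ bound of \citet{meyerson2006randomized}, embedded so that the separation parameter $k$ plays the role of $n$. Because the theorem is stated ``for any metric space'', the plan is to realize the hard sub-instance on whatever points are available. The cleanest route is a uniform metric on $k$ points, or equivalently a star metric with servers at the leaves, so that any metric containing enough points with comparable pairwise distances works after scaling. The key observation is that an omnipotent oracle queried at rounds $k, 2k, \ldots$ reveals exactly $O_{ik}$. It says nothing about how the requests $r_{ik+1},\ldots,r_{ik+k-1}$ inside a phase will be matched by $\OPT$ before the next query, so within each phase the algorithm is effectively a prediction-free online algorithm on a sub-instance of size about $k$.

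\textbf{Construction.} Take $n = k$ (or $n$ a multiple of $k$, repeating independent blocks, which only scales both costs). Place servers on the $k$ points so that in the first phase, rounds $1,\ldots,k-1$, no prediction has yet been received. The relevant query is at round $k$, which arrives only after $k-1$ irrevocable decisions. The instance itself is the standard randomized one for uniform/star metrics. There are $k$ server locations. Requests arrive in stages: each stage places requests on a uniformly random half of the locations still ``free'' in the optimum. An online algorithm then pays $\Omega(1)$ per stage in expectation, because it cannot distinguish which free servers will be demanded later, whereas $\OPT$ pays $O(1)$ in total (e.g.\ only the final unmatched step). With $\log k$ halving stages this gives ratio $\Omega(\log k)$. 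Concretely, I would use the variant in which $\OPT=1$ and any deterministic algorithm facing the random sequence pays expected $\Omega(\log k)$: at each stage half of the remaining candidate servers get ``hit'', the algorithm's choice is wrong with probability $1/2$, and each wrong choice eventually costs one unit.

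\textbf{Handling the oracle.} The algorithm may use the query at round $k$, but by then all of phase $0$ has been committed. If the halving sequence is placed entirely within rounds $1,\ldots,k-1$, the $\Omega(\log k)$ expected loss is already incurred. The remaining rounds can be filled with requests placed directly on servers, which both $\OPT$ and the algorithm serve at zero cost once the prediction reveals them. I would need to argue that these remaining requests cannot reduce the already incurred cost and add only $O(1)$ to $\OPT$; the algorithm's leftover mismatches may add cost, but only upward. Yao's principle then converts this into a lower bound for randomized algorithms. Queries are deterministic at fixed rounds here, so no adaptivity issue arises.

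\textbf{Main obstacle.} The hardest part is adapting the star-metric halving argument so that it uses $k-1$ requests on $k$ servers with bounded $\OPT$. The difficulty is that a request left wrongly matched inside phase $0$ must translate into real cost rather than being recoverable later, since late cost only comes from later requests. I would first try the construction in which the wrongly used servers are exactly those later requested, so the algorithm must pay a unit detour for each; the expected number of such events is $\sum_{\text{stages}} \Omega(1)=\Omega(\log k)$. Then I would check that ``any metric space'' is met by noting that uniform metrics on $k$ points suffice, presumably being the paper's intended reading.
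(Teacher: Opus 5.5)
Your $n=k$ instance is correct, and it coincides with the paper's construction in that special case. On $\mathcal{T}_k$ the only query is at round $n=k$, when the algorithm has a single free server left, so the algorithm is effectively prediction-free. The center-then-uniformly-random-unrequested-leaf sequence then forces expected cost $2H_k-1$ against $\cost(\OPT)=1$. The halving stages and your ``main obstacle'' are unnecessary. Before round $t\ge 2$ the algorithm has matched $t-1$ servers while only $t-2$ leaves are requested. So some matched server sits at an unrequested leaf, and the next request hits it with probability at least $1/(k-t+2)$, at cost $2$. You also need the extra center point: a uniform metric on the $k$ server locations alone gives $\cost(\OPT)=0$.

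However, $n=k$ is the degenerate case with no informative prediction at all, i.e.\ just the classical prediction-free bound. The paper proves more: the $\Omega(\log k)$ loss persists for every $n$ that is a multiple of $k$, where the algorithm receives $n/k-1$ accurate predictions. Your route to $n>k$ has a genuine gap.

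Placing the hard sequence in phase $0$ and then filling fails once $n\ge 2k$, because during rounds $1,\dots,k-1$ the algorithm has $n$ free servers, not $k$.
\begin{itemize}
\item If the random leaves range over all unrequested leaves, a lazy algorithm's single ``hole'' is hit with probability $1/(n-t+2)$ per round. The phase-$0$ loss is then $1+2(H_n-H_{n-k+2})=O(1)$.
\item If they range over a fixed set of $k$ leaves, the algorithm simply serves the center request from a server outside that set.
\end{itemize}
In either case, the prediction at round $k$ reveals $\OPT$'s partner for the center request (the unique element of $O_k$ not yet requested). After that the algorithm pays at most $2$ more, so the ratio is $O(1)$.

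The missing idea is to put the hard block last. The paper first sends $n-k$ requests to distinct leaves. These cost nothing for either side, and the predictions at rounds $k,\dots,n-k$ only reveal leaves already requested. Only then does it send the center request followed by uniformly random unrequested leaves. The final $k$ rounds thus form exactly an instance of $\mathcal{T}_k$ with no useful query, since the query at round $n$ arrives when the choice is forced. Your ``repeating independent blocks'' could be rescued only by placing separate copies of $\mathcal{T}_k$ far apart, so that borrowing a server across blocks is prohibitively expensive. On a single star the blocks are not independent.
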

\newcommand{\greedy}{\textsf{Greedy}}
\newcommand{\comp}{\textsf{Comp}}
\newcommand{\ours}{\textsf{Ours}}
\newcommand{\combwgreedy}{\textsf{Comb-Greedy}}
\newcommand{\combwcomp}{\textsf{Comb-Comp}}

\newcommand{\lineinst}{\textsf{Line}}
\newcommand{\planeinst}{\textsf{Plane}}
\newcommand{\taxiinst}{\textsf{Taxi}}

\section{Experiments} \label{sec:exp}
This section presents an empirical evaluation of our algorithms.
We begin by describing the experimental setup in \Cref{sec:exp:setup}, followed by a discussion of the experimental results in \Cref{sec:exp:res}.

\begin{figure*}[t]
    \centering
    \includegraphics[width=\linewidth]{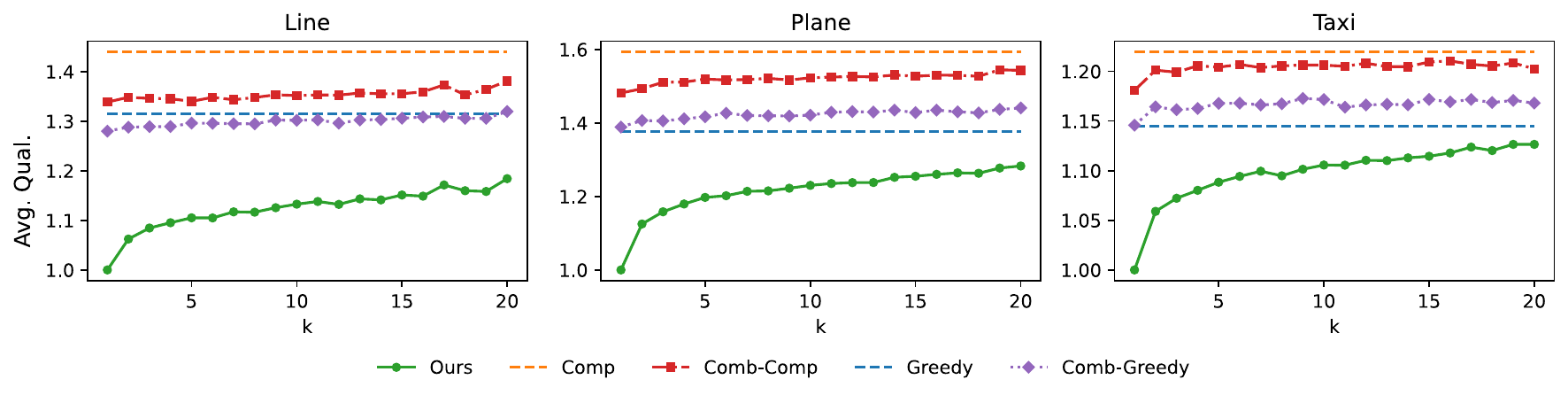}
    \caption{
    \textbf{Results of the first type of experiments.} 
    The plots for \lineinst{}, \planeinst{}, and \taxiinst{} are shown on the left, center, and right, respectively. 
    The $x$-axis corresponds to the separation parameter $k$, and the $y$-axis represents the average solution quality, measured by the ratio of the algorithm's output cost to the optimal cost with hindsight.
    }
    \label{fig:exp:perfect}
\end{figure*}
\begin{figure*}[t]
    \centering
    \includegraphics[width=\linewidth]{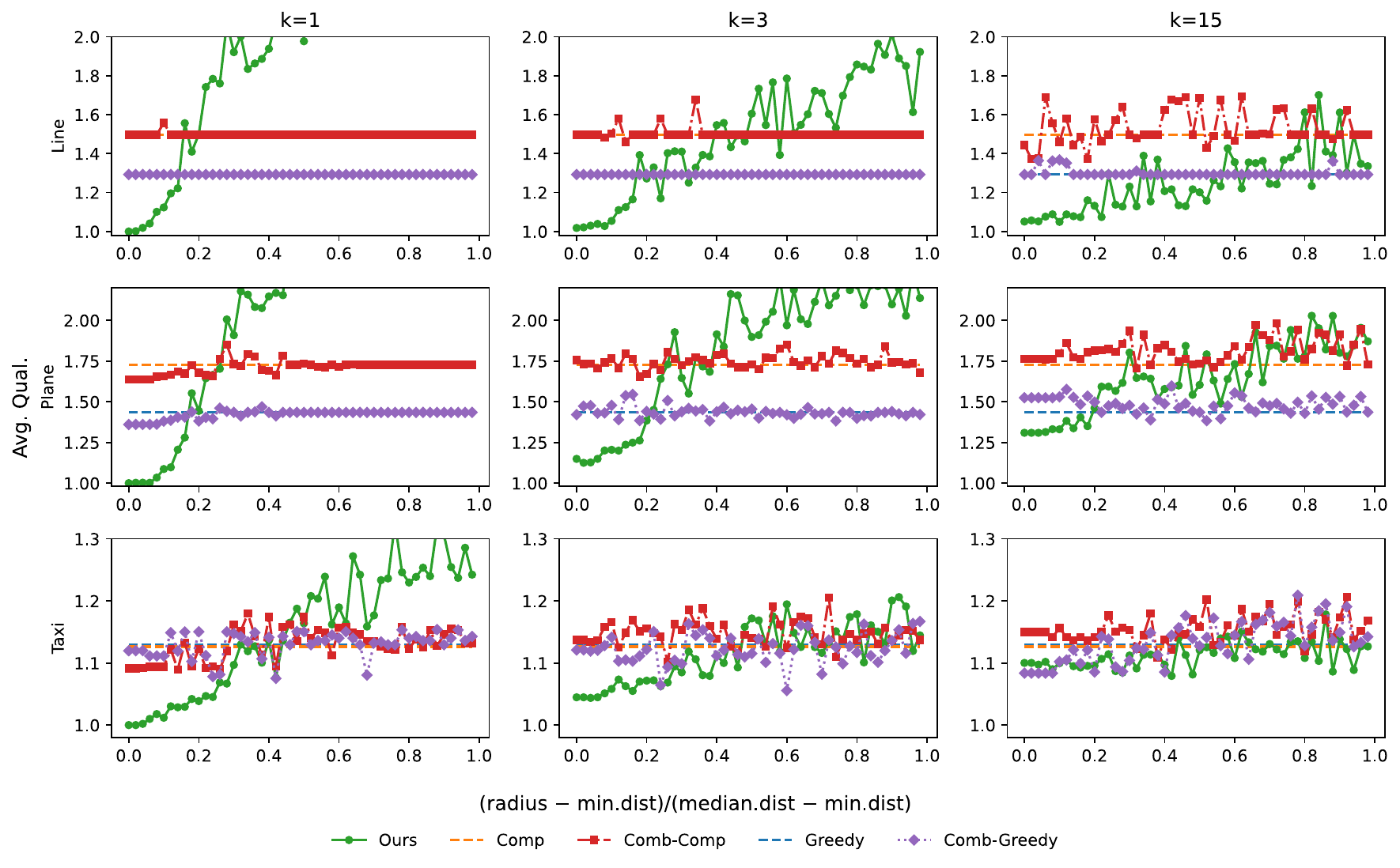}
    \caption{
    \textbf{Representative results of the second type of experiments.}
    The rows correspond to the instance classes, while the columns correspond to different values of the separation parameter $k$.
    The $x$-axis displays the normalized prediction noise, represented by $\frac{r - d_{\min}}{d_{\mathrm{med}} - d_{\min}}$, where $r$ is the noise radius, $d_{\min}$ is the minimum non-zero distance, and $d_{\mathrm{med}}$ is the median of all non-zero distance values. 
    The $y$-axis represents the average solution quality, measured by the ratio of the algorithm's output cost to the optimal cost with hindsight.
    }
    \label{fig:exp:noise}
\end{figure*}

\subsection{Experimental Setup} \label{sec:exp:setup}
\paragraph{Algorithms}
We evaluate the following five algorithms in our experiments:
\begin{itemize}
    \item \ours{}, the algorithm described in \Cref{thm:alg:detgen};
    \item \comp{}, the deterministic $(2n-1)$-competitive algorithm due to \citet{kalyanasundaram1993online} and \citet{khuller1994line};
    \item \combwcomp{}, the combination of \ours{} and \comp{} as formalized in \Cref{thm:intro:det};
    \item \greedy{}, the simple greedy algorithm that matches every request to its nearest available server;
    \item \combwgreedy{}, the combination of \ours{} and \greedy{}.
\end{itemize}
We remark that, when setting $k = 1$, \ours{} is equivalent with the Follow-the-Prediction algorithm of \citet{antoniadis2023online}.

\paragraph{Instance Generation}
Each instance is defined on a metric space with 200 vertices, consisting of 100 servers and 100 requests.
We consider three classes of instances---\lineinst{}, \planeinst{}, and \taxiinst{}---which are constructed as follows.

To generate a \lineinst{} instance, we independently sample 200 numbers uniformly at random from the interval $[0,1]$, each representing the location of a vertex on the line.
The distance between two vertices is defined as the absolute difference between their corresponding numbers.
Among these 200 vertices, we sample 100 uniformly at random without replacement as servers, and sample another 100 uniformly at random with replacement as requests.

An instance of \planeinst{} is constructed similarly to \lineinst{}.
The only difference is that each vertex corresponds to a point sampled uniformly at random from the two-dimensional unit square $[0,1]^2$, and distances are measured using the Euclidean metric, with a modest rounding applied for numerical stability.

Inspired by \citet{feng2024two} and \citet{jin2022online}, \taxiinst{} denotes a class of instances constructed from the Chicago Taxi Trips (2013--2023) dataset~\cite{chicago_taxi_trips}.
Specifically, for a randomly sampled time point, the most recent 100 data entries preceding this time are treated as servers, while the earliest 100 entries following it are treated as requests arriving sequentially.
Distances are measured using the Manhattan metric.
In the interest of space, we defer the full details of \taxiinst{} instance construction to \Cref{app:taxigen}.

\paragraph{Noisy Prediction Generation}
Our construction of noisy predictions is parameterized by a noise radius $r \geq 0$.
For each round $t$, let $P_t^\star \subseteq S$ denote the accurate prediction at that round.
For every server $s \in P_t^\star$, we independently select a server $q_s$ uniformly at random from those within distance $r$ of $s$.
Let $Q_t := \{ q_s \mid s \in P_t^\star \}$ denote the resulting multiset; note that repetitions may occur, i.e., it is possible that $q_{s_1} = q_{s_2}$ for distinct $s_1, s_2 \in P_t^\star$.
To eliminate such repetitions, we compute a minimum-cost matching between $Q_t$ and $S$ in which every server of $Q_t$ is matched.
The prediction $P_t \subseteq S$ for round $t$ is then defined as the set of (distinct) servers in $S$ matched to $Q_t$.
Observe that 
$
\dist(P_t^\star, P_t) \le 2 t r,
$
since the matching $\{(q_s, s) \mid s \in P_t^\star\}$ is a feasible matching between $Q_t$ and $S$ and has total cost at most $t r$.

\paragraph{Evaluation Objectives}
We consider two types of experiments, each designed to evaluate different aspects of the algorithms.
The first type focuses on the average performance of the algorithms as a function of the number of actual predictions used.
To eliminate potential bias arising from prediction errors, we use only perfect predictions in this set of experiments.
Specifically, for each instance class, we compute the average solution quality—measured as the ratio between the algorithm’s cost and the optimal cost—over 100 independently generated instances, for separation parameters $k \in \{1,\ldots,20\}$.

The second type of experiments aims to evaluate performance under noisy predictions.
Since the same level of prediction noise may affect different instances in different ways, we fix an instance and compute the average solution quality over 100 independently generated predictions.
We consider 50 noise radii, uniformly spaced between the minimum non-zero distance and the median non-zero distance, and evaluate performance for separation parameters $k \in \{1,\ldots,20\}$.
This set of experiments is respectively conducted on 5 randomly sampled instances for each instance class.

\subsection{Experimental Results} \label{sec:exp:res}
The results of the first type of experiments are shown in \Cref{fig:exp:perfect}.
Overall, the observed trends align well with our expectations.
In particular, the performance of \ours{}, \combwcomp{}, and \combwgreedy{} degrades as the separation parameter $k$ increases.
Among them, the performance of \ours{} deteriorates more rapidly than that of \combwcomp{} and \combwgreedy{}, which is expected since combining with \comp{} or \greedy{} improves robustness.

On the other hand, \ours{} consistently outperforms the other algorithms.
We attribute this to its ability to fully exploit the available accurate predictions.
One notable observation is that \combwgreedy{} is outperformed by \greedy{} on instances of \planeinst{} and \taxiinst{}.
We discuss this behavior at the end of this subsection.

The results of the second type of experiments are summarized in \Cref{fig:exp:noise}.
Since these experiments produce 100 plots for each instance class, we present only a representative subset that clearly illustrates the general trends, due to space constraints.
The complete set of plots is provided in the Supplementary Material.

Overall, the experimental results again align well with our expectations.
For \ours{}, we observe that the algorithm is more sensitive to prediction noise when the separation parameter $k$ is small.
In particular, when $k = 1$, the algorithm achieves near-optimal performance under very low noise, but its performance deteriorates rapidly as the noise level increases.
Recall that \ours{} with $k = 1$ is equivalent to the FtP algorithm of \citet{antoniadis2023online}.
As $k$ increases, the slope of performance degradation becomes noticeably milder, although the performance under accurate predictions also worsens.

This behavior is expected: the more frequently the algorithm exploits predictions, the more strongly it is affected by prediction noise.
The combination algorithms exhibit similar qualitative trends, but are substantially more robust to noise due to the presence of the fallback algorithms \comp{} and \greedy{}.

We finally discuss the performance of the combination algorithm.
Across both types of experiments, the combination algorithm underperforms compared to its fallback algorithm in many cases. 
This behavior is likely explained by the overhead incurred when switching between the two input algorithms.
Consequently, if the performance gap between the input algorithms is relatively narrow, this switching overhead may become significant, potentially leading to worse overall performance for the combination algorithm.
\section{Conclusion}
In this paper, we extend the study of parsimonious prediction models in learning-augmented online optimization~\cite{im2022parsimonious,sadek2024algorithms} to the setting of online metric matching with action predictions~\cite{antoniadis2023online}.

Several directions remain open for future work.
An immediate direction is to develop a more refined analysis for randomized algorithms in the parsimonious setting.
In particular, it would be interesting to determine whether one can achieve performance guarantees that depend only on the separation parameter~$k$, rather than the $O(\log n \log k)$ factor in \Cref{thm:alg:randgen}, as well as lower bounds for algorithms that make randomized queries, rather than \Cref{thm:hard:randbnp}.

The combination algorithm is essential for robustifying our algorithmic framework. 
However, its current version incurs a multiplicative factor of 9 in the competitive ratio. 
Moreover, our experiments show that the combination algorithm often underperforms relative to its fallback algorithm. 
Designing a theoretically improved and/or practically effective combination algorithm is therefore an important direction for future work.

Another promising direction is to investigate alternative prediction models for online metric matching.
One plausible candidate is a variant of the \emph{metric error with outliers} model proposed by \citet{azar2022online}, despite the fact that the original model is known not to admit strong guarantees for online metric matching.

\section*{Acknowledgements}
This work was supported by the NCN grant no.\ 2020/39/B/ST6/01641.
This work was partly supported by Institute of Information \& communications Technology Planning \& Evaluation (IITP) grant funded by the Korea government (MSIT) (No.\ RS-2021-II212068, Artificial Intelligence Innovation Hub).
YS thanks Sungjin Im for introducing this problem and Hyung-Chan An for valuable discussions.
The authors are also grateful to Christian Coester \yrcite{coester2026} for acknowledging an inconsistency in \citet{antoniadis2023online}.
Part of this work was done while YS was at Yonsei University, South Korea.




\section*{Impact Statement}
This paper presents theoretical results on parsimonious learning-augmented algorithms for online metric matching, extending prior work on parsimonious learning-augmented online optimization.
Given the theoretical nature of this work, we do not anticipate direct negative societal impacts.


\bibliography{ref}
\bibliographystyle{icml2026}

\newpage
\appendix
\onecolumn

\section{Further Related Work} \label{app:relwork}

\paragraph{Online Metric Matching}
Online metric matching has been studied under a variety of metric spaces.
\citet{gupta2012online} presented a randomized $O(\log n)$-competitive algorithm for line and doubling metrics.
For deterministic algorithms, \citet{antoniadis2019competitive} gave the first sublinear-competitive algorithm for the line metric.
Subsequently, \citet{raghvendra2018optimal} developed a deterministic $O(\log n)$-competitive algorithm for line metrics, building on an instance-dependent $O(\mu_{\bbM}(S)\log^2 n)$-competitive algorithm due to \citet{nayyar2017input}, where $\mu_{\bbM}(S)$ denotes the maximum ratio between the traveling salesman tour and the diameter over all subsets of the server set~$S$.
On the lower-bound side, \citet{peserico2023matching} recently showed that any algorithm, including randomized ones, must have a competitive ratio of at least $\Omega(\sqrt{\log n})$ on the line metric, improving upon the previous lower bound of $9.001$ for deterministic algorithms established by \citet{fuchs2005online}.

The problem has also been studied under other arrival models than the adversarial model considered in this paper.
\citet{raghvendra2016robust} considered the random arrival model and obtained a best-possible $O(\log n)$-competitive algorithm, together with a matching lower bound.
In the known i.i.d.~model—where each request is drawn independently from a known distribution—\citet{gupta2019stochastic} presented an $O((\log\log\log n)^2)$-competitive algorithm for general metrics, which achieves a constant competitive ratio of $9$ for line and tree metrics.

\paragraph{Parsimonious Usage of Predictions}
Besides \citet{im2022parsimonious} and \citet{sadek2024algorithms}, a growing body of work has investigated how to reduce the amount of predictive information used in learning-augmented algorithms.
\citet{drygala2023online} studied the ski rental and Bahncard problems in settings where using predictions incurs an additional cost.
\citet{benomar2023advice} examined the tradeoff between algorithmic performance and a bounded number of predictions for problems such as ski rental, secretary, and non-clairvoyant scheduling, under a model in which predictions are accurate with known probabilities.
\citet{bampis2024parsimonious} showed that a logarithmic number of predictions suffices to improve the running time of PTASs for dense instances of several NP-hard problems.
\citet{antoniadis2023paging} introduced single-bit prediction models for paging and developed learning-augmented algorithms for both models, together with matching lower bounds.
Finally, we note that the parsimonious use of predictions in learning-augmented algorithms is closely related to the literature on \emph{online optimization with advice}, which studies the bit complexity of accurate advice required to achieve given performance guarantees; we refer interested readers to the survey by \citet{boyar2017online}.

\paragraph{Learning-Augmented Online Matching and Beyond}
Online bipartite matching has also been extensively studied in the learning-augmented setting.
\citet{mahdian2012online} initiated this line of work by developing a learning-augmented algorithm for the AdWords problem under the small-bids assumption; their result was later improved by \citet{choo2025learning}.
\citet{spaeh2023online} extended this framework to the Display Ads and generalized assignment problems.
\citet{jin2022online} investigated a two-stage model of unweighted/vertex-weighted/edge-weighted online bipartite matching.
Related work has also considered the random arrival model \cite{antoniadis2023secretary, choo2024online} and the stochastic graph setting \cite{aamand2022optimal}.
Beyond online bipartite matching, a substantial body of research has explored learning-augmented algorithms more broadly; interested readers are referred to the survey by \citet{mitzenmacher2022algorithms} and the website~\cite{alps}, which provides a comprehensive collection of publications in the area.
 
\newcommand{\calCbar}{\overline{\calC}}
\newcommand{\MjC}[2]{M^{(#1)}_{#2}}
\newcommand{\symdiff}{\mathbin{\triangle}}

\section{Combination Algorithm} \label{app:combalg}
In this appendix, we prove \Cref{thm:prelim:combalgrand}.

\paragraph{Algorithm Description}
We first introduce some notation:
For an online algorithm $\calC$ and any $t \in \{1, \ldots, n\}$, let $M^\calC_t$ be the intermediate matching that $\calC$ maintains at the end of round $t$.
Let $s^\calC_t \in S$ and $S^\calC_t \subseteq S$ denote the server matched by $\calC$ at round $t$ and the set of servers matched by $\calC$ so far, respectively.
Let $S^\calC_0 = \emptyset$.

We now describe the combination algorithm.
For every round $t \in \{1, \ldots, n\}$, let $S_t \subseteq S$ denote the servers matched by this combination algorithm at the end of round $t$; let $S_0 = \emptyset$.
Recall that we are given two online algorithms $\calA$ and $\calB$. 
We divide the entire rounds into \emph{phases}, where we try to follow $\calA$ in every odd phase, and $\calB$ in every even phase.

At the beginning of a round $t \in \{1, \ldots, n\}$, let us assume that we are in phase $i$ for some odd $i$.
We first determine if the cost incurred by $\calA$ is greater than $2^i$ or not.
If it is the case (i.e., $\cost(M^\calA_{t}) > 2^i$), we move to the next phase without processing the request of this round.
Here we remark that no rounds can be processed in a phase.

Otherwise, if $\cost(M^\calA_{t-1}) \leq 2^i$, let $s_t$ denote the counterpart of $s^\calA_t$ in a minimum-cost perfect matching $\mu_t$ 
between $S \setminus S^\calA_{t-1}$ and $S \setminus S_{t-1}$, 
where every $s \in S \setminus (S_{t-1} \cup S^\calA_{t-1})$ is matched with itself.
We then match $r_t$ with $s_t$. 
For an even phase, we replace every $\calA$ with $\calB$ in the description.
This is the end of the description of the combination algorithm.

The pseudocode of the combination algorithm is provided in \Cref{alg:app:combalgdet}.

\begin{algorithm}
  \caption{Combination Algorithm}
  \label{alg:app:combalgdet}
  \begin{algorithmic}
    \STATE $i \gets 1$, $t \gets 1$, $S_0 \gets \emptyset$
    \WHILE{$t \leq n$}
        \STATE $\calC \gets \begin{cases}\calA, & \text{if $i$ is odd} \\ \calB, & \text{if $i$ is even} \end{cases}$
        \IF{$\cost(M^\calC_{t}) > 2^i$}
            \STATE $i \gets i+1$
        \ELSE
            \STATE $s^\calC_t \gets$ the server matched by $\calC$ at round $t$
            \STATE $\mu_t \gets$ a minimum-cost perfect matching between $S \setminus S^\calC_{t-1}$ and $S \setminus S_{t-1}$, where every server in $S \setminus (S^\calC_{t-1} \cup S_{t-1})$ \\\quad is matched with itself
            \STATE $s_t \gets$ the counterpart of $s^\calC_t$ in $\mu_t$
            \STATE Match $r_t$ with $s_t$
            \STATE $S_t \gets S_{t-1} \cup \{s_t\}$
            \STATE $t \gets t + 1$
        \ENDIF
    \ENDWHILE
  \end{algorithmic}
\end{algorithm}

\paragraph{Analysis}
We now analyze the combination algorithm.

\begin{lemma} \label{lem:app:comb:feas}
    The combination algorithm outputs a perfect matching.
\end{lemma}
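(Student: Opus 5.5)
We need to show that the combination algorithm (\Cref{alg:app:combalgdet}) outputs a perfect matching. The plan is an induction on rounds with the invariant that, at the start of each processed round $t$, $S_{t-1}$ is a set of exactly $t-1$ distinct servers, each matched to exactly one of $r_1,\ldots,r_{t-1}$. Two further points then finish the argument: the while-loop actually processes every round, and each processed round adds a new, currently unmatched server.

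\textbf{Termination.} First argue that the loop cannot increment $i$ forever without advancing $t$. Fix the current round $t$. Both $\cost(M^\calA_t)$ and $\cost(M^\calB_t)$ are finite numbers; for randomized algorithms, fix the random choices. Once $i$ is large enough that $2^i$ exceeds both costs, the test $\cost(M^\calC_t) > 2^i$ fails for whichever algorithm $\calC$ is active. That round is then processed and $t$ increases. Hence all $n$ rounds are eventually processed and every request is matched.

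\textbf{Well-definedness of $\mu_t$ and $s_t$.} In round $t$ with active algorithm $\calC$:
\begin{itemize}
\item By the invariant and because $\calC$ is itself a valid online algorithm, $|S^\calC_{t-1}| = |S_{t-1}| = t-1$. So $S\setminus S^\calC_{t-1}$ and $S\setminus S_{t-1}$ have equal size $n-t+1$.
\item Their intersection is $S\setminus(S^\calC_{t-1}\cup S_{t-1})$. The remaining parts, $S_{t-1}\setminus S^\calC_{t-1}$ and $S^\calC_{t-1}\setminus S_{t-1}$, also have equal size.
\item Therefore a perfect matching exists that fixes every common server and pairs the remaining parts arbitrarily. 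Taking the minimum-cost such matching, $\mu_t$ exists.
\item Since $\calC$ matches $r_t$ to an available server, $s^\calC_t \in S\setminus S^\calC_{t-1}$. Its counterpart $s_t$ lies in $S\setminus S_{t-1}$, so it is unmatched by our algorithm.
\end{itemize}
Consequently $S_t = S_{t-1}\cup\{s_t\}$ has size $t$, which maintains the invariant. Irrevocability holds trivially, because we never reassign a request.

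After round $n$, all $n$ requests are matched to $n$ distinct servers, which gives a perfect matching. The only subtle point is the termination of the phase-switching loop. I expect it to be the main (though mild) obstacle, since it requires noting that the costs being compared are finite and independent of $i$ within a fixed round.
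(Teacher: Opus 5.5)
Your proof is correct, and its core step is exactly the paper's: $s^{\calC}_t \in S \setminus S^{\calC}_{t-1}$, and $\mu_t$ matches every server of $S \setminus S^{\calC}_{t-1}$ to one of $S \setminus S_{t-1}$, so $s_t$ is still unmatched. The paper stops there, whereas you also correctly check that the phase-switching loop terminates and that $\mu_t$ exists by the cardinality count, details the paper leaves implicit.
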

\begin{proof}
    Fix any round $t$. 
    Recall that any server in $S \setminus S^\calC_{t-1}$ is matched with a server in $S \setminus S_{t-1}$ in $\mu_t$.
    Note that $s^\calC_t \in S \setminus S^\calC_{t-1}$, implying that $s_t \in S \setminus S_{t-1}$.
\end{proof}

\begin{lemma} \label{lem:app:comb:stepbound}
    For any $t \in \{1, \ldots, n\}$, $d(s_t, s^\calC_t) = \dist (S_{t-1}, S^\calC_{t-1}) - \dist (S_t, S^\calC_t)$.
\end{lemma}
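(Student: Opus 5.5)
The plan is to work directly with the structure of $\mu_t$. Write $X := S \setminus S^\calC_{t-1}$ and $Y := S \setminus S_{t-1}$. These have the same size, and $X \cap Y$ contains $S \setminus (S^\calC_{t-1} \cup S_{t-1})$. Since servers common to both sides are matched to themselves in $\mu_t$, the remaining edges of $\mu_t$ form a perfect matching between $X \setminus Y = S_{t-1} \setminus S^\calC_{t-1}$ and $Y \setminus X = S^\calC_{t-1} \setminus S_{t-1}$. By minimality of $\mu_t$, this restricted matching is a min-cost perfect matching between these two difference sets. By \Cref{fact:prelim}(1), its cost equals $\dist(S_{t-1}, S^\calC_{t-1})$, because the common part $S_{t-1}\cap S^\calC_{t-1}$ cancels. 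Thus $\cost(\mu_t) = \dist(S_{t-1}, S^\calC_{t-1})$.

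Next I will show that $\mu_t$ with the edge $(s^\calC_t, s_t)$ removed is a perfect matching between $X \setminus \{s^\calC_t\} = S \setminus S^\calC_t$ and $Y \setminus \{s_t\} = S \setminus S_t$. Writing $\mu'_t := \mu_t \setminus \{(s^\calC_t, s_t)\}$, this gives
\[
\dist(S \setminus S_t, S \setminus S^\calC_t) \le \cost(\mu_t) - d(s_t, s^\calC_t).
\]
Complementation preserves distances in the sense that $\dist(S\setminus A, S\setminus B) = \dist(A\setminus B, B\setminus A) = \dist(A,B)$, which follows from \Cref{fact:prelim}(1). Applying this to both $\cost(\mu_t)$ and the left-hand side, I get $\dist(S_t, S^\calC_t) \le \dist(S_{t-1}, S^\calC_{t-1}) - d(s_t,s^\calC_t)$, which is the ``$\le$'' direction of the claim.

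For the reverse inequality I will use the triangle inequality. Any perfect matching between $S\setminus S_t$ and $S\setminus S^\calC_t$, augmented with the edge $(s_t, s^\calC_t)$, is a perfect matching between $X$ and $Y$. Hence $\dist(S_{t-1},S^\calC_{t-1}) \le \dist(S_t,S^\calC_t) + d(s_t,s^\calC_t)$. Note that the augmented matching need not fix common elements, but the unrestricted optimum $\dist(X,Y)$ still equals $\cost(\mu_t)$ by \Cref{fact:prelim}(1). Combining both directions yields the equality.

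The main obstacle is the bookkeeping in the first step: justifying that the constrained $\mu_t$, which fixes common servers to themselves, is globally optimal between $X$ and $Y$, and that removing an edge keeps the constraint structure. The subtle case is $s^\calC_t = s_t$, where the edge is a self-loop of cost $0$ and the statement reduces to the identity $\dist(S_{t-1}\cup\{s\}, S^\calC_{t-1}\cup\{s\}) = \dist(S_{t-1}, S^\calC_{t-1})$, again by \Cref{fact:prelim}(1). In the general case $s^\calC_t \in X\setminus Y$, so it is matched across to some $s_t \in Y\setminus X = S^\calC_{t-1}\setminus S_{t-1}$, and removing this edge leaves the rest consistent.
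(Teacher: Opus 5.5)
Your proof is correct and follows essentially the same route as the paper, which writes $d(s_t, s^\calC_t) = \dist(S\setminus S_{t-1}, S\setminus S^\calC_{t-1}) - \dist(S\setminus S_t, S\setminus S^\calC_t)$ because the edge $(s_t,s^\calC_t)$ lies in the minimum-cost matching $\mu_t$, and then obtains $\dist(S_{t-1},S^\calC_{t-1}) - \dist(S_t,S^\calC_t)$ by complementation via \Cref{fact:prelim}. You split that first equality into its two inequalities and spell out why the self-matching constraint on $\mu_t$ does not affect optimality, which the paper leaves implicit.
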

\begin{proof}
    As shown in the proof of \Cref{lem:app:comb:feas}, we have $s_t \in S \setminus S_{t-1}$ and $s^\calC_t \in S \setminus S^\calC_{t-1}$.
    Moreover, since $\mu_t$ is a minimum-cost perfect matching between $S \setminus S_{t-1}$ and $S \setminus S^\calC_{t-1}$ in which $s_t$ and $s^\calC_t$ are matched together, we have
    \begin{align*}
        d(s_t, s^\calC_t) 
        &
        = \dist(S \setminus S_{t-1} , S \setminus S^\calC_{t-1}) - \dist(S \setminus (S_{t-1} \cup \{s_t\}), S \setminus (S^\calC_{t-1} \cup\{s^\calC_t\}))
        \\&
        = \dist(S_{t-1}, S^\calC_{t-1}) - \dist(S_t, S^\calC_t).
    \end{align*}
\end{proof}

\begin{lemma} \label{lem:app:comb:totbound}
    If $\cost(\OPT) \geq 1$, the cost incurred by the combination algorithm is at most $9 \, \min\{\cost(\calA), \cost(\calB)\}$.
\end{lemma}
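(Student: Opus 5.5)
Let $i^\ast$ be the final phase and let $t_j$ be the first round processed in phase $j$, so the rounds $t_j,\dots,t_{j+1}-1$ are processed following algorithm $\calC_j$ ($\calA$ for odd $j$, $\calB$ for even $j$). The plan is to bound the cost paid in each phase by $O(2^j)$, sum the geometric series up to $i^\ast$, and then show $2^{i^\ast}$ is within a constant of $\min\{\cost(\calA),\cost(\calB)\}$.

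\textbf{Per-round cost.} For a round $t$ processed following $\calC$, the triangle inequality and \Cref{lem:app:comb:stepbound} give
\[
d(s_t,r_t)\le d(s^\calC_t,r_t)+\dist(S_{t-1},S^\calC_{t-1})-\dist(S_t,S^\calC_t).
\]

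\textbf{Per-phase cost.} Summing over phase $j$ telescopes the distance terms. The cost paid in phase $j$ is therefore at most
\[
\cost(M^{\calC_j}_{t_{j+1}-1})+\dist(S_{t_j-1},S^{\calC_j}_{t_j-1}),
\]
where the first term bounds the sum of $d(s^\calC_t,r_t)$ by $\calC_j$'s cumulative cost. Since the round $t_{j+1}-1$ was processed, this cumulative cost is at most $2^j$.

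\textbf{The switching term.} To bound $\dist(S_{t_j-1},S^{\calC_j}_{t_j-1})$, I go through $R_{t_j-1}$ using \Cref{fact:prelim}. The configuration $S_{t_j-1}$ is matched to $R_{t_j-1}$ by our own matching, whose cost is what we have paid so far. The configuration $S^{\calC_j}_{t_j-1}$ is matched to $R_{t_j-1}$ by $\calC_j$, with cost at most $2^j$ (or at most $2^{j-2}$ when $\calC_j$ was last used in phase $j-2$; I take the weaker bound $2^j$). So
\[
\dist(S_{t_j-1},S^{\calC_j}_{t_j-1})\le \text{(our cost so far)}+2^j .
\]

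\textbf{Recursion and summation.} Let $X_j$ be our total cost through the end of phase $j$. Then $X_j\le X_{j-1}+X_{j-1}+2^j+2^j$, i.e. $X_j\le 2X_{j-1}+2\cdot 2^j$. Unrolling gives $X_j=O(j\,2^j)$, which is too weak. I need a sharper estimate, and this is the main obstacle: bound the switching cost without re-paying our whole history.

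Instead I would bound $\dist(S_{t_j-1},S^{\calC_j}_{t_j-1})$ through the previous algorithm's configuration. Let $\calC_{j-1}$ be the algorithm followed in phase $j-1$. By the triangle inequality,
\[
\dist(S_{t_j-1},S^{\calC_j}_{t_j-1})\le \dist(S_{t_j-1},S^{\calC_{j-1}}_{t_j-1})+\dist(S^{\calC_{j-1}}_{t_j-1},S^{\calC_j}_{t_j-1}).
\]
The first term is the leftover potential $D_{j-1}$ at the end of phase $j-1$. The second term is at most $2^{j-1}+2^{j}$, by going through $R_{t_j-1}$ with the two algorithms' costs at that round. The telescoping above leaves exactly $D_j$ unpaid, so the cost paid in phase $j$ plus $D_j$ is at most $2^j+D_{j-1}+2^{j-1}+2^j$. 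Summing over phases, the $D$ terms cancel, and the total cost is at most
\[
\sum_{j\le i^\ast}\bigl(2^{j+1}+2^{j-1}\bigr)\le \tfrac{5}{2}\cdot 2^{i^\ast+1}=5\cdot 2^{i^\ast}.
\]

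\textbf{Relating $2^{i^\ast}$ to the optimum.} The algorithm left phase $i^\ast-1$ because $\calC_{i^\ast-1}$'s cost exceeded $2^{i^\ast-1}$, so that algorithm's final cost is at least $2^{i^\ast-1}$. The algorithm $\calC_{i^\ast}$ also exceeded $2^{i^\ast-2}$, since it was abandoned in phase $i^\ast-2$. Hence $\min\{\cost(\calA),\cost(\calB)\}>2^{i^\ast-2}$, and the total is at most $20\cdot\min\{\cost(\calA),\cost(\calB)\}$. The assumption $\cost(\OPT)\ge1$ handles the start: costs below $2$ mean phase $1$ or $2$ is never left early.

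\textbf{Tightening to 9.} The constant $20$ must come down to $9$. I would get this by finer accounting: use the $2^{j-2}$ bound for $\calC_j$'s cost at the end of phase $j-2$, drop the double-counted terms, and replace $\min$ by the specific algorithm that terminated the process. This is the standard doubling analysis from \cite{fiat1994competitive}, and I expect it to yield the constant $9$.
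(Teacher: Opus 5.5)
\textbf{Gap.} Your second accounting is the same skeleton as the paper's proof: route the switching term through the configuration of the previously followed algorithm and let the potential $D_j$ telescope. It gives a valid constant, but only about $20$. The statement requires $9$, and your last paragraph asserts the tightening rather than proving it. One of the ingredients you propose for it is also false. The bound $2^{j-2}$ on the cost of $\calC_j$ holds at the end of phase $j-2$, but the switching term is evaluated at round $t_j-1$, the end of phase $j-1$. Phase $j-2$ was exited precisely because the cumulative cost of $\calC_j$ at the next round exceeded $2^{j-2}$. So whenever phase $j-1$ is nonempty, $\cost(M^{\calC_j}_{t_j-1})>2^{j-2}$. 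The only threshold available at that moment is $2^j$, which comes from round $t_j$ being processed in phase $j$.

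The missing idea is that you pay $2^j$ twice in phase $j$. Since $\calC_j$ never reassigns, $\cost(M^{\calC_j}_{t})=\sum_{t'\le t} d(s^{\calC_j}_{t'},r_{t'})$. Hence the switching piece $\dist(S^{\calC_j}_{t_j-1},R_{t_j-1})\le \cost(M^{\calC_j}_{t_j-1})$ and the in-phase piece $\sum_{t=t_j}^{t_{j+1}-1} d(s^{\calC_j}_t,r_t)$ add up to exactly $\cost(M^{\calC_j}_{t_{j+1}-1})\le 2^j$. With this merge, the cost of phase $j$ is at most $\cost(M^{\calC_j}_{t_{j+1}-1})+\cost(M^{\calC_{j-1}}_{t_j-1})+D_{j-1}-D_j$. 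Summing over phases, each threshold term of a non-final phase appears exactly twice and the final one once. With $\ell$ the last phase, the total is therefore at most $2\sum_{j<\ell}2^j+\cost(\calC_\ell)\le 2^{\ell+1}+\cost(\calC_\ell)$.

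Even then you must not replace $\cost(\calC_\ell)$ by $2^\ell$, since that only gives $12$; instead split into two cases. If $\calC_\ell$ is the cheaper algorithm, then $\cost(\calC_\ell)>2^{\ell-2}$, giving ratio $1+8=9$. Otherwise $\cost(\calC_\ell)\le 2^\ell$ while the other algorithm's cost exceeds $2^{\ell-1}$, giving $6$. The hypothesis $\cost(\OPT)\ge 1$ is needed only to supply $\min\{\cost(\calA),\cost(\calB)\}\ge 1$ when $\ell\le 2$. A single phase costs $\cost(\calA)\le 2$, and for $\ell=2$ phase $\ell-2$ does not exist.
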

\begin{proof}    
    If there is only a single phase throughout the execution, the cost incurred by the algorithm is exactly 
    \[
        \cost(\calA) \leq 2 \leq 2 \min \{\cost(\calA), \cost(\calB)\},
    \]
    where the second inequality is due to the assumption that $\cost(\OPT) \geq 1$.

    We now assume that the number of phases is at least 2.
    Let $\ell \geq 2$ denote the last (nonempty) phase.
    For each phase $i$, let $\tau_i$ be the last round of the phase; if a phase $i$ is empty, then we have $\tau_i = \tau_{i-1}$.
    Let $\tau_0 := 0$ for consistency.
    By the definition of the combination algorithm, we have
    \begin{equation}
        \cost(M^\calA_{\tau_i}) \leq 2^i \text{ for odd $i \leq \ell$}
        \; \text{ and } \;
        \cost(M^\calB_{\tau_i}) \leq 2^i \text{ for even $i \leq \ell$. } \label{eq:app:combalg:double}
    \end{equation}

    Moreover, for an odd $i$, the total cost incurred during phase $i$ can be bounded as follows:
    \begin{align}
        \sum_{t = \tau_{i-1} + 1}^{\tau_i} d(s_t, r_t)
        &
        \leq \sum_{t = \tau_{i-1} + 1}^{\tau_i} \Big[ d(s_t, s^\calA_t) + d(s^\calA_t, r_t) \Big] \nonumber
        \\&
        \leq \sum_{t = \tau_{i-1} + 1}^{\tau_i} \Big[ d(s^\calA_t, r_t) + \dist(S_{t-1}, S^\calA_{t-1}) - \dist(S_t, S^\calA_t) \Big] \nonumber
        \\&
        = \sum_{t = \tau_{i-1}+1}^{\tau_i} d(s^\calA_t, r_t) + \dist(S_{\tau_{i-1}}, S^\calA_{\tau_{i-1}}) - \dist(S_{\tau_i}, S^\calA_{\tau_i}) \nonumber
        \\&
        \leq \sum_{t = \tau_{i-1}+1}^{\tau_i} d(s^\calA_t, r_t) + \dist(S^\calA_{\tau_{i-1}}, R_{\tau_{i-1}}) + \dist(S^\calB_{\tau_{i-1}}, R_{\tau_{i-1}}) + \dist(S_{\tau_{i-1}}, S^\calB_{\tau_{i-1}}) - \dist(S_{\tau_i}, S^\calA_{\tau_i}) \nonumber
        \\&
        \leq \sum_{t = \tau_{i-1}+1}^{\tau_i} d(s^\calA_t, r_t) + \cost(M^\calA_{\tau_{i-1}}) + \cost(M^\calB_{\tau_{i-1}}) + \dist(S_{\tau_{i-1}}, S^\calB_{\tau_{i-1}}) - \dist(S_{\tau_i}, S^\calA_{\tau_i}) \nonumber
        \\&
        = \cost(M^\calA_{\tau_i}) + \cost(M^\calB_{\tau_{i-1}}) + \dist(S_{\tau_{i-1}}, S^\calB_{\tau_{i-1}}) - \dist(S_{\tau_i}, S^\calA_{\tau_i}), \label{eq:app:combalg:oddi}
    \end{align}
    where the second inequality is due to \Cref{lem:app:comb:stepbound}.
    For an even $i$, the symmetric derivation gives
    \begin{equation}
        \sum_{t = \tau_{i-1} + 1}^{\tau_i} d(s_t, r_t) \leq \cost(M^\calB_{\tau_i}) + \cost(M^\calA_{\tau_{i-1}}) + \dist(S_{\tau_{i-1}}, S^\calA_{\tau_{i-1}}) - \dist(S_{\tau_i}, S^\calB_{\tau_i}). \textbf{\label{eq:app:combalg:eveni}}
    \end{equation}
    
    For presentational simplicity, let us assume that $\ell$ is odd; a symmetric argument proves the other case.
    Note that, by \Cref{eq:app:combalg:oddi,eq:app:combalg:eveni}, the total cost incurred by the combination algorithm is bounded by
    \begin{align}
        &
        2 \cdot \Bigg[ \sum_{\text{odd $i < \ell$}} \cost(M^\calA_{\tau_i}) + \sum_{\text{even $i < \ell$}} \cost(M^\calB_{\tau_i}) \Bigg] + \cost(M^\calA_{\tau_\ell}) + \dist(S_0, S^\calB_0) - \dist(S_n, S^\calA_n) \nonumber
        \\&
        \quad \leq 2 \sum_{i = 1}^{\ell - 1} 2^i + \cost(\calA) \leq \cost(\calA) + 2^{\ell+1}, \label{eq:app:combalg:finalbnd}
    \end{align}
    where the first inequality follows from \Cref{eq:app:combalg:double} and the fact that $\dist(S_0, S^\calB_0) = \dist(S_n, S^\calA_n) = 0$.

    We now determine the factor.
    If $\cost(\calA) \le \cost(\calB)$, we can further bound \Cref{eq:app:combalg:finalbnd} by 
    \begin{align*}
        \Big(1 + \frac{2^{\ell + 1}}{\cost(\calA)} \Big) \cost(\calA)
        &
        < \Big(1 + \frac{2^{\ell + 1}}{2^{\ell-2}} \Big) \min \{\cost(\calA),\cost(\calB)\}
        = 9 \cdot \min \{\cost(\cA),\cost(\cB)\},
    \end{align*}
    where the inequality is due to the fact that phase $\ell-2$ terminates when $\cost(M^\calA_{\tau_{\ell-2} + 1}) > 2^{\ell-2}$.
    On the other hand, if $\cost(\calA) > \cost(\calB)$, \Cref{eq:app:combalg:finalbnd} can be bounded from above by
    \begin{align*}
        \Big(\frac{\cost(\calA) + 2^{\ell + 1}}{\cost(\calB)} \Big) \cost(\calB)
        &
        < \Big(\frac{2^\ell + 2^{\ell + 1}}{2^{\ell-1}} \Big) \min \{\cost(\calA),\cost(\calB)\}
        = 6 \cdot \min \{\cost(\cA),\cost(\cB)\},
    \end{align*}
    where the inequality follows from \Cref{eq:app:combalg:double} and the fact that phase $\ell-1$ terminates when $\cost(M^\calB_{\tau_{\ell-1}+1}) > 2^{\ell-1}$.
\end{proof}

Following is a simple fact that will be used in the proof of the main theorem.
\begin{fact} \label{fact:prelim:expmin}
    For two random variables $X$ and $Y$, we have $\E[\min\{X, Y\}] \leq \min\{ \E[X], \E[Y]\}$.
\end{fact}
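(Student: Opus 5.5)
The statement is the elementary fact $\E[\min\{X,Y\}] \leq \min\{\E[X],\E[Y]\}$, and the plan is to prove it by monotonicity of expectation applied to two pointwise inequalities. We tacitly assume that $X$ and $Y$ are integrable, or at least that their expectations exist in the extended sense. In our application they are nonnegative costs, so all expectations are well defined in $[0,\infty]$.

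First, observe that for every outcome $\omega$ of the underlying probability space,
\[
\min\{X(\omega),Y(\omega)\} \leq X(\omega)
\quad\text{and}\quad
\min\{X(\omega),Y(\omega)\} \leq Y(\omega).
\]
Hence $\min\{X,Y\} \leq X$ and $\min\{X,Y\} \leq Y$ hold almost surely. Note that $\min\{X,Y\}$ is itself a random variable, because the minimum of two measurable functions is measurable.

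Second, monotonicity of expectation gives $\E[\min\{X,Y\}] \leq \E[X]$ and $\E[\min\{X,Y\}] \leq \E[Y]$. Since $\E[\min\{X,Y\}]$ is a lower bound for both quantities, it is at most their minimum, which is the claim.

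There is no real obstacle here. The only point requiring care is ensuring the expectations exist. This is automatic for the nonnegative cost random variables to which the fact will be applied, namely the costs of $\calA$ and $\calB$ in deriving \Cref{thm:prelim:combalgrand} from \Cref{lem:app:comb:totbound}.
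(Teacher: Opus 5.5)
Your proposal is correct and matches the paper's proof: both use the pointwise bounds $\min\{X,Y\}\le X$ and $\min\{X,Y\}\le Y$ together with monotonicity of expectation. Your remarks on measurability and on the expectations being well defined for nonnegative costs add care that the paper omits, but they do not change the argument.
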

\begin{proof}
    For any realization of $Y$, we have $\min\{X, Y\} \leq X$, implying that $\E[\min\{X, Y\}] \leq \E[X]$.
    A symmetric argument shows $\E[\min\{X, Y\}] \leq \E[Y]$.
\end{proof}

We are now ready to prove the main theorem.
\begin{proof} [Proof of \Cref{thm:prelim:combalgrand}]
    Due to \Cref{lem:app:comb:feas}, the combination algorithm is guaranteed to output a perfect matching.
    From \Cref{lem:app:comb:totbound}, we can infer that the expected cost incurred by the combination algorithm is at most $9 \cdot \E [ \min \{ \cost(\calA), \cost(\calB) \} ]$.
    \Cref{fact:prelim:expmin} then completes the proof.
\end{proof}
\newcommand{\Mon}{M^\mathsf{on}}
\newcommand{\Moff}{M^\mathsf{off}}

\section{Adherence and Strong Competitiveness of Existing Algorithms} \label{app:compalgs}

We prove the adherence and strong competitiveness of the algorithms in \Cref{tab:alg:apply}.

\subsection{Deterministic Algorithms}
All the deterministic algorithms presented in \Cref{tab:alg:apply} can be described in the framework of \citet{nayyar2017input}.

We first need to define the \emph{$\gamma$-net-cost} of an augmenting path.
\begin{definition}[\cite{nayyar2017input}]
    For any matching $M$ and a parameter $\gamma \geq 1$, the \emph{$\gamma$-net-cost} of any augmenting path $P$ is defined as
    \[
        \gamma \cdot \Bigg( \sum_{e \in P \setminus M} d(e) \Bigg) - \sum_{e \in P \cap M} d(e).
    \]
\end{definition}
We can now describe the framework:
It maintains two matchings $\Mon$ and $\Moff$ satisfying:
\begin{itemize}
    \item no reassignments happen in $\Mon$, whereas it happens in $\Moff$;
    \item the matched servers are equivalent between $\Mon$ and $\Moff$.
\end{itemize}
Upon arrival of a request $r_t \in R$, the framework computes an augmenting path $P_t$ from $r_t$ to any unmatched server $s_t$ that achieves the minimum $\gamma$-net cost.
It then augments $\Moff$ along $P_t$, and matches $r_t$ with $s_t$ in $\Mon$.
This is the end of the framework description.

Note that the deterministic $(2n-1)$-competitive algorithm is equivalent with the framework with $\gamma = 1$ \cite{kalyanasundaram1993online, khuller1994line}.
\citet{nayyar2017input} and \citet{raghvendra2018optimal} define $\gamma = 3$.

The proof for adherence of this framework can be found in \citet{nayyar2017input}.
\begin{lemma} \label{lem:app:compalgs:adh}
    This framework is $\gamma$-adherent.
\end{lemma}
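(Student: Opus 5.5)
Throughout, write $M_t$ for the offline matching $\Moff$ at the end of round $t$; it matches exactly $R_t$ to the server set $S_t$ of $\Mon$. The plan is to show two things. First, $M_t$ is a $\gamma$-approximate matching in the sense of \citet{nayyar2017input}: there are dual weights $y$ on $S\cup R_t$ with $y(s)+y(r)\le \gamma\, d(s,r)$ for every pair, with $y(s)+y(r)= d(s,r)$ for every $(s,r)\in M_t$, with $y(s)\ge 0$ for all servers and $y(s)=0$ for servers free in $M_t$. Second, such dual-feasible certificates immediately give $\dist(S_t,R_t)\le \cost(M_t)\le\gamma\cdot\min_{M\in\cM_t}\cost(M)$. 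The first inequality is trivial, since $M_t$ is a perfect matching between $S_t$ and $R_t$.

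For the second inequality, take any $M\in\cM_t$ and sum $y(s)+y(r)\le\gamma d(s,r)$ over its edges. This gives $\sum_{r\in R_t}y(r)+\sum_{s\in S(M)}y(s)\le\gamma\cost(M)$. On the other hand, $\cost(M_t)=\sum_{r\in R_t}y(r)+\sum_{s\in S_t}y(s)$. The two expressions differ only in the server terms. Servers in $S_t$ that are not used by $M$ need not have zero dual, so I would instead use the stronger invariant from \citet{nayyar2017input}: free servers have dual equal to $\max$, or the duals of matched servers are at most the duals of free servers, which are $0$. That is, the invariant is $y(s)\le 0$ for matched servers and $y(s)=0$ for free ones. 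With $y(s)\le0$ on $S_t$ and $y\le$ anything on $S(M)$ since $y\ge$…, I would choose the sign convention so that $\sum_{s\in S_t}y(s)\le\sum_{s\in S(M)}y(s)$. This holds with $y(s)\le 0$ on matched servers and $y=0$ on free ones: then $\sum_{S_t}y\le\sum_{S_t\cap S(M)}y$, and this equals $\sum_{S(M)}y$ because servers in $S(M)\setminus S_t$ are free and have $y=0$. This yields $\cost(M_t)\le\gamma\cost(M)$.

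The main work, and the expected obstacle, is maintaining the invariants under a minimum $\gamma$-net-cost augmentation. I would follow the Hungarian-style argument of \citet{nayyar2017input}. Define $\gamma$-scaled slack $\gamma d(s,r)-y(s)-y(r)$ on non-matching edges and zero slack on matching edges. When $r_t$ arrives, run a Dijkstra-type search from $r_t$ with respect to these slacks, and find the cheapest augmenting path; this is exactly the minimum $\gamma$-net-cost path up to dual telescoping. Then decrease the duals of the visited servers and increase those of the visited requests by the standard amounts. This keeps feasibility, keeps the path tight, keeps free servers at dual $0$ because the path ends at the closest free server in slack distance, and makes matched server duals nonpositive. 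The delicate point is checking the net-cost/slack equivalence: the path cost $\gamma\sum_{P\setminus M}d-\sum_{P\cap M}d$ equals the sum of slacks plus the endpoint duals, $y(r_t)$ and $y(s_t)=0$. I would verify this telescoping carefully, and cite \citet{nayyar2017input} for the invariant-preservation details, since the paper states that the proof is found there.
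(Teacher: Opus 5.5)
Your proposal is essentially the paper's argument: invoke the $\gamma$-feasible dual certificate that \citet{nayyar2017input} maintain for $M^{\mathsf{off}}$ and finish by weak duality. Your final sign convention (matched servers have $y(s)\le 0$, free servers have $y(s)=0$) is exactly what makes the comparison with $\min_{M\in\cM_t}\cost(M)$ go through, and the paper's own proof writes $y\in\bbR_+^{S\cup R}$, which taken literally does not yield that comparison. So delete the contradictory ``$y(s)\ge 0$ for all servers'' from your first paragraph.

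One small fix to your maintenance sketch: a Hungarian search with $\gamma$-scaled slacks leaves the new path edges tight at $y(s)+y(r)=\gamma\, d(s,r)$. After augmenting you must therefore also lower each newly matched server's dual by $(\gamma-1)\,d(s,r)$ to restore $y(s)+y(r)=d(s,r)$ on matching edges. This only decreases duals, so it preserves feasibility and the other invariants.
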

\begin{proof}
    It is shown in \cite{nayyar2017input} that the framework always maintains $\Moff$ and a dual solution $y \in \bbR^{S \cup R}_+$ to be \emph{$\gamma$-feasible}; that is, in any round,
    \begin{itemize}
        \item $y(s) + y(r) \leq \gamma \cdot d(s, r)$ for all $(s, r) \in S \times R$;
        \item $y(s) + y(r) = d(s, r)$ for all $(s, r) \in \Moff$;
        \item $y(s) = 0$ and $y(r) = 0$ for all unmatched $s \in S$ and not-yet-arrived $r \in R$.
    \end{itemize}
    Note that $\cost(\Moff) = \sum_{v \in S \cup R} y(v)$ while $\nicefrac{y}{\gamma}$ is feasible to the dual of the weighted bipartite matching LP relaxation.
    This completes the lemma.
\end{proof}

The strong competitiveness of this algorithm can be easily derived from the adherence and determinism.
\begin{lemma}
    For any $\rho : \bbN \to \bbR$, if this framework is $\rho$-competitive, it is strongly $\gamma \rho$-competitive.
\end{lemma}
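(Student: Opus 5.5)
Fix a round $t$ and apply the hypothesis to a truncated instance. The framework is deterministic, so $\cA_t$ is a fixed matching. Because the framework is online, its behaviour on the first $t$ requests does not depend on later requests. We therefore build an auxiliary instance whose first $t$ requests are $r_1,\ldots,r_t$ and whose remaining requests are placed so that they cost nothing, both for the algorithm and for the offline optimum. Concretely, let $S_t$ be the servers matched by $\cA_t$. Add $|S|-t$ further requests, each located exactly at one of the servers of $S\setminus S_t$, giving a multiset equal to $S \setminus S_t$. The metric stays the same, since requests may be placed at server locations.

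The first key step is to bound the algorithm's cost on this instance from below. The first $t$ rounds are identical, so $\cA$ pays $\cost(\cA_t)$ there. Later requests may be matched at positive cost, but this only adds, so the total online cost is at least $\cost(\cA_t)$. By $\rho$-competitiveness we get $\cost(\cA_t) \le \rho(n')\,\cost(\OPT')$, where $\OPT'$ is the optimum of the auxiliary instance and $n' = |S|$.

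The second key step is to bound $\cost(\OPT')$ by adherence. Match $R_t$ to $S_t$ optimally, and match each added request to the server it sits on at zero cost. This gives $\cost(\OPT') \le \dist(S_t,R_t) \le \gamma \min_{M\in\cM_t}\cost(M)$ by \Cref{lem:app:compalgs:adh}. Combining the two steps yields $\cost(\cA_t) \le \gamma\,\rho(\cdot)\min_{M\in\cM_t}\cost(M)$.

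The main obstacle is the argument of $\rho$. The auxiliary instance has $|S|$ servers, not $t$, whereas strong competitiveness requires $\rho(t)$. I would handle this by shrinking the instance as well. Remove from the server set all of $S\setminus S_t$ except none, i.e.\ take servers exactly $S_t$ and requests $R_t$, giving an instance of size $t$. Then we must argue that the framework run on servers $S_t$ behaves identically to the run on $S$. It is not obvious that it does, since augmenting paths could use servers outside $S_t$. Here I would argue that $\Moff$ only ever contains servers in the set of matched servers, which equals $S_t$ by the second invariant. Any minimum $\gamma$-net-cost augmenting path ending at a server $s_{t'}$ uses only matched servers internally and ends at $s_{t'}\in S_t$. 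Hence deleting never-chosen servers changes no path's cost, and can only remove candidate paths that were not minimizers. If ties complicate this, fix a consistent tie-breaking rule. With this, $\cost(\cA_t) \le \rho(t)\cost(\OPT_{S_t,R_t}) = \rho(t)\dist(S_t,R_t) \le \gamma\rho(t)\min_{M\in\cM_t}\cost(M)$.
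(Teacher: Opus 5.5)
Your final argument is correct and is essentially the paper's proof. Like the paper, you restrict to the instance with servers $S_t$ and requests $R_t$, invoke $\rho(t)$-competitiveness there to get $\cost(\cA_t) \le \rho(t)\dist(S_t,R_t)$, and finish with $\gamma$-adherence (\Cref{lem:app:compalgs:adh}). Your explanation of why the run on $S_t$ coincides with the first $t$ rounds of the run on $S$ (every minimum $\gamma$-net-cost augmenting path uses only servers in $S_t$, plus a fixed tie-breaking order) fills in a step the paper only asserts. The initial padding detour, which would only give $\rho(n)$, can simply be dropped.
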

\begin{proof}
    For any round $t$, let $S_t$ denote the servers matched by this framework up to and including round $t$.
    Consider an auxiliary instance $(S, d, S_t, R_t)$.
    The framework is guaranteed to output a perfect matching between $S_t$ and $R_t$ of cost at most $\rho(t) \cdot \dist(S_t, R_t)$.
    Observe that the execution of this framework given $(V, d, S, R)$ up to round $t$ is equivalent with that given $(V, d, S_t, R_t)$ until termination.
    The lemma then immediately follows from \Cref{lem:app:compalgs:adh}.
\end{proof}

The above two lemmas together complete the proof of adherence and strong competitiveness of the deterministic algorithms in \Cref{tab:alg:apply}.

\subsection{Randomized Algorithm}
As shown in \citet{meyerson2006randomized} and \citet{bansal2014randomized}, it is without loss of generality to assume that the vertices of the given metric spaces are precisely the servers (i.e., $V = S$) because one can easily modify any general metric space into such one at a constant factor in competitive ratios.
\citet{bansal2014randomized} presented an algorithm on 2-HSTs.
\begin{definition}[HST] \label{def:app:hst}
    Given $\alpha \geq 1$, an $\alpha$-hierarchical well-separated tree ($\alpha$-HST) is a rooted tree $T = (U, E)$ along with a length $\ell : E \to \R_+$ satisfying
    \begin{itemize}
        \item for each internal node $u \in U$, the length between $u$ and any child of $u$ is identical, i.e., $\ell(u, c_1) = \ell(u, c_2)$ for any two children $c_1$ and $c_2$ of $u$;
        \item for any internal node $u \in U$, the length between $u$ and its parent $p$ is exactly $\alpha$ times the length between $u$ and its child $c$, i.e., $\ell(u, p) = \alpha \cdot \ell (u, c)$.
    \end{itemize}
\end{definition}

It is well-known that any metric space can be embedded into an HST with a logarithmic expected distortion:
\begin{theorem}[\cite{fakcharoenphol2004tight}] \label{thm:app:hst}
    Given any metric $(V, d)$ with $n$ vertices, there exists a randomized algorithm producing 2-HST $(T, \ell)$ such that
    \begin{itemize}
        \item the leaf nodes of $T$ are corresponding to $V$;
        \item for any $u, v \in V$, the distance between $u$ and $v$ in $(T, \ell)$ is at least $d(u, v)$;
        \item for any $u, v \in V$, the expected distance between $u$ and $v$ in $(T, \ell)$ is bounded from above by $O(\log n) \cdot d(u, v)$.
    \end{itemize}
    The factor of $d(u, v)$ in the last property is called the (expected) distortion of $(T, \ell)$.
\end{theorem}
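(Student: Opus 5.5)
The statement is the classical embedding result of \citet{fakcharoenphol2004tight} (FRT). I would reproduce their construction of a random hierarchical decomposition and its charging argument, rather than derive it from anything earlier in the paper.

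\textbf{Normalization and construction.} Scale the metric so that the minimum nonzero distance exceeds $1$. Let $L$ be the least integer with $\max_{u,v} d(u,v) \le 2^{L}$. Then sample two pieces of randomness:
\begin{itemize}
  \item a uniformly random permutation $\pi$ of $V$;
  \item a scale $\beta \in [1,2)$ with density $\frac{1}{x \ln 2}$.
\end{itemize}
Build nested partitions $\mathcal{D}_L \succeq \mathcal{D}_{L-1} \succeq \cdots \succeq \mathcal{D}_0$ as follows.
\begin{itemize}
  \item Start from $\mathcal{D}_L = \{V\}$.
  \item For level $i$, set $\beta_i := \beta 2^{i-1}$.
  \item Refine each cluster $C \in \mathcal{D}_{i+1}$: every $v \in C$ joins the cluster of the first vertex $w$ in $\pi$ with $d(v,w) \le \beta_i$.
\end{itemize}
Each level-$i$ cluster lies in a ball of radius $\beta_i < 2^{i}$, so its diameter is below $2^{i+1}$. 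At level $0$ the radius $\beta_0 < 1$ is smaller than every nonzero distance, so every level-$0$ cluster is a singleton.

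The tree $T$ has one node per cluster per level, and each level-$i$ cluster is a child of the level-$(i+1)$ cluster containing it. The edge from a level-$i$ node to its parent gets length $2^{i+1}$. With these lengths:
\begin{itemize}
  \item all edges from a node to its children have equal length;
  \item the parent edge is exactly twice the child edges.
\end{itemize}
So $T$ is a 2-HST, and its leaves correspond bijectively to $V$.

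\textbf{Domination.} Suppose $u \ne v$ lie in a common level-$(i+1)$ cluster but different level-$i$ clusters. Their tree distance is then $2\sum_{j\le i} 2^{j+1} \ge 2^{i+2}$. Since they share a level-$(i+1)$ cluster, whose diameter is below $2^{i+2}$, we get $d(u,v) < 2^{i+2}$. This gives the second property, deterministically.

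\textbf{Expected distortion, which I expect to be the main obstacle.} Fix $u,v$. If they are first separated at level $i$, the tree distance is $O(2^{i})$, so it suffices to bound $\sum_i 2^{i}\Pr[\text{first separated at level } i]$. Charge each separation to the center $w$ that \emph{settles} the pair at level $i$: $w$ is the first vertex in $\pi$ within distance $\beta_i$ of $u$ or of $v$, and it captures exactly one of them.

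Order the vertices $w_1, w_2, \ldots$ by $\min\{d(w,u), d(w,v)\}$. For $w_s$ to cut $u,v$ at level $i$, two events must occur.
\begin{itemize}
  \item $\beta_i$ lies in the interval between $d(w_s,u)$ and $d(w_s,v)$. By the triangle inequality this interval has length at most $d(u,v)$.
  \item $w_s$ precedes $w_1, \ldots, w_{s-1}$ in $\pi$. Each of those vertices is at least as close to the pair, so it is within distance $\beta_i$ of $u$ or $v$ whenever the first event holds. This has probability $1/s$ and is independent of $\beta$.
\end{itemize}

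Now sum over levels for fixed $s$. The value $\beta 2^{i-1}$ has density proportional to $\frac{1}{x\ln 2}$ on $[2^{i-1}, 2^{i})$. Hence $\sum_i 2^{i}\Pr[\beta_i \in I]$ is at most $\int_I \frac{O(1)}{\ln 2}\,dx = O(|I|) = O(d(u,v))$ for any interval $I$.

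Combining, the expected tree distance is at most $\sum_{s=1}^{n} \frac{1}{s}\cdot O(d(u,v)) = O(\log n)\, d(u,v)$, which is the third property. The delicate point is justifying that the prefix event and the $\beta$-window event together capture all cutting configurations, including the nested-cluster restriction. This holds because being settled at level $i$ only requires the clusters of $u$ and $v$ to differ, and the restriction to the parent cluster can only reduce the number of cuts.
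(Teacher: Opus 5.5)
Your proposal is correct. It is the standard Fakcharoenphol--Rao--Talwar construction (random permutation plus a random scale $\beta$ with density $\frac{1}{x\ln 2}$) with the usual settle-and-cut charging argument; the paper gives no proof of its own and simply cites that result. Your closing sentence is garbled, though; the clean justification is this: first separation at level $i$ means $u$ and $v$ share a level-$(i+1)$ cluster, so their level-$i$ centers must differ, and that happens exactly when the settling center captures only one of them, so bounding this unrestricted cutting event suffices.
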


We now describe the algorithm of \citet{bansal2014randomized}.
Given a 2-HST $(T = (U, E), \ell)$ with $n$ leaves, for every leaf node $u \in U$ and height $h \geq 0$, let $T(u, h)$ denote the set of leaf nodes that belong to the subtree of height $h$ containing $u$, and let $N(u, h) := T(u, h) \setminus T(u, h-1)$ (and $N(u, 0) := T(u, 0)$ for $h = 0$).
The algorithm maintains two matchings $\Mon$ and $\Moff$ such that
\begin{itemize}
    \item no reassignment happen in $\Mon$, but it happens in $\Moff$;
    \item the matched servers are equivalent between $\Mon$ and $\Moff$.
\end{itemize}
Throughout the execution, the algorithm maintains the \emph{level} of every server $s \in S$, denoted by $L(s)$, defined as the minimum height of a subtree containing both $s$ and its matched request in $\Moff$ (and $L(s) := \infty$ if unmatched).
Accordingly, it also maintains the level of every request $r$ that has arrived so far, defined as the level of its matched server in $\Moff$.

Upon arrival of a request $r_t \in R$, the algorithm modifies $\Moff$ by the following recursive procedure:
Let $r$ be the request to be (re)assigned, initially $r := r_t$ and $L(r) := 0$.
It finds the lowest height $h \geq L(r)$ where there is a server $s \in N(r, h)$ whose current level is higher than $h$, i.e., $L(s) > h$.
It then chooses a server $s'$ uniformly at random among the servers $s$ in $N(r, h)$ with $L(s) > h$, and (re)assign $s'$ to $r$ in $\Moff$ with setting $L(s') = L(r) = h$.
If $s'$ was matched with another request $r'$ in $\Moff$, the algorithm recursively reassigns $r'$ using the procedure.
Otherwise, if $s'$ was unmatched, it then matches $r_t$ with $s'$ in $\Mon$ and proceeds to the next round.

Its exact adherence is immediate from the proofs of Lemma 4.1 and 4.3 in \citet{bansal2014randomized}, so we omit its proof here.
\begin{lemma}
    This algorithm is 1-adherent.
\end{lemma}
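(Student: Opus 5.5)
We need to show that the algorithm of \citet{bansal2014randomized} is 1-adherent: at the end of every round $t$, the servers $S_t$ matched so far satisfy $\dist(S_t, R_t) = \min_{M \in \cM_t} \cost(M)$. The plan is to prove that the offline matching $\Moff$ maintained after round $t$ is itself a minimum-cost matching of $R_t$ into $S$ on the HST. Since $\Moff$ and $\Mon$ match the same server set $S_t$, this gives $\dist(S_t,R_t) \le \cost(\Moff) = \min_{M\in\cM_t}\cost(M)$. The reverse inequality is trivial, because any perfect matching between $S_t$ and $R_t$ lies in $\cM_t$.

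To show optimality of $\Moff$, I would use the tree structure. On a tree metric, the cost of a maximum matching covering $R_t$ is lower bounded edge by edge. For every edge $e$ above a subtree $T'$ of the HST, any $M \in \cM_t$ must route at least
\[
\max\{0,\ |R_t \cap T'| - |S \cap T'|\}
\]
requests across $e$. Summing over edges gives $\sum_e \ell(e)\max\{0, |R_t\cap T'_e| - |S\cap T'_e|\}$, which is a lower bound on $\min_{M\in\cM_t}\cost(M)$. It then suffices to show that $\Moff$ attains this bound, i.e., that $\Moff$ crosses each edge exactly the forced number of times.

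The main step, and the one I expect to be the obstacle, is the invariant that makes $\Moff$ tight. I would prove by induction on $t$ (and on the recursion depth within a round) two properties:
\begin{itemize}
\item[(i)] each request $r$ is matched to a server in $N(r, L(r))$, where $L(r)$ is the least height at which a free-or-higher-level server existed;
\item[(ii)] a request leaves a subtree $T(u,h)$ only if every server in that subtree is matched at level $\le h$.
\end{itemize}
I expect this to follow the lines of Lemmas 4.1 and 4.3 of \citet{bansal2014randomized}. The recursive reassignment only raises the level of a displaced request $r'$ to a height at least its previous level. Hence a request exits a subtree only when the subtree is saturated, so no subtree is exited unnecessarily. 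Property (ii) gives exactly the tightness: the number of requests crossing the edge above $T'$ equals the surplus $\max\{0, |R_t\cap T'|-|S\cap T'|\}$.

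The delicate part is verifying that the recursive chain of reassignments preserves (ii), for every realization of the random choices. I would argue this by observing that each step in the chain fills a server at the lowest feasible height. This step would be carried out carefully.

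Finally, the Bansal et al.\ algorithm also reduces general metrics to $V=S$ at a constant factor. Adherence should be stated for the HST instance itself, where the above argument applies directly, so that reduction does not enter this lemma.
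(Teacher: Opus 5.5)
Your overall route is the right one: show that $\Moff$ is a minimum-cost matching of $R_t$ into $S$, then use that $\Moff$ and $\Mon$ match the same servers. This is what the paper delegates to Lemmas 4.1 and 4.3 of \citet{bansal2014randomized}. The gap is in the lower bound you pair with invariant (ii). The quantity $\sum_e \ell(e)\max\{0,|R_t\cap T'_e|-|S\cap T'_e|\}$ is a valid lower bound, but in general no matching attains it. So the step ``$\Moff$ crosses each edge exactly the forced number of times'' cannot be proved.

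For a concrete failure, take a root $u$ with two leaves $a,b$ at distance $1$, one server at each leaf, and two requests both at $a$ (requests form a multiset). Then $\min_{M\in\cM_2}\cost(M)=d(a,b)=2$, while your bound equals $1$. The edge above $b$ has forced count $0$, yet it is crossed by whichever request uses the server at $b$. In general, (ii) controls how many requests \emph{leave} a subtree. Each such request must also \emph{enter} a sibling subtree to reach its server, and your bound does not count these entering crossings. Adding the symmetric per-edge term $\max\{0,|R_t\setminus T'_e|-|S\setminus T'_e|\}$ is still not tight in general when several subtrees absorb the same surplus.

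The repair is to do the accounting by level rather than by edge. Assume, as for the HSTs used here, that leaves lie at a common depth, so every edge between heights $h$ and $h+1$ has the same length $\ell_h$. Then $d(r,s)=\sum_{h\ge 0}2\ell_h\,[s\notin T(r,h)]$, and hence for every $M\in\cM_t$
\[
\cost(M)=\sum_{h\ge 0}2\ell_h\,\bigl|\{(s,r)\in M : s\notin T(r,h)\}\bigr| \;\ge\; \sum_{h\ge 0}2\ell_h\sum_{T}\max\{0,\,|R_t\cap T|-|S\cap T|\},
\]
with $T$ ranging over the height-$h$ subtrees. Your invariant (ii) says exactly that $\Moff$ sends out of each such $T$ precisely $\max\{0,|R_t\cap T|-|S\cap T|\}$ requests. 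So $\Moff$ meets this bound with equality, which is what you need.

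For (ii) itself, the inductive statement that goes through is this: if $r$ is matched at level $H$, then for every $h<H$ all servers in $T(r,h)$ have level at most $h$. It is preserved because server levels never increase and a displaced request restarts its search at its old level. Applied to a request that leaves $T(r,h)$, it forces every server of $T(r,h)$ to be matched inside $T(r,h)$.
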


The strong competitiveness can be shown by a modest adaptation of the proof of Lemma 4.4 in \citet{bansal2014randomized}.
\begin{lemma}
    For any round $t$, the expected cost of the matching maintained by the algorithm at round $t$ is at most $O(\log t) \cdot \min_{M \in \calM_t} \cost(M)$.
\end{lemma}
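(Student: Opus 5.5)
The plan is to follow the structure of Lemma 4.4 in \citet{bansal2014randomized} and keep track of where the instance size enters. Their bound on the final matching has the form $O(\log n)\cdot\cost(\OPT)$. The $\log n$ factor comes only from the number of HST levels, or equivalently from the range of distinct non-trivial heights, that an $\OPT$ edge can ``pay for''. We need to show that after round $t$ this factor can be replaced by $O(\log t)$, and that $\cost(\OPT)$ can be replaced by $\min_{M\in\calM_t}\cost(M)$.

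\textbf{Step 1: reduce to the truncated instance.} Fix $t$ and let $M^\star\in\calM_t$ be an optimal matching of $R_t$ into $S$. The execution of the algorithm up to round $t$ depends only on $R_t$ and the whole server set. So the quantity to bound is $\E[\cost(\Moff_t)]$ (and the cost of $\Mon$, which the original analysis charges against $\Moff$ reassignments), measured against $\cost(M^\star)$ rather than a full perfect matching. We use 1-adherence (the preceding lemma) only as a sanity check that $\Moff_t$ is already optimal on its own server set. The real work is the online cost.

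\textbf{Step 2: per-level charging.} In \citet{bansal2014randomized}, for each height $h$ the expected number of requests assigned or reassigned at level $h$ is bounded by an expected count of ``mismatch'' between the number of requests and servers (in $M^\star$) inside subtrees of height $h$. Each recursive reassignment chain is charged via a harmonic-sum argument: a uniformly random choice among $m$ eligible servers is displaced with probability $1/m$, which gives $\sum 1/m = O(\log m)$. I would redo this argument with $m$ bounded by the number of requests present, which is at most $t$, rather than by $n$. The key observation is that only servers with $L(s)>h$ that could later be displaced matter. Such a server is displaced only if it is currently matched, and at most $t$ servers are ever matched. So every harmonic series truncates at $t$, and each level $h$ contributes $O(\log t)\cdot 2^{h}\cdot(\text{number of }M^\star\text{ edges crossing height } h)$ in expectation. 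Summing over $h$ with geometric lengths gives $O(\log t)\cdot\cost(M^\star)$.

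\textbf{Main obstacle.} The hard part is justifying the truncation of the harmonic sum in Step 2. The original proof counts displacement probabilities over all servers in $N(r,h)$ with $L(s)>h$, and most of these may be unmatched. I would argue that an unmatched eligible server ends the chain, so it is a ``good'' outcome. Only the at most $t$ matched eligible servers can propagate cost, and the symmetry argument then bounds the expected propagation by $H_t=O(\log t)$ per crossing edge of $M^\star$. If this cleanly mirrors Lemma 4.4, the remainder is routine bookkeeping.
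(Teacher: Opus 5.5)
\textbf{Gap: the truncation mechanism.} Your skeleton matches the paper's: per-height harmonic counting as in Lemma 4.4 of Bansal et al., geometric summation over heights, and truncation of the harmonic sum at $t$. The problem is the step that carries the lemma, namely why the harmonic sum stops at $H_t$. You try to shrink the \emph{pool}: first ``$m$ bounded by the number of requests present, which is at most $t$'', and later a symmetry argument over ``the at most $t$ matched eligible servers''.

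Neither works:
\begin{itemize}
\item The pool $W$ (servers of $N(r,h)$ with $L(s)>h$ when $r$ is placed at height $h$) contains every unmatched server there, since those have level $\infty$. So $|W|$ can be $\Theta(n)$ however small $t$ is.
\item The algorithm samples uniformly from all of $W$, so there is no symmetry among the matched eligible servers alone.
\item That matched set is not even monotone: a distant request can take an unmatched server of $N(r,h)$ at a level above $h$.
\item Your observation that an unmatched pick ends a chain concerns how long a chain runs. The per-height count needs something else: how often the fixed request $r$ is displaced.
\end{itemize}

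\textbf{The paper's fix.} Keep the full pool, whose size only helps, and truncate the number of \emph{terms} instead. Since server levels only decrease, $W$ never grows. The request $r$ is displaced with probability at most $1/|W|$ for the current $W$. Each re-placement of $r$ at height $h$ removes one server from $W$. Up to round $t$ this happens at most $\min\{t,|W|\}$ times. Your remark that at most $t$ servers are ever matched is one way to see this, but it bounds the number of events, not the denominators. Hence
\[
\sum_{i=1}^{\min\{t,|W|\}} \frac{1}{|W|-i+1} \;\le\; H_{\min\{t,|W|\}} \;\le\; H_t ,
\]
since the $i$-th term is at most $1/(\min\{t,|W|\}-i+1)$.

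\textbf{Two smaller points.} First, your opening claim that the $\log n$ comes from the number of HST levels is wrong. The HST depth does not depend on $t$, so that reading could never yield $\log t$. In the original analysis the heights sum geometrically, and the logarithm is the harmonic factor.

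Second, 1-adherence is not just a sanity check. The per-height bound gives $O(\log t)\sum_{r\in R_t}2^{L(r)}$, i.e.\ $O(\log t)\cdot\cost(M^{\mathsf{off}}_t)$. What converts this into $O(\log t)\cdot\min_{M\in\mathcal{M}_t}\cost(M)$ is the optimality of $M^{\mathsf{off}}_t$ against all of $S$, not only against its own server set. Your charging to ``$M^\star$ edges crossing height $h$'' tacitly needs the same fact.
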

\begin{proof}
    We follow the same line of the proof of Lemma 4.4 in \citet{bansal2014randomized}.
    Consider any request $r \in R_t$, and let $L(r)$ denote the last level of $r$ at termination.
    It suffices to show that, for every height $h \in \{0, \ldots, L(r)\}$, the expected number of times $r$ is (re)assigned to servers in $N(r, h)$ is bounded by $O(\log t)$.
    Assume a round during the execution up to and including round $t$ where $r$ is (re)assigned to a server in $N(r, h)$, and let $W \subseteq N(r, h)$ be the servers in $N(r, h)$ with $L(s) > h$.
    Observe that $W$ never increases throughout the execution.
    Moreover, since $r$ can be reassigned only when the newly arrived request $r'$ is closer to the server currently matched to $r$, all servers in $W$ are also potential servers to be reassigned upon arrival of $r'$, implying that the probability of the reassignment of $r$ is at most $\nicefrac{1}{|W|}$ for the current $W$.
    Finally, if $r$ is reassigned, $W$ decreases by 1.
    Since we only consider the execution up to round $t$, we can therefore conclude that the expected number of times $r$ is reassigned to $N(r, h)$ is at most
    \[
        \sum_{i = 1}^{\min\{t, |W|\}} \frac{1}{|W| - i + 1} \leq H_t,
    \]
    where $W$ here denotes the snapshot of $W$ when $r$ is (re)assigned to a server in $N(r, h)$ for the first time and $H_t$ is the $t$-th Harmonic number.
    The rest of the argument is identical to the proof of Lemma 4.4 in \citet{bansal2014randomized}.
\end{proof}
\newcommand{\calT}{\mathcal{T}}

\section{Deferred Lower Bound Analysis} \label{app:hard}
This appendix is devoted to proving the lower bound results in \Cref{sec:hard}. 
In particular, we present adversaries that bound from below the (expected) cost any deterministic (or randomized) algorithm must incur even if the oracle always provide accurate predictions.
For the sake of analysis, we define problem setting $\calT_n$ as a setting that has a metric space induced from a star graph with $n$ leaf nodes and one center node, and a set of $n$ servers located at $n$ leaf nodes.

\subsection{Deterministic Algorithms}
\begin{proof} [Proof of \Cref{thm:hard:det}]
Consider a problem setting $\calT_n$.
As $\cA$ is deterministic, we can assume an adaptive adversary.
Before describing the adversary, we first introduce a couple of definitions.
In what follows, the adversary will reveal at most one request at each leaf; we say a server is \emph{requested} if a request is revealed at the same location as the server.
Moreover, upon a prediction query from $\cA$, the adversary will construct a prediction from the servers matched by $\cA$ so far plus one server unmatched by $\cA$ then; we say a server is \emph{predicted} if it has been an unmatched server of the prediction queried in any round.

We can now describe the adversary: In every round $t \in \{1, \ldots, n\}$, if there exists a server $s \in S$ that is matched by $\cA$, unpredicted, and unrequested, the adversary reveals a request at the location of $s$; otherwise, if there does not exist such a server, the adversary reveals a request at the center.
For the construction of prediction $P_t$ in round $t$ (in case where $\cA$ makes a query), if there exists a predicted, unmatched server $p$, the adversary sets $P_t := S_{t-1} \cup \{p\}$, where $S_{t-1}$ is a set of the servers matched by $\cA$ at the beginning of round $t$; otherwise, if such a server $p$ does not exist, the adversary selects any unmatched server as $p$ and defines $P_t := S_{t-1} \cup \{p\}$.

Observe that the adversary guarantees that at most one predicted, unmatched server $p$ exists throughout the execution while the adversary reveals a new request at the center only when $p$ got matched by $\cA$ in the previous round.
This observation implies that the number of the requests located at the center is at most $B + 1$, and hence, $\cost (\OPT) \leq B+1$.
On the other hand, since $\cA$ incurs two units to serve every request located at a leaf, we can derive \[ \cost (\cA) \geq 2n - B - 1 \geq \Big( \frac{2n}{B+1} - 1 \Big) \cost (\OPT).\]
\end{proof}

\begin{proof} [Proof of \Cref{thm:hard:detwsq}]
    Consider a problem setting $\calT_n$.
    The adversary first sends a request to the center in the first round.
    In each subsequent round, the adversary issues a request to the location of the server most recently matched by the online algorithm. 
    This process continues until the $k$-th round, at which point the algorithm receives a prediction revealing which server is matched to the center in the offline solution. 
    From that point on, the adversary can no longer continue this strategy.

    In the remaining rounds, the adversary issues requests to arbitrary leaf nodes that have not yet received any request.

    This construction ensures that the online algorithm incurs a cost of at least $2k-1$, while the offline algorithm incurs a cost of exactly $1$. Consequently, the competitive ratio is at least $2k-1$.
\end{proof}

\subsection{Randomized Algorithms}

\begin{lemma}[see, e.g., \cite{meyerson2006randomized}] \label{lem:hard:randcomp}
    For the problem setting $\calT_n$, any algorithm can incur cost of at least $2 H_n - 1$ in expectation, while the minimum cost of a perfect matching with hindsight is $1$.
\end{lemma}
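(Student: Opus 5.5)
We will prove \Cref{lem:hard:randcomp} via Yao's minimax principle: we exhibit a distribution over request sequences on $\calT_n$ for which every deterministic algorithm has expected cost at least $2H_n - 1$, while each sequence in the support has optimal cost exactly $1$. We take all edges of the star to have unit length, so two distinct leaves are at distance $2$ and a leaf is at distance $1$ from the center.

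\textbf{The random instance.} The first request is placed at the center. For $t = 2, \ldots, n$, the request $r_t$ is placed at a leaf chosen uniformly at random among the leaves whose servers have not yet been requested, where we count the location of $r_t$ itself as requested once chosen. After round $t-1$, exactly $t-2$ leaves have been requested, so $n-t+2$ leaves remain eligible. The whole sequence is therefore a uniformly random ordering, after the center, of $n-1$ distinct leaves. The single leaf that is never requested is uniform, and the offline optimum matches the center request to that leaf's server at cost $1$. Every leaf request is matched to its co-located server at cost $0$, giving $\cost(\OPT) = 1$ deterministically.

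\textbf{Bounding a deterministic algorithm.} Fix a deterministic online algorithm. Serving the center request costs exactly $1$. Each leaf request costs $0$ if its own server is still free, and $2$ otherwise. The key invariant is stated at the start of each round $t \ge 2$. The $t-1$ servers matched so far form a fixed set determined by the history. The $n-t+2$ unrequested leaves are exchangeable given the history, except for how many of them are already matched. Among the unrequested leaves, exactly one server is matched, namely the surplus created by the center request plus any displacements. By induction, the matched servers number $t-1$ and the requested leaves number $t-2$. Each requested leaf is matched, either to its own server or, if that was taken, to a server which we account for. Hence exactly one matched server sits at an unrequested leaf.

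Making this counting clean is the main obstacle. I would formalize it as follows: "the number of matched servers at unrequested leaves equals $1$ at the start of every round." The argument is that serving a request at an unrequested-then leaf either uses its own server, which keeps the count, or it hits the displaced one, which also keeps the count after relabeling. Given the invariant, $r_t$ lands on the unique matched unrequested leaf with probability exactly $1/(n-t+2)$, since the choice is uniform and independent of the algorithm's past decisions. In that event the cost is $2$.

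Summing, the expected cost is
\[
1 + \sum_{t=2}^{n} \frac{2}{n-t+2} = 1 + 2(H_n - 1) = 2H_n - 1.
\]
Applying Yao's principle then shows that any randomized algorithm incurs expected cost at least $2H_n - 1$ on some instance with $\cost(\OPT) = 1$, which proves the lemma.
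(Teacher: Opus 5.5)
Your approach is the paper's: the same random sequence (center first, then uniformly random unrequested leaves), the same per-round collision probability $\frac{1}{n-t+2}$, and the same sum $1+\sum_{t=2}^n \frac{2}{n-t+2}=2H_n-1$. The one addition is that you make the Yao/averaging step explicit, which the paper leaves implicit.

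One step, however, is false as written. Your invariant says \emph{exactly} one matched server sits at an unrequested leaf. It rests on the claim that every requested leaf's server is matched. That only holds if the algorithm always serves a leaf request with its co-located server when that server is free, and a general deterministic algorithm need not do so.

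If the algorithm instead uses a free server at another unrequested leaf, two things go wrong. The number of matched servers at unrequested leaves rises to $2$, and the requested leaf's own server stays free. Your case split (``uses its own server'' or ``hits the displaced one'') misses this case, and it also misses the later case where a free server at an already-requested leaf is used. Hence the collision probability is not exactly $\frac{1}{n-t+2}$, and the cost equalities should be inequalities.

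The repair is short and is all the lemma needs. At the start of round $t\ge 2$ there are $t-1$ matched servers but only $t-2$ requested leaves, so at least one matched server lies at an unrequested leaf. Since $r_t$ is uniform over the $n-t+2$ unrequested leaves, independently of the algorithm's past choices, its co-located server is already matched with probability at least $\frac{1}{n-t+2}$. In that event the round costs $2$, and summing gives an expected cost of at least $2H_n-1$.

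This counting bound also avoids having to argue that ``lazy'' algorithms are without loss of generality. The paper's proof states the probability as an equality as well, so it carries the same implicit lazy assumption.
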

\begin{proof}
    The strategy for the adversary is as follows:
    \begin{enumerate}
        \item In the first round, send a request to the center node.
        \item For the remaining rounds, uniformly at random pick a node that has not yet received any request and then send a request to that node.
    \end{enumerate}
    For each round $t \in \{2, \ldots, n\}$, the probability that the server co-located with $r_t$ that has already been matched by the algorithm is $\frac{1}{n-t+2}$.
    Hence, the total expected cost is $1 + 2 \sum_{t = 2}^n \frac{1}{n-t+2} = 2H_n - 1$.
    It is easy to see that the minimum cost of a perfect matching with hindsight is always $1$.
\end{proof}

We are now ready to prove \Cref{thm:hard:randbnp}, our lower bound under the regime of bounded number of predictions.

\begin{proof}[Proof of \Cref{thm:hard:randbnp}]

Consider the problem setting $\calT_n$.
The adversary follows the strategy in \Cref{lem:hard:randcomp} until the algorithm makes a query to the prediction oracle.
Because the prediction is perfectly accurate, the algorithm then learns which server is matched to the first request at the center and can subsequently match the most recent request to that server.

To ensure that the algorithm continues to incur nonzero cost, the adversary restarts the strategy in \Cref{lem:hard:randcomp} from the beginning whenever the algorithm makes a query to the prediction oracle, and repeats this process until the next query is made. 
This strategy is possible since the algorithm deterministically makes the queries to the oracle.
Note that any server that has appeared in a prediction is treated as having received a request, since it is either destined to be matched to a request at the center node or already gets a request.

Let $q$ be the number of queries the algorithm makes to the oracle excluding one made in the $n$-th round.
Then, $q+1$ requests are sent to the center.
Hence, the offline algorithm pays cost of $q+1$. However, the online algorithm pays a cost of one for the request at the center and expected cost of $\frac{2}{j}$ for an event that a request is sent to an already occupied server in round $n-j+1$.
Thus, the algorithm incurs a cost of at least $q+1+\sum_{j=q+2}^{n} \frac{2}{j}=q+1+2(H_n-H_{q+1})$.
Since $q\le B$, the competitive ratio is at least $1+\frac{2(H_k-H_{q+1})}{q+1}\ge 1+\frac{2(H_k-H_{B+1})}{B+1}$.
\end{proof}

We now prove \Cref{thm:hard:randwsq}.

\begin{proof} [Proof of \Cref{thm:hard:randwsq}]
    Consider the problem setting $\calT_n$.
    Suppose that $n$ is a multiple of $k$.
    For the first $n-k$ rounds, the adversary sends the requests to only leaf nodes.
    Since $n$ is a multiple of $k$, the prediction is given to the algorithm in the $(n-k)$-th round.
    After that, in the $(n-k+1)$-th round, the adversary sends a request to the center node and follows the strategy in \Cref{lem:hard:randcomp}.
    Because both the offline and online algorithms incur zero cost during the first $n-k$ rounds, and because the remaining free servers together with the center node form a  setting $\calT_k$ for the last $k$ rounds, this yields an expected lower bound of $2H_k-1$.
\end{proof}

As one can construct a 2-HST with constant distortion out of a star metric, the above lower bound complements that the algorithm from \Cref{thm:alg:randhst} is best-possible (up to a constant factor) for $2$-HSTs under the regime of well-separated queries to the oracle.

\subsection{Stronger Lower Bounds for Well-Separated Queries Model}

\begin{proof}[Proof Sketch]
Observe that, in the well-separated queries model, the adversary can modify predictions in order to mislead the algorithm. In particular, for the problem instance $\calT_n$, the adversary can employ the same strategies as in \Cref{thm:hard:detwsq,thm:hard:randwsq}, but over a longer duration, by concealing the server that is matched to the center request in the offline solution.

This concealment can be achieved by altering only a single server in each prediction. Consequently, the adversary can extend the execution of the strategy by additional $k$ rounds for every error budget of $2$.

\end{proof}
We therefore obtain the following theorems:

\begin{theorem}
    There exists an instance such that deterministic algorithm making well-separated queries with separation parameter $k$ incurs cost of $2k-1 + k\eta$, where $\eta = \sum_{t=1}^{\lfloor \nicefrac{n}{k} \rfloor} \dist(\Opt_{tk}, P_{tk})\le 2(\nicefrac{n}{k}-1)$, while the minimum cost of a perfect matching with hindsight is 1.
\end{theorem}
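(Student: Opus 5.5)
We work in the star setting $\calT_n$ and extend the adaptive adversary from the proof of \Cref{thm:hard:detwsq}. As there, the first request goes to the center. Every subsequent request goes to the leaf of the server most recently matched by the deterministic algorithm $\cA$. This forces $\cA$ to pay $2$ per leaf request while $\OPT$ pays only $1$, since $\OPT$ matches the center request to the one server that never receives a leaf request.

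The new ingredient is that the adversary corrupts each queried prediction $P_{tk}$ so that the identity of this hidden server stays concealed. The accurate prediction at round $tk$ is $O_{tk}$, which consists of the servers requested so far together with the hidden server $o_1$. The adversary instead supplies $P_{tk} := (O_{tk}\setminus\{o_1\})\cup\{p\}$, where $p$ is some currently unmatched, unrequested server. Because the adversary is adaptive against a deterministic $\cA$, it can afterwards declare $o_1$ to be any server that remains unrequested at the end. This is consistent, since in the star every unrequested leaf server serves the center request at cost $1$.

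The bookkeeping then has three parts.
\begin{enumerate}
\item Each corrupted prediction differs from $O_{tk}$ in a single server, with both servers at leaves. Hence $\dist(O_{tk},P_{tk}) \le d(o_1,p)=2$.
\item If the adversary corrupts the first $m$ queries, the forcing strategy runs for $k + mk$ rounds. Each of the $m$ extra blocks of $k$ rounds costs $\cA$ an extra $2k$, while $\OPT$ stays at $1$. So $\cost(\cA)\ge 2(k+mk)-1$ with $\eta = 2m$, which gives $\cost(\cA)\ge 2k-1+k\eta$.
\item After the last corrupted query, the adversary gives the true prediction, or simply stops forcing and requests untouched leaves, exactly as in \Cref{thm:hard:detwsq}. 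Taking $m\le \lfloor n/k\rfloor-1$ keeps enough rounds and leaves available, and this gives the stated range $\eta\le 2(\nicefrac{n}{k}-1)$.
\end{enumerate}

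The main obstacle is a consistency check: each fake predicted server $p$ must stay unrequested and must never be contradicted later, so that the adversary can still commit to an $o_1$ for which every corrupted prediction has error exactly $2$ and not more. Concretely, the adversary must choose $p$ at each query. It must also handle the case where $\cA$ deliberately matches a request to $p$, which wastes $\cA$'s move but changes which servers are unmatched. In that case the adversary keeps requesting $\cA$'s most recently matched server, which still costs $\cA$ $2$. At the next query it picks a fresh unrequested $p$. Finally it chooses $o_1$ only at the end, as an unrequested server, and recomputes each error against $O_{tk}$ defined with respect to that choice.

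The delicate point is that the errors must be measured against the final $\OPT$. I would resolve it by letting $p$ be the same server across queries whenever possible, and declaring $o_1$ to be a server never used as $p$. Each error is then still a single leaf-to-leaf swap, of cost $2$.
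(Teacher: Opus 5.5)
Your proposal is correct and is essentially the paper's own (sketched) argument: run the forcing adversary of \Cref{thm:hard:detwsq} on the star $\calT_n$, and keep the center request's partner hidden by swapping a single leaf server in each queried prediction, so that every error of $2$ buys $k$ further forced rounds. The consistency issue you flag as the main obstacle is actually harmless: each corrupted prediction has error $d(o_1,p)\le 2$ for whichever leaf $o_1$ is finally left unrequested, and a smaller $\eta$ only makes $\cost(\cA)\ge 2k-1+k\eta$ easier to satisfy, so you never need the errors to be exactly $2$.
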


\begin{theorem}
    There exists an instance such that randomized algorithm making well-separated queries with separation parameter $k$ incurs cost of $\Omega(\log (k+k\eta))$, where $\eta = \sum_{t=1}^{\lfloor \nicefrac{n}{k} \rfloor} \dist(\Opt_{tk}, P_{tk})\le 2(\nicefrac{n}{k}-1)$, while the minimum cost of a perfect matching with hindsight is 1.
\end{theorem}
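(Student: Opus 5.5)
We prove the statement by the star-metric adversary of \Cref{thm:hard:randwsq}, extended as in the proof sketch above: misleading predictions let the adversary keep the identity of the server serving the center request hidden for more rounds. We work in the setting $\calT_n$. The adversary fixes $m$, the number of consecutive intervals between queries over which it will conceal that server, and arranges the instance so that the random hard phase of \Cref{lem:hard:randcomp} occupies the last $k(m+1)$ rounds, with the first request of that phase sent to the center. Before that phase, requests go only to leaves. Both the online and offline costs are zero there once the algorithm matches them at distance zero; if it does not, it only pays more. All queried predictions before the phase are accurate and contribute no error.

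During the phase, requests arrive at uniformly random unrequested leaves, as in \Cref{lem:hard:randcomp}. Let $o$ be the server that $\OPT$ matches to the center request; the adversary samples it uniformly among the servers never requested at the end. At each query round $tk$ inside the phase, the true configuration $O_{tk}$ consists of the requested leaf servers together with $o$. The adversary instead reports $P_{tk}$ equal to the requested leaf servers together with one other currently unrequested server, chosen uniformly at random and independently of $o$. Replacing a single element costs at most $d(o,p)=2$, so each such query adds at most $2$ to $\eta$. The total error is therefore $\eta\le 2m$, matching the stated bound $\eta\le 2(\nicefrac{n}{k}-1)$ when $m\le \nicefrac{n}{k}-1$.

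The key step, and the main obstacle, is showing that these predictions carry no information about $o$. From the algorithm's view, the phase must be indistinguishable from an unpredicted phase of length $k(m+1)$. The conditional distribution of the future request sequence given the history and the predictions must stay uniform over unrequested leaves. To ensure this, I would choose the fake predicted server from the same distribution as the future unrequested set and never let predictions restrict later requests. The randomness of $P_{tk}$ then only adds independent noise. One must also make sure the algorithm cannot gain by matching to the predicted server; since that server is a uniformly random unmatched server, doing so is equivalent to a random choice. Given indistinguishability, the counting argument of \Cref{lem:hard:randcomp} applies over the whole phase of length $L=k(m+1)$: round $j$ of the phase hits an already-matched server with probability $\frac{1}{L-j+2}$, giving expected cost at least $2H_L-1=\Omega(\log(k+km))$. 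Meanwhile $\cost(\OPT)=1$.

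Finally, setting $\eta$ to the actual error, which is $2m$ or at least $\Theta(m)$, gives $k+km=\Theta(k+k\eta)$ and so the bound $\Omega(\log(k+k\eta))$. The constraint $m\le \nicefrac{n}{k}-1$ ensures the phase fits in the $n$ rounds; the accurate queries before the phase contribute zero error. The same construction with the adaptive deterministic adversary of \Cref{thm:hard:detwsq} gives the companion deterministic statement.
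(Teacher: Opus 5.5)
Your proposal is correct and is essentially the paper's argument: on $\calT_n$ you run the random star adversary of \Cref{lem:hard:randcomp} over the last $k(m+1)$ rounds. You hide the server matched to the center request by swapping it for a single decoy in each queried prediction, which costs at most $2$ error per query and buys $k$ extra rounds of the hard phase. Drawing the decoy with fresh randomness, uniformly from the currently unrequested leaves, makes the predictions independent of the future requests and fills in the indistinguishability step the paper's sketch leaves implicit; the final step then needs only the one-sided bound $\eta \le 2m$, not $\eta = \Theta(m)$.
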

\section{Other Deferred Proofs} \label{app:dproofs}

\begin{proof}[Proof of \Cref{fact:prelim}]
    The first half of the statement is easy to see since, in a minimum-cost perfect matching between $A \uplus C$ and $B \uplus C$, we can assume without loss of generality that every $v \in C$ is matched with $v$ itself.
    
    The second half of the statement is due to the triangle inequality.
    More precisely, let $M_{AC}$ and $M_{CB}$ be minimum-cost perfect matchings between $A$ \& $C$ and between $C$ \& $B$, respectively.
    Now consider a perfect matching $M_{AB}$ between $A$ and $B$ defined as follows: every $u \in A$ is matched with the counterpart $v \in B$ in $M_{CB}$ of the counterpart $w \in C$ in $M_{AC}$ of $u$, i.e., $(u, w) \in M_{AC}$ and $(w, v) \in M_{CB}$.
    Note that, by the triangle inequality, $d(u, v) \leq d(u, w) + d(w, v)$.
    We can thus derive 
    \[
        \dist(A, B) \leq \cost(M_{AB}) \leq \cost(M_{AC}) + \cost(M_{CB}) = \dist(A,C) + \dist(C,B).
    \]
\end{proof}

\begin{proof}[Proof of \Cref{lem:alg:ftp}]
    We claim that, for every round $t$,
    \begin{equation*}
        d(s_t, r_t) \leq \dist(P_t, P_{t-1} \cup \{r_t\}) + \dist(S_{t-1}, P_{t-1}) - \dist(S_t, P_t).
    \end{equation*}
    Note that this claim immediately proves the lemma since the last two terms in the right-hand side telescope, and $\dist(S_0, P_0) = \dist(S_n, P_n) = 0$.

    First, by the construction of $p_t$ and $s_t$, we can derive
    \begin{align}
        d(p_t, r_t)
        &
        = \dist(P_t, P_{t-1} \cup \{r_t\}) - \dist(P_t \setminus \{p_t\}, P_{t-1}) \nonumber
        \\&
        = \dist(P_t, P_{t-1} \cup \{r_t\}) - \dist(P_t , P_{t-1} \cup \{p_{t}\}); \label{eq:app:dproof:pr}
        \\
        d(s_t, p_t) 
        & 
        = \dist(S \setminus (S_{t-1} \setminus P_{t-1}), S \setminus (P_{t-1} \setminus S_{t-1})) - \dist(S \setminus (S_{t-1} \setminus P_{t-1}) \setminus \{s_t\}, S \setminus (P_{t-1} \setminus S_{t-1}) \setminus \{p_t\}) \nonumber
        \\&
        = \dist(S_{t-1}, P_{t-1}) - \dist(S_t, P_{t-1} \cup \{p_t\}). \label{eq:app:dproof:sp}
    \end{align}
    We thus have
    \begin{align*}
        d(s_t, r_t)
        &
        \leq d(p_t, r_t) + d(s_t, p_t)
        \\&
        = \dist(P_t, P_{t-1} \cup \{r_t\}) - \dist(P_t , P_{t-1} \cup \{p_{t}\}) + \dist(S_{t-1}, P_{t-1}) - \dist(S_t, P_{t-1} \cup \{p_t\})
        \\&
        \leq \dist(P_t, P_{t-1} \cup \{r_t\}) + \dist(S_{t-1}, P_{t-1}) - \dist(S_t, P_t),
    \end{align*}
    where the first inequality is due to the triangle inequality, the equality comes from \Cref{eq:app:dproof:pr,eq:app:dproof:sp}, and the second inequality follows from the fact that $d(S_t, P_t) \leq d(S_t, P_{t-1} \cup \{p_t\}) + \dist(P_{t-1} \cup \{p_t\}, P_t)$ due to \Cref{fact:prelim}.
    This completes the proof of the claim.
\end{proof}
\section{Generation of a \taxiinst{} Instance} \label{app:taxigen}

A \taxiinst{} instance is generated using data from October~19,~2023 in the Chicago Taxi Trips (2013--2023) dataset~\cite{chicago_taxi_trips}.
The following fields are used in the instance construction:
\begin{itemize}
    \item \texttt{Trip Start Timestamp},
    \item \texttt{Trip End Timestamp},
    \item \texttt{Dropoff Centroid Longitude},
    \item \texttt{Dropoff Centroid Latitude},
    \item \texttt{Pickup Centroid Longitude},
    \item \texttt{Pickup Centroid Latitude}.
\end{itemize}

We partition the time span of October~19,~2023 into 5-minute intervals and uniformly sample one time point from this set.
Given a sampled time point $\tau$, we identify the latest 100 entries whose \texttt{Trip End Timestamp} does not exceed $\tau$ and treat them as servers.
Similarly, we identify the earliest 100 entries whose \texttt{Trip Start Timestamp} is not earlier than $\tau$ and treat them as requests arriving sequentially.
If fewer than 100 servers or 100 requests are available, we discard the instance and repeat the procedure with an independently sampled time point.
The distance among servers and requests are defined as the Manhattan distance using the servers'
(\texttt{Dropoff Centroid Longitude}, \texttt{Dropoff Centroid Latitude})
and the requests'
(\texttt{Pickup Centroid Longitude}, \texttt{Pickup Centroid Latitude}).

\end{document}